\documentclass[11pt]{extarticle}
\usepackage{amsmath,amssymb,amsfonts,amsthm}
\usepackage{mathtools}
\usepackage{mathrsfs}
\usepackage{enumitem}
\usepackage[margin=1in]{geometry}
\usepackage{hyperref}

\newtheorem{theorem}{Theorem}[section]
\newtheorem{lemma}[theorem]{Lemma}
\newtheorem{proposition}[theorem]{Proposition}
\newtheorem{corollary}[theorem]{Corollary}
\theoremstyle{definition}
\newtheorem{definition}[theorem]{Definition}
\theoremstyle{remark}
\newtheorem{remark}[theorem]{Remark}

\numberwithin{equation}{section}

\newcommand{\F}{\mathbb F}
\newcommand{\N}{\mathbb N}
\newcommand{\Z}{\mathbb Z}
\newcommand{\Oq}{O_q}
\newcommand{\Aq}{\mathcal A_q}
\newcommand{\Aqfib}[1]{\mathcal A_{q,#1}}
\newcommand{\BBqquot}[2]{\widetilde{\mathcal A}_{q,#1}^{\{#2\}}}
\newcommand{\Uplus}{U_q^+(\widehat{\mathfrak{sl}}_2)}
\newcommand{\Vsh}{\mathbb V}
\newcommand{\gr}{\operatorname{gr}}
\newcommand{\init}{\operatorname{in}}
\newcommand{\sh}{\mathrm{sh}}
\newcommand{\cW}{\mathcal W}
\newcommand{\cG}{\mathcal G}
\newcommand{\ctG}{\widetilde{\mathcal G}}
\newcommand{\Zvee}{\mathcal Z^{\vee}}
\newcommand{\Zconv}{\mathcal Z}
\newcommand{\qcomm}[2]{[#1,#2]_q}
\newcommand{\comm}[2]{[#1,#2]}
\DeclareMathOperator{\Cent}{Cent}

\begin{document}

\title{PBW bases and centralisers for the $q$-Onsager algebra}

\author{Haoran Zhu\thanks{Division of Mathematical Sciences,
Nanyang Technological University, 21 Nanyang Link, Singapore 637371.
Email: \texttt{zhuh0031@e.ntu.edu.sg}}}

\date{}

\maketitle

\begin{abstract}
We prove that, over an arbitrary field and whenever $q$ is not a root of
unity, the Baseilhac--Kolb root vectors form a PBW basis of the
$q$-Onsager algebra for every total order on the positive roots of
$\widehat{\mathfrak{sl}}_2$.  This removes the previous transcendence hypothesis.
We establish twelve PBW bases in the alternating generators and show that
they persist under arbitrary scalar central specialisation of the
alternating central extension.  We determine the centraliser of the
negative alternating subalgebra and that of the first imaginary alternating
generator, and deduce that the four single-family alternating polynomial
subalgebras are maximal commutative.  Together, these results settle four
conjectures of Terwilliger and, in characteristic different from $2$, a
conjecture of Baseilhac and Belliard.
\end{abstract}

{\noindent\itshape Keywords:} $q$-Onsager algebra, PBW basis, alternating
generators, root vectors, central specialisation, centraliser.

\medskip

{\noindent\itshape Mathematics Subject Classification:}
17B37, 17B67, 81R50.

\section{Introduction}
\label{sec:introduction}

The Onsager algebra first appeared in Onsager's solution of the
two-dimensional Ising model \cite{Onsager1944}.  Its two-generator
presentation by the Dolan--Grady relations provides an algebraic
construction of the commuting quantities that occur in Onsager-type
integrability \cite{DolanGrady1982}.  The $q$-Onsager algebra $\Oq$ is the
corresponding $q$-deformation.  It appears in boundary quantum integrable
systems
\cite{Sklyanin1988,Baseilhac2005,BaseilhacKoizumi2005,
BaseilhacBelliard2010,BaseilhacBelliard2013},
in the theory of tridiagonal algebras \cite{Terwilliger2001}, and as a
basic example of a quantum symmetric-pair coideal subalgebra
\cite{Letzter2002,Kolb2014,LuWang2021,LuRuanWang2023}.

There are two principal PBW constructions for $\Oq$.  Baseilhac and Kolb
used braid-group automorphisms to define real and imaginary root vectors,
modelled on Damiani's root vectors for
$U_q^+(\widehat{\mathfrak{sl}}_2)$, and obtained a root-vector PBW basis
\cite{Damiani1993,BaseilhacKolb2020}.  A second construction comes from the
current algebra associated with the reflection equation
\cite{BaseilhacKoizumi2005,BaseilhacShigechi2010,
BaseilhacBelliardCrampe2018}.  Terwilliger identified this current algebra
with the alternating central extension $\Aq$ of $\Oq$ and proved the
factorisation $\Aq\simeq\Oq\otimes\F[z_1,z_2,\ldots]$
\cite{Terwilliger2021ACE,Terwilliger2022OqACE,
Terwilliger2023Compact}.  The four current families descend under the
canonical central reduction to the alternating generators of $\Oq$.  We
write
\begin{equation*}
\begin{aligned}
W^-&=\{W_{-n}\mid n\geqslant 0\},&
W^+&=\{W_{n+1}\mid n\geqslant 0\},\\
G&=\{G_{n+1}\mid n\geqslant 0\},&
\widetilde G&=\{\widetilde G_{n+1}\mid n\geqslant 0\}.
\end{aligned}
\end{equation*}

The reflection-equation presentation of $\Aq$ has central coefficients
$\Delta_1,\Delta_2,\ldots$ arising from the quantum determinant.
Baseilhac and Belliard fixed their scalar values and considered the
resulting quotients of the current algebra.  They proved that every current
generator in such a quotient can be recovered recursively from $W_0,W_1$
\cite[Proposition~3.1 and Corollary~3.1]{BaseilhacBelliard2017}.  They then
conjectured that the ordered monomials in the three blocks
$W^-<G<W^+$ form a PBW basis
\cite[Conjecture~1]{BaseilhacBelliard2017}; they called this the
\textbf{$WG$-basis}.  The conjecture gives more than generation by
$W_0,W_1$: it gives a unique normal form in the current coefficients after
the central quantum determinant has been assigned scalar values.
Baseilhac and Belliard supported the conjecture by a Hilbert-series
comparison and low-degree transition matrices
\cite[Section~4 and Appendix~B]{BaseilhacBelliard2017}.

Terwilliger's factorisation of $\Aq$ identifies every scalar central fibre
with $\Oq$, but it does not by itself prove the $WG$-basis conjecture.
A central specialisation changes each alternating generator by lower-index
terms, and a four-family PBW basis of $\Aq$ does not formally give a
three-family PBW basis in a quotient.  Terwilliger therefore proposed six
three-family orders for $W^-,\widetilde G,W^+$
\cite[Conjecture~16.2]{Terwilliger2022OqACE}.  He also asked whether the
negative and imaginary alternating subalgebras are self-centralising
\cite[Conjectures~16.7 and~16.8]{Terwilliger2022OqACE}.  More recently, the
author proved the four-family PBW order
$\mathcal W^-<\mathcal G<\widetilde{\mathcal G}<\mathcal W^+$ for $\Aq$
\cite{Zhu2026}.  The present paper treats the corresponding three-family
bases after central specialisation, together with the root-vector and
centraliser problems for $\Oq$.

The order in a PBW basis is part of the structure.  Factorised universal
$R$-matrices are associated with specified convex orders
\cite{KhoroshkinTolstoy1991,Beck1994,Damiani1998}, and the same order is
visible in Gauss or LDU decompositions of $L$-operators
\cite{DingFrenkel1993}.  In the coideal setting, universal and tensor
$K$-matrices provide the corresponding structures
\cite{BalagovicKolb2019,AppelVlaar2022,AppelVlaar2025Tensor}.
For the $q$-Onsager algebra, boundary and fused $K$-operators occur in
closely related constructions
\cite{BaseilhacTsuboi2018,LemartheBaseilhacGainutdinov2026}.
Alternating normal orders also enter recent calculations of universal
TT- and TQ-relations for $\Aq$
\cite{BaseilhacGainutdinovLemarthe2026}.  A PBW theorem in every scalar
central fibre therefore provides a normal form for these current
coefficients after the central parameters have been fixed.

Throughout the paper, $\F$ denotes a field and $0\neq q\in\F$ is not a
root of unity, unless stated otherwise.  Our first result concerns the
Baseilhac--Kolb root vectors.

\begin{theorem}[{\cite[Conjecture~16.1]{Terwilliger2022OqACE}}]
\label{thm:intro-root}
For every total order on the positive roots of
$\widehat{\mathfrak{sl}}_2$, the corresponding ordered monomials in the
Baseilhac--Kolb root vectors form an $\F$-basis of $\Oq$.
\end{theorem}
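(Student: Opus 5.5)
Our plan is to reduce to the associated graded algebra for a filtration on $\Oq$ whose graded object is (a quotient of) $\Uplus$. There the analogous statement for Damiani's root vectors holds for every convex order, and more generally for every total order once commutation relations are controlled. We filter $\Oq$ by total degree in the generators $W_0,W_1$, i.e.\ by the $\N$-grading on the free algebra. By Kolb's filtered-deformation results for quantum symmetric pairs (see \cite{Kolb2014}, and \cite{BaseilhacKolb2020}), $\gr\Oq$ is a quotient of $\Uplus$, the $q$-Serre relations arising as top-degree parts of the $q$-Dolan--Grady relations. The Baseilhac--Kolb root vectors $B_{n\delta+\alpha_0}$, $B_{n\delta+\alpha_1}$, $B_{n\delta}$ are built from $W_0,W_1$ by braid-group operators and $q$-commutators that mirror Damiani's construction. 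Their top-degree symbols should therefore be (nonzero scalar multiples of) Damiani's root vectors $E_{n\delta+\alpha_i}$ and $E_{n\delta}$, or at least elements differing from them by a unitriangular change among imaginary vectors of equal degree.

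\textbf{Step 1: spanning for an arbitrary order.} We prove by induction on degree that any product of root vectors can be rewritten, modulo lower filtration, as a combination of ordered monomials for the chosen total order. We use the Baseilhac--Kolb commutation relations, which say that every $q$-commutator of two root vectors equals a combination of ordered monomials of the same degree plus lower-degree terms; see \cite[Section~5]{BaseilhacKolb2020}. Note that these relations are stated for one convex order. For a general total order we instead argue in $\gr$: by the Levendorskii--Soibelman-type straightening in $\Uplus$ together with degree induction, every product of root vectors lies in the span of monomials in any fixed order. This holds because the top symbols commute up to powers of $q$ and terms that are strictly smaller in a suitable degree/length ordering. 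Concretely, we show that the span of ordered monomials of degree $\le d$ is independent of the order. We do this by swapping adjacent factors and using that $[x,y]_{q^{\pm}}$ lies in the subalgebra generated by root vectors of intermediate degree, with filtration strictly controlled.

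\textbf{Step 2: linear independence.} Spanning reduces everything to a dimension count: it suffices to show that in each filtered piece the number of ordered monomials equals $\dim F_d\Oq/F_{d-1}\Oq$. The number of ordered monomials is independent of the order. For one order, Baseilhac--Kolb proved the PBW property under a transcendence hypothesis on $q$. To remove that hypothesis we will instead use that $\gr\Oq\cong\Uplus$ holds over any field with $q$ not a root of unity, which is Kolb's result, valid generically via specialisation arguments. Damiani's PBW theorem for $\Uplus$ then gives that the graded dimensions equal the number of ordered root monomials, so the ordered monomials are independent in $\gr$ and hence in $\Oq$.

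\textbf{Main obstacle.} The hard part is identifying the leading symbols of the imaginary root vectors $B_{n\delta}$ with a family that generates, together with the real ones, a PBW basis of $\Uplus$ for \emph{every} order and over an arbitrary field. Damiani's imaginary vectors are defined via $q$-commutators and a generating-function exponential, which requires care in positive characteristic. We would first try to show by induction on $n$ that $\gr B_{n\delta}$ equals Damiani's $E_{n\delta}$ up to a nonzero scalar plus polynomials in $E_{k\delta}$ with $k<n$, using the recursion $B_{n\delta}\sim [B_{\alpha_1},B_{(n-1)\delta+\alpha_0}]_q$. Since the imaginary symbols pairwise commute, such a unitriangular change preserves the PBW property in any order. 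If characteristic issues obstruct this, we would fall back on a $\Z[q,q^{-1}]$-form argument: verify the result over $\mathbb Q(q)$, where the generic case is known, and check that the transition matrices have unit determinant in the localisation at non-roots of unity.
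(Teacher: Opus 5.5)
Your architecture matches the paper's. You use the word-length filtration with $\gr\Oq\simeq\Uplus$, identify the initial forms of the Baseilhac--Kolb root vectors as non-zero multiples of Damiani's root vectors, count dimensions against one known Damiani PBW basis, and finish with filtered lifting. The gap is in Step~1, which is exactly where ``every total order'' is decided.

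Inside $\Uplus$ every straightening relation is homogeneous, so correction terms never drop in degree. There is also no lower filtration in $\Uplus$ for them to fall into. They need not drop in length either. The corrections $A_{j+\ell}I_{m-\ell}$ in \eqref{eq:Damiani-cross-IA} and $A_{j+\ell}A_{i-\ell}$ in \eqref{eq:Damiani-cross-AA-odd} have the same weight and the same number of factors as the word being rewritten. For a non-convex order $\prec$, such correction words can again be $\prec$-unordered. Moreover, ``intermediate'' refers to the convex order $<_{\mathrm D}$, not to $\prec$. So degree induction plus adjacent swaps, or ``a suitable degree/length ordering'', gives no termination argument.

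What is missing is a well-founded order on words of fixed weight, independent of $\prec$, that every correction term strictly decreases. The paper's Lemma~\ref{lem:Damiani-triangular-crossings} supplies one. It is the real signature $(a_0,a_1,\ldots;b_0,b_1,\ldots)$ of real-root multiplicities, ordered lexicographically. This strictly drops under every correction in \eqref{eq:Damiani-cross-BA}--\eqref{eq:Damiani-cross-BB-even}: either both real factors disappear, or the real factor of smallest affected index is replaced by factors of strictly larger index. Only finitely many signatures occur in a given weight. Induction on the pair (real signature, number of $\prec$-inversions) then gives spanning for arbitrary $\prec$. Your dimension count then finishes the proof as in Theorem~\ref{thm:any-Damiani}. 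A De Concini--Kac-type multiplicity filtration extracted from the Levendorskii--Soibelman property would serve the same purpose, but you would have to construct and verify it.

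Three lesser points.
\begin{enumerate}
\item Your ``main obstacle'' is not one. With $E_{n\delta}$ defined by the recursion \eqref{eq:Damiani-imaginary}, no exponential enters. The quadratic sum in \eqref{eq:BK-imaginary} lies in filtration degree at most $2n-2$, so $\init(B_{n\delta})$ is exactly $q^{n-1}(q-q^{-1})^{2-2n}E_{n\delta}$.
\item Your hedge that a unitriangular change among imaginary symbols preserves the PBW property in every order is not immediate when $\prec$ interleaves real and imaginary roots. The lower imaginary factors land in the wrong positions and must be re-straightened.
\item The isomorphism $\gr\Oq\simeq\Uplus$ over an arbitrary $\F$ with $q$ not a root of unity is \cite[Theorem~4.4]{Terwilliger2017PositivePart}. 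A generic-$q$ result plus ``specialisation'' does not by itself give injectivity of $\Uplus\to\gr\Oq$; only the surjection is automatic.
\end{enumerate}
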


The proof has two steps.  We first use Damiani's straightening relations to
show that the Damiani root vectors give a PBW basis in every total order.
For each fixed weight, every correction term decreases a lexicographically
ordered real-root multiplicity sequence, so induction on this sequence and
on the number of inversions terminates.  We then compute the initial forms
of the Baseilhac--Kolb root vectors under the word-length filtration
$\gr\Oq\simeq\Uplus$ and apply a filtered lifting argument.  This both
permits an arbitrary total order and removes the previous transcendence
hypothesis.

We next consider the alternating generators.  For
$H\in\{G,\widetilde G\}$, consider the six block orders
\begin{equation}
\label{eq:intro-orders}
\begin{array}{lll}
W^-<H<W^+,&
W^+<H<W^-,&
W^+<W^-<H,\\[1mm]
W^-<W^+<H,&
H<W^+<W^-,&
H<W^-<W^+.
\end{array}
\end{equation}

\begin{theorem}[{\cite[Conjecture~16.2]{Terwilliger2022OqACE}}]
\label{thm:intro-alternating}
Let $H=G$ or $H=\widetilde G$.  Every total order on
$W^-\cup H\cup W^+$ satisfying one of the block conditions in
\eqref{eq:intro-orders} gives a PBW basis of $\Oq$.  The corresponding
twelve PBW bases persist for the specialised alternating generators in
every $\F$-valued central fibre of $\Aq$.
\end{theorem}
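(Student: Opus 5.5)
The approach is a filtered lifting argument, parallel to the one used for Theorem~\ref{thm:intro-root}. Equip $\Oq$ with the word-length filtration in the generators $W_0,W_1$, so that $\gr\Oq\simeq\Uplus$. The first task is to compute the initial forms of the alternating generators. For $W_{-n}$ and $W_{n+1}$ they should be, up to nonzero scalars, Damiani real root vectors of degree $2n+1$ (roots $\alpha_0+n\delta$ and $\alpha_1+n\delta$, in the appropriate convention). For $G_{n+1}$ and $\widetilde G_{n+1}$ they should be elements of degree $2n+2$ in the imaginary subalgebra $\F[E_{\delta},E_{2\delta},\dots]$ of $\Uplus$. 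Each such initial form should equal a nonzero multiple of $E_{(n+1)\delta}$ plus a polynomial in $E_{k\delta}$ with $k\le n$. I will obtain these from Terwilliger's relations: the leading terms satisfy the same recursions that define Damiani's root vectors (the $q$-commutators $[W_0,W_1]$-type relations), and the leading coefficient is a product of factors $q^{k}-q^{-k}$, which is nonzero because $q$ is not a root of unity.

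Given this, the ordered monomials in the initial forms inside $\gr\Oq$ are, for any total order respecting one of the block conditions in \eqref{eq:intro-orders}, a triangular transform of Damiani's PBW basis in a total order on positive roots. The imaginary block is a polynomial algebra, so replacing $E_{(n+1)\delta}$ by the initial form of $H_{n+1}$ is a unitriangular change of generators. By the first step of the proof of Theorem~\ref{thm:intro-root}, Damiani root vectors give a PBW basis of $\Uplus$ in every total order. That result is exactly what allows the mixed orders such as $W^+<W^-<H$, which are not convex. Hence the ordered monomials in the initial forms form a basis of $\gr\Oq$.

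The standard lifting lemma then finishes the argument. If homogeneous initial forms of ordered monomials form a basis of $\gr\Oq$, and the filtration is exhaustive and bounded below, then the ordered monomials themselves form a basis of $\Oq$. Linear independence follows by looking at top-degree components. Spanning follows by induction on filtration degree, using that each filtered piece is finite-dimensional.

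For central fibres, write $\Aq\to\Oq$ for a specialisation sending the central elements $z_k$ (or equivalently $\Delta_k$) to scalars. By Terwilliger's factorisation, the specialised alternating generators differ from the canonical ones by $\F$-linear combinations of lower-index generators of the same family, multiplied by scalars coming from the central values. This holds for $\mathcal W^\pm$, and for $\mathcal G,\widetilde{\mathcal G}$ up to an additive constant. In the word-length filtration those corrections have strictly smaller degree, so the initial forms are unchanged, and the same lifting argument applies verbatim in each fibre.

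The main obstacle is the initial-form computation for $G_{n+1}$ and $\widetilde G_{n+1}$. I must show that it lands in the imaginary subalgebra with nonzero top coefficient on $E_{(n+1)\delta}$, and not merely that it is nonzero of degree $2n+2$. I would try first to induct on $n$ using the relation $[W_0,W_{n+1}]=\eta(\widetilde G_{n+1}-G_{n+1})$ and the relation for $[W_1,G_{n+1}]_q$. In $\Uplus$ these become Damiani's defining recursions for $E_{(n+1)\delta}$, and there I would compare with Beck's generating-function identity between $\psi$ and $E_{k\delta}$.
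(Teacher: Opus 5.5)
\textbf{There is a genuine gap: the initial forms you assign to the alternating generators are wrong.} Your lifting lemma and your treatment of the central fibres match the paper. However, the step in between rests on this false identification. For $n\geqslant 1$, the initial forms of $W_{-n}$ and $W_{n+1}$ are not multiples of Damiani's real root vectors.

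Take $n=1$. The image of \eqref{eq:Aq4} gives $[W_0,W_{-1}]=0$ in $\Oq$. The degree-$4$ component of this commutator then shows that $\init(W_{-1})$ commutes with $A$ in $\Uplus$. Suppose $\init(W_{-1})=cE_{\delta+\alpha_0}$ with $c\neq 0$. Then \eqref{eq:adjacent-Damiani-real} forces $(1-q^{-2})AE_{\delta+\alpha_0}=0$, which contradicts the Damiani PBW basis.

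In the $q$-shuffle realisation, $\init(W_{-1})$ is a multiple of $xyx=W_{-1}^{\sh}$. By contrast, $E_{\delta+\alpha_0}$ is a multiple of $xC_1=[2]_q\,xxy$. Since $xyx=x\star xy-(1+q^2)\,xxy$, the initial form mixes $E_{\delta+\alpha_0}$ with $AE_\delta$; it is not a root vector. The reversed convention gives a multiple of $yxx$, which also $q^{2}$-commutes with $x$ and fails in the same way.

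There is a similar problem for $H=G$. The initial form of $G_1$ is a multiple of $yx$. This does not lie in $\F[E_\delta,E_{2\delta},\ldots]$, because the degree-$\delta$ part of that algebra is spanned by $E_\delta$, a multiple of $xy$. That case could be recovered from $H=\widetilde G$ via $\dagger$, but the real-root problem cannot.

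\textbf{What breaks as a result.} Your claim that the ordered alternating monomials in $\gr\Oq$ are a unitriangular transform of an arbitrary-order Damiani basis has no foundation. The change of generators is not confined to the imaginary block, and the alternating real generators are not root vectors. So Theorem~\ref{thm:any-Damiani} says nothing about the six block orders.

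\textbf{What you need instead.}
\begin{enumerate}
\item[(i)] The correct initial forms: non-zero multiples of the alternating words $x(yx)^n$, $y(xy)^n$, $(yx)^n$, $(xy)^n$. The paper obtains these in three steps:
\begin{enumerate}
\item[$\bullet$] for $\widetilde G_n$, from the recursion \eqref{eq:root-tG-recursion} together with the Catalan identity \eqref{eq:Catalan-alternating};
\item[$\bullet$] for $W_{-n}$ and $W_{n+1}$, from \eqref{eq:Oq-tG-W0} and \eqref{eq:Oq-W1-tG} combined with \eqref{eq:qshuffle-Wminus} and \eqref{eq:qshuffle-Wplus};
\item[$\bullet$] for $G_n$, by applying $\dagger$.
\end{enumerate}
\item[(ii)] Terwilliger's theorem that these words give PBW bases of $\Uplus$ in exactly the six block orders (Theorem~\ref{thm:qshuffle-twelve}). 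The non-convex block orders are supplied by that theorem, not by Theorem~\ref{thm:any-Damiani}.
\end{enumerate}

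Finally, your proposed route to the imaginary initial forms, through $[W_0,W_{n+1}]=(\widetilde G_{n+1}-G_{n+1})/(q+q^{-1})$, only controls the difference $\widetilde G_{n+1}-G_{n+1}$. It cannot determine either initial form separately.
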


To prove Theorem~\ref{thm:intro-alternating}, we determine the initial
form of every alternating generator, including its scalar factor, and
identify it with the corresponding alternating word in the $q$-shuffle
realisation of $\Uplus$.  The twelve graded PBW bases are due to
Terwilliger \cite{Terwilliger2019Alternating}, and the result follows by
filtered lifting.  Under scalar central specialisation, each alternating
generator is changed only by terms of smaller filtration degree, so its
initial form is unchanged.

For $\rho\in\F^\times$ and
$\boldsymbol\delta=(\delta_1,\delta_2,\ldots)$, let
\begin{equation*}
 \BBqquot{\rho}{\boldsymbol\delta}
 =
 \Aq(\rho)\big/
 \left\langle
        \Delta_n^{(\rho)}-2\delta_n\mid n\geqslant 1
 \right\rangle
\end{equation*}
denote the quotient introduced by Baseilhac and Belliard.

\begin{corollary}[{\cite[Conjecture~1]{BaseilhacBelliard2017}}]
\label{cor:intro-BB}
Assume that $\operatorname{char}\F\neq 2$.  For every
$\rho\in\F^\times$ and every scalar sequence
$\boldsymbol\delta$, the ordered monomials in the images of the three
families $W^-<G<W^+$ form an $\F$-basis of
$\BBqquot{\rho}{\boldsymbol\delta}$.
\end{corollary}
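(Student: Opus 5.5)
The plan is to deduce the corollary from Theorem~\ref{thm:intro-alternating} with $H=G$ and the block order $W^-<G<W^+$. The substantive step is to identify $\BBqquot{\rho}{\boldsymbol\delta}$ with a scalar central fibre of $\Aq$, under a dictionary that sends the Baseilhac--Belliard current generators to scalar multiples of the specialised alternating generators.

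\textbf{Step 1: the rescaling $\rho$.} Relate the current algebra $\Aq(\rho)$ to Terwilliger's $\Aq$. I expect an isomorphism $\Aq(\rho)\simeq\Aq$ given by rescaling the four current families, $\mathcal W_{\mp k}\mapsto c_k\mathcal W_{\mp k}$ and $\mathcal G_{k+1}\mapsto c'_k\mathcal G_{k+1}$, with constants $c_k,c'_k$ built from $\rho$ and $q+q^{-1}$. Under this isomorphism the two presentations are matched, and $\rho\neq0$ keeps the rescaling invertible. A fixed nonzero rescaling of each generator does not affect whether ordered monomials form a basis.

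\textbf{Step 2: the central coefficients.} Show that the quantum-determinant coefficients $\Delta_n^{(\rho)}$ generate the same central polynomial subalgebra as Terwilliger's central generators $z_1,z_2,\ldots$ in $\Aq\simeq\Oq\otimes\F[z_1,z_2,\ldots]$. The natural route is to expand the quantum determinant and find, for each $n$,
\[
\Delta_n^{(\rho)}=\lambda_n z_n+(\text{polynomial in }z_1,\ldots,z_{n-1}),
\]
with $\lambda_n\in\F^\times$. The coefficient $\lambda_n$ will presumably contain a factor $2$ or $(q+q^{-1})$, and this is where $\operatorname{char}\F\neq2$ is needed to make the triangular change of variables invertible.

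Given this triangularity, the ideal $\langle\Delta_n^{(\rho)}-2\delta_n\rangle$ equals $\langle z_n-\zeta_n\rangle$ for a unique scalar sequence $(\zeta_n)$, obtained by solving recursively. Hence $\BBqquot{\rho}{\boldsymbol\delta}$ is exactly an $\F$-valued central fibre of $\Aq$. The images of the current generators are then nonzero scalar multiples of the specialised alternating generators $W_{-n},G_{n+1},W_{n+1}$.

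\textbf{Step 3: conclusion.} Apply the central-specialisation part of Theorem~\ref{thm:intro-alternating} with $H=G$ and the order $W^-<G<W^+$. This gives the required basis.

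\textbf{Main obstacle.} I expect Step~2 to be the hardest: extracting the leading coefficient $\lambda_n$ of the quantum determinant in terms of Terwilliger's $z_n$ and checking that it is a unit. I would first compute $\Delta_1^{(\rho)}$ explicitly from the reflection-equation generating function, to pin down the normalisation. I would then use the generating-function identity for $\Delta(u)$ together with the description of the $z_n$ via $\mathcal G(t)$ and $\widetilde{\mathcal G}(t)$ to read off the triangular form, with the leading term arising from the $\mathcal G\widetilde{\mathcal G}$ product.
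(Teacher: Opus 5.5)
Your treatment of the normalised parameter $\rho=\rho_0$ is the paper's. The ideal $\langle\Delta_n-2\delta_n\rangle$ is converted into $\langle\Zconv_n-\zeta_n\rangle$ using $\Delta_n=-2(q-q^{-1})[2]_q^{2-n}(q^n+q^{-n})^{-1}\Zvee_n$ and the triangular relation $\Zvee_n=[2]_q^n(q^n+q^{-n})\Zconv_n+P_n(\Zconv_1,\ldots,\Zconv_{n-1})$. The central-fibre PBW theorem is then applied. The paper quotes these relations from Terwilliger rather than recomputing the quantum determinant. One small correction: $q+q^{-1}$ is non-zero whenever $q$ is not a root of unity, so $\operatorname{char}\F\neq 2$ is needed only for the factor $2$.

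The genuine gap is Step~1: a rescaling isomorphism $\Aq(\rho)\to\Aq$ over $\F$ does not exist in general. Suppose $\cW_{-k}\mapsto a_k\cW_{-k}$, $\cW_{k+1}\mapsto b_k\cW_{k+1}$, $\cG_{k+1}\mapsto c_k\cG_{k+1}$, $\ctG_{k+1}\mapsto d_k\ctG_{k+1}$ respects the relations. Then \eqref{eq:Aq1} forces $c_k=d_k=a_0b_k$. Comparing coefficients in \eqref{eq:Aq2} forces $a_{k+1}=b_k$ and $a_0c_k\rho_0=\rho b_k$, hence $a_0^2\rho_0=\rho$. So you need $s\in\F$ with $s^2\rho_0=\rho$, i.e.\ $-\rho$ must be a square in $\F$. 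This fails, for instance, when $\F=\mathbb{Q}(q)$ and $\rho=1$, and index-dependent constants cannot help.

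The missing idea is base change followed by descent. Adjoin such an $s$ to obtain $\F'$; there $q$ is still not a root of unity and the characteristic is unchanged. The rescaling sends $\Delta_n^{(\rho)}$ to $s^2\Delta_n^{(\rho_0)}$, so $\BBqquot{\rho}{\boldsymbol\delta}\otimes_\F\F'$ becomes the normalised quotient over $\F'$ with central values $s^{-2}\delta_n$, and the normalised result applies to it. Consequently, the $\F$-linear map from the formal span of the chosen ordered monomials to $\BBqquot{\rho}{\boldsymbol\delta}$ becomes bijective after $\otimes_\F\F'$. It is therefore bijective over $\F$, because $\F'/\F$ is faithfully flat. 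With Step~1 replaced by this argument, your proof coincides with the paper's.
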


For the normalised parameter, Corollary~\ref{cor:intro-BB} follows from
the central-fibre theorem after comparing the three systems of central
coordinates.  The characteristic restriction is needed only for the
comparison with the Baseilhac--Belliard coordinates $\Delta_n$; the
central-fibre theorem itself is valid in arbitrary characteristic.  The
general non-zero parameter is obtained by scalar extension, rescaling the
current generators, and descent by faithful flatness.

Our final result concerns centralisers.  Each of the four alternating
families is commutative, and we determine the centralisers needed to test
maximal commutativity.

\begin{theorem}[{\cite[Conjectures~16.7 and~16.8]{Terwilliger2022OqACE}}]
\label{thm:intro-centralisers}
One has
\begin{align*}
 \Cent_{\Oq}\bigl(\F[W_0,W_{-1},W_{-2},\ldots]\bigr)
 &=
 \F[W_0,W_{-1},W_{-2},\ldots],\\
 \Cent_{\Oq}(\widetilde G_1)
 &=
 \F[\widetilde G_1,\widetilde G_2,\widetilde G_3,\ldots].
\end{align*}
Consequently, the four polynomial subalgebras generated separately by the
four alternating families are self-centralising and hence maximal
commutative.
\end{theorem}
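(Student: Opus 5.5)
We plan to deduce both centraliser statements from the corresponding statements in the associated graded algebra $\gr\Oq\simeq\Uplus$, realised inside the $q$-shuffle algebra $\Vsh$, and then lift back through the word-length filtration. The key input is the initial-form computation used for Theorem~\ref{thm:intro-alternating}. Each alternating generator $W_{-n}$, $W_{n+1}$, $G_{n+1}$, $\widetilde G_{n+1}$ has a nonzero scalar multiple of the corresponding alternating word in $\Vsh$ as its initial form, and these words lie in filtration degree $2n+1$ or $2n+2$. The first step is to record the graded analogue. Terwilliger's results on alternating words give, in $\Uplus$, that the centraliser of the subalgebra generated by the $\init(W_{-n})$ is the polynomial algebra they generate, and likewise that the centraliser of $\init(\widetilde G_1)$ is $\F[\init(\widetilde G_1),\init(\widetilde G_2),\ldots]$. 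If these graded statements are not already available, we will prove them using the twelve graded PBW bases. We expand a centralising element in the order $W^-<\widetilde G<W^+$ (respectively $\widetilde G<W^-<W^+$). We then compute the leading part of its commutator with $\init(W_0)$, or with $\init(\widetilde G_1)$, using the $q$-shuffle product, whose commutator with the first alternating word acts as a derivation-like operator on words. The aim is to show that the monomial carrying the highest nontrivial $W^+$ or $W^-$ content produces a nonzero leading term.

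The lifting argument proceeds as follows. Let $x\in\Oq$ commute with $\F[W_0,W_{-1},\ldots]$ and have filtration degree $d$. Then $\init(x)$ commutes with every $\init(W_{-n})$ in $\gr\Oq$, since $[\init x,\init W_{-n}]$ is the degree-$(d+2n+1)$ component of $[x,W_{-n}]=0$. By the graded statement, $\init(x)$ is a polynomial in the $\init(W_{-n})$. Choose the same polynomial $p$ in the $W_{-n}$ with $\init p=\init x$. Then $x-p$ still lies in the centraliser and has strictly smaller filtration degree. Induction on $d$ then gives $x\in\F[W_0,W_{-1},\ldots]$. The argument for $\widetilde G_1$ is identical: we use that the $\widetilde G_n$ mutually commute, and that their initial forms generate the graded centraliser. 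Note that only one element $\widetilde G_1$ is being centralised, so the graded statement required is correspondingly stronger.

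For the consequence, each of the four families generates a commutative subalgebra that is polynomial, by the PBW bases of Theorem~\ref{thm:intro-alternating}. For $W^-$ the self-centralising property is the first identity. For $\widetilde G$ we have $\F[\widetilde G]\subseteq\Cent(\F[\widetilde G])\subseteq\Cent(\widetilde G_1)=\F[\widetilde G]$. For $W^+$ and $G$ we transport these results by the antiautomorphism or automorphism of $\Oq$ that swaps $W_0\leftrightarrow W_1$. This map exchanges $W^-\leftrightarrow W^+$ and $G\leftrightarrow\widetilde G$, up to scalars and lower-order terms compatible with the known symmetries of the alternating generators. Maximal commutativity is immediate from self-centralising: any commutative overalgebra lies in the centraliser.

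The main obstacle is the graded centraliser computation for the single element $\init(\widetilde G_1)$ in $\Uplus$. To handle it, we will first pass to a further associated graded, filtering by the number of $W^\pm$ letters in the PBW basis $\widetilde G<W^-<W^+$. We will then show that $\mathrm{ad}\,\init(\widetilde G_1)$ acts injectively on the span of monomials containing at least one real-type factor. Concretely, the leading term of $[\widetilde G_1,W_{-n}]$ is a nonzero multiple of $W_{-n-1}$ plus higher terms, and similarly $[\widetilde G_1,W_{n+1}]$ leads with $W_{n+2}$, with coefficient involving $q^2-q^{-2}\neq0$. A triangularity argument on the largest index then gives injectivity. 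The same triangularity handles $W^-$ centralising $W^+$ and $\widetilde G$ content.
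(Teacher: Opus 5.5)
\paragraph{Main gap: the graded statement for $W^-$.}
Your overall plan for the first identity is the paper's: prove $\Cent_U(\mathcal N)=\mathcal N$ in $\gr\Oq\simeq U$, then lift through the word-length filtration. The deduction of the four self-centralising statements via $\sigma$ and $\dagger$ also matches. The gap is how you propose to prove the graded statement for $W^-$: a leading-term analysis of the commutator with $\init(W_0)=x$ alone. No such argument can work, because $\Cent_U(x)$ is strictly larger than $\mathcal N$.

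The bidegree-$(2,2)$ component of $U$ is all of $\Vsh_{2,2}$: both have dimension $6$, the number of Kostant partitions of $2\delta$. So $yxxy\in U$. Inserting one letter $x$ gives
\[
 yxxy\star x=x\star yxxy=xyxxy+(t^{-1}+1+t)\,yxxxy+yxxyx\qquad(t=q^2),
\]
so $[yxxy,x]=0$. On the other hand, $\mathcal N$ has no non-zero element of weight $2\delta$, since every non-constant monomial in the $W_{-n}$ has more $x$'s than $y$'s. The coefficient of $xyxyxxy$ in $[yxxy,xyx]$ is $q^{-2}-q^2\neq 0$, so the second generator $W_{-1}$ must genuinely enter. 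Your sketch, including the closing remark about ``the same triangularity'', gives no mechanism for this.

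The missing ingredient in the paper is the left-deletion operator $D$. It is a twisted derivation with $D(xyx)=yx$. For $z$ homogeneous of difference $r$ and commuting with $x$, one has $z=\gamma_rS_rD^rz$ and $D^{r+1}[z,W_{-1}]=\langle r+1\rangle_t[D^rz,G_1]$. This reduces the joint centraliser of $\{x,W_{-1}\}$ to $\Cent_U(G_1)=\F[G_1,G_2,\ldots]$. The reconstruction formula then places $z$ in the algebra $\mathcal B$ generated by $W^-$ and $G$. There, a filtration by the number of $G$-factors gives $\Cent_{\mathcal B}(x)=\mathcal N$.

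\paragraph{Second half: the imaginary centraliser.}
For $\widetilde G_1$, proving the graded statement for the single element $\init(\widetilde G_1)$ and then lifting is a legitimate alternative to the paper's direct argument in $\Oq$. Your lifting step is correct. However, the triangularity you sketch in the alternating basis is not established. One has
\[
[\widetilde G_1^{\sh},W_{-k}^{\sh}]=(q^2-1)\bigl(W_{-k-1}^{\sh}-\widetilde G_{k+1}^{\sh}\star x\bigr).
\]
Moving the new $\widetilde G$-factor back through the other $W^-$-factors can raise their indices; for instance, $x\star\widetilde G_m^{\sh}=q^2\,\widetilde G_m^{\sh}\star x-(q^2-1)W_{-m}^{\sh}$. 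Already in $[\widetilde G_1^{\sh},x\star W_{-1}^{\sh}]$, the monomial $x\star W_{-2}^{\sh}$ receives contributions from two sources: raising $W_{-1}^{\sh}$, and straightening $x\star\widetilde G_2^{\sh}\star x$. So the leading coefficient is not just the raising coefficient. An argument controlling these straightening contributions, including possible cancellation or overshooting of the profile, is missing.

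The paper avoids this by using root vectors on which $\operatorname{ad}$ of the first imaginary root vector is an exact shift. In $U$, $J_1$ is a non-zero multiple of $\widetilde G_1^{\sh}$, with $[J_1,A_i]=[2]_qA_{i+1}$, $[J_1,B_i]=-[2]_qB_{i+1}$, and exact adjacent $q$-commutation. In $\Oq$, the Baseilhac--Kolb chain satisfies $[I_1,R_r]=\kappa(R_{r-1}-R_{r+1})$ and $R_rR_{r+1}=q^{-2}R_{r+1}R_r-I_1$. Switching to either of these would complete your second half.
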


For the negative family, we work first in the $q$-shuffle algebra and
introduce a left-deletion operator.  This reduces the joint centraliser of
$W_0=x$ and $W_{-1}=xyx$ to the centraliser of $G_1=yx$.  An auxiliary
Damiani-basis argument and a filtration by the number of $G$-factors then
show that this joint centraliser is exactly the negative alternating
polynomial algebra.  Hence that algebra is self-centralising, and the
result lifts to $\Oq$ by the filtered centraliser lemma.  For the imaginary
family, we order the Baseilhac--Kolb real root vectors as a two-sided chain
and use the action of the first imaginary root vector.  Raising the largest
real index produces a uniquely largest profile with non-zero coefficient.

The paper is organised as follows.  In Section~\ref{sec:prelim}, we define
$\Oq$ and $\Aq$ and prove the filtered lifting lemmas used throughout.  In
Section~\ref{sec:qshuffle}, we recall the $q$-shuffle realisation, the
alternating words, the Catalan elements, and Damiani's root vectors.  In
Section~\ref{sec:specialisation}, we prove the arbitrary-order Damiani PBW
theorem.  In Section~\ref{sec:root-initial}, we compute the initial forms
of the Baseilhac--Kolb root vectors and prove
Theorem~\ref{thm:intro-root}.  In
Section~\ref{sec:alternating-initial}, we compute the initial forms of the
alternating generators, and in Section~\ref{sec:twelve}, we prove the
twelve PBW bases of $\Oq$.  In
Section~\ref{sec:central-specialisations}, we treat arbitrary scalar
central fibres, and in Section~\ref{sec:BB}, we prove
Corollary~\ref{cor:intro-BB}.  In Section~\ref{sec:negative-centraliser}, we
determine the negative alternating centraliser, and in
Section~\ref{sec:centralisers}, we determine the imaginary alternating
centraliser and prove maximal commutativity.

\section{The \texorpdfstring{$q$}{q}-Onsager algebra and filtered lifting}
\label{sec:prelim}

Throughout, we use the convention $\N=\{0,1,2,\ldots\}$.
The symbol $\F$ denotes a field, and
$q\in\F^\times$ is assumed not to be a root of unity.  All algebras and
algebra homomorphisms are unital, and all tensor products are taken over
$\F$.  For elements $X,Y$ of an $\F$-algebra, we write
\[
 [n]_q=\frac{q^n-q^{-n}}{q-q^{-1}},\qquad
 \comm{X}{Y}=XY-YX,\qquad
 \qcomm{X}{Y}=qXY-q^{-1}YX.
\]

We shall repeatedly use the following elementary consequence of the
assumption on $q$.

\begin{lemma}
\label{lem:nonvanishing-scalars}
For every positive integer $m$, the scalars
\[
 q-q^{-1},\qquad q+q^{-1},\qquad [m]_q,\qquad
 q^m+q^{-m},\qquad 1-q^{2m},\qquad 1-q^{-2m}
\]
are non-zero.
\end{lemma}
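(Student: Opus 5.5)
The plan is to reduce each listed scalar to the single fact that $q^{k}\neq 1$ for every integer $k\neq 0$, which is exactly the hypothesis that $q$ is not a root of unity. All the scalars live in the field $\F$, so a product of non-zero elements is non-zero. Each quantity can therefore be rewritten as a monomial $q^{a}$, which is non-zero because $q\in\F^\times$, times a factor of the form $1-q^{k}$ with $k\neq 0$.

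We treat the six scalars one at a time:
\begin{itemize}
\item $1-q^{2m}$ and $1-q^{-2m}$ are non-zero because $2m\neq 0$ and $-2m\neq 0$.
\item $q-q^{-1}=q^{-1}(q^{2}-1)$, and $q^{2}\neq 1$.
\item $q+q^{-1}=q^{-1}(q^{2}+1)$. If $q^{2}=-1$ then $q^{4}=1$, which is excluded. This uses exponent $4$ rather than $2$, which matters in characteristic $2$.
\item $q^{m}+q^{-m}=q^{-m}(q^{2m}+1)$. If this vanished, then $q^{4m}=1$, contradicting the hypothesis since $4m\neq 0$.
\item $[m]_q=(q^{m}-q^{-m})/(q-q^{-1})$ is defined because $q-q^{-1}\neq 0$, as just shown. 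Its numerator equals $q^{-m}(q^{2m}-1)$, which is non-zero.
\end{itemize}

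No step is a real obstacle. The only points that need care are, first, noting that $[m]_q$ is well defined as an element of $\F$ before asking whether it vanishes. Second, in the cases of $q+q^{-1}$ and $q^{m}+q^{-m}$, one should pass to the exponent $4m$ instead of assuming $-1\neq 1$. This keeps the argument valid in characteristic $2$, which the paper needs because it works over an arbitrary field.
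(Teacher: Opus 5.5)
Your proof is correct and takes essentially the same route as the paper: you reduce the vanishing of each scalar to $q^{k}=1$ for some $k\in\{2,4,2m,4m\}$, which contradicts the assumption that $q$ is not a root of unity. Your explicit factorisations through powers of $q$, and your check that $[m]_q$ is well defined, simply spell out the paper's one-line implications.
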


\begin{proof}
If $q-q^{-1}=0$, then $q^2=1$.  If
$q+q^{-1}=0$, then $q^4=1$.  Since $q-q^{-1}\neq 0$, the equality
$[m]_q=0$ implies $q^{2m}=1$.  Similarly,
$q^m+q^{-m}=0$ implies $q^{4m}=1$, while either
$1-q^{2m}=0$ or $1-q^{-2m}=0$ implies $q^{2m}=1$.  Each
possibility contradicts the assumption that $q$ is not a root of unity.
\end{proof}

\subsection{The \texorpdfstring{$q$}{q}-Onsager algebra and its alternating central extension}

\begin{definition}
\label{def:Oq}
The \textbf{$q$-Onsager algebra} $\Oq$ is the $\F$-algebra generated by
$W_0,W_1$, subject to the $q$-Dolan--Grady relations
\begin{align}
W_0^3W_1-[3]_qW_0^2W_1W_0+[3]_qW_0W_1W_0^2-W_1W_0^3
   &=(q^2-q^{-2})^2(W_1W_0-W_0W_1),
\label{eq:qDG0}\\
W_1^3W_0-[3]_qW_1^2W_0W_1+[3]_qW_1W_0W_1^2-W_0W_1^3
   &=(q^2-q^{-2})^2(W_0W_1-W_1W_0).
\label{eq:qDG1}
\end{align}
\end{definition}

Set
\begin{equation}
\label{eq:rho0}
        \rho_0=-(q^2-q^{-2})^2.
\end{equation}
Equivalently, the right-hand sides of
\eqref{eq:qDG0} and \eqref{eq:qDG1} are
$\rho_0\comm{W_0}{W_1}$ and
$\rho_0\comm{W_1}{W_0}$, respectively.

We shall also use the alternating central extension of $\Oq$, in the
presentation of
\cite{BaseilhacKoizumi2005,BaseilhacShigechi2010,
Terwilliger2021ACE}.

\begin{definition}
\label{def:Aq}
The \textbf{alternating central extension} $\Aq$ is the $\F$-algebra generated
by the four families
\[
        \cW_{-k},\qquad
        \cW_{k+1},\qquad
        \cG_{k+1},\qquad
        \ctG_{k+1}
        \qquad(k\in\N),
\]
subject, for all $k,\ell\in\N$, to
\begin{align}
&[\cW_0,\cW_{k+1}]
        =[\cW_{-k},\cW_1]
          =\frac{\ctG_{k+1}-\cG_{k+1}}{q+q^{-1}},
          \label{eq:Aq1}\\
&[\cW_0,\cG_{k+1}]_q
        =[\ctG_{k+1},\cW_0]_q
          =\rho_0\cW_{-k-1}-\rho_0\cW_{k+1},
          \label{eq:Aq2}\\
&[\cG_{k+1},\cW_1]_q
        =[\cW_1,\ctG_{k+1}]_q
          =\rho_0\cW_{k+2}-\rho_0\cW_{-k},
          \label{eq:Aq3}\\
&[\cW_{-k},\cW_{-\ell}]
        =0,
        \qquad
        [\cW_{k+1},\cW_{\ell+1}]=0,
        \label{eq:Aq4}\\
&[\cW_{-k},\cW_{\ell+1}]
        +[\cW_{k+1},\cW_{-\ell}]=0,
        \label{eq:Aq5}\\
&[\cW_{-k},\cG_{\ell+1}]
        +[\cG_{k+1},\cW_{-\ell}]=0,
        \label{eq:Aq6}\\
&[\cW_{-k},\ctG_{\ell+1}]
        +[\ctG_{k+1},\cW_{-\ell}]=0,
        \label{eq:Aq7}\\
&[\cW_{k+1},\cG_{\ell+1}]
        +[\cG_{k+1},\cW_{\ell+1}]=0,
        \label{eq:Aq8}\\
&[\cW_{k+1},\ctG_{\ell+1}]
        +[\ctG_{k+1},\cW_{\ell+1}]=0,
        \label{eq:Aq9}\\
&[\cG_{k+1},\cG_{\ell+1}]
        =0,
        \qquad
        [\ctG_{k+1},\ctG_{\ell+1}]=0,
        \label{eq:Aq10}\\
&[\ctG_{k+1},\cG_{\ell+1}]
        +[\cG_{k+1},\ctG_{\ell+1}]=0.
        \label{eq:Aq11}
\end{align}
The four displayed families are called the \textbf{alternating generators} of $\Aq$.
\end{definition}

We use the scalar convention
\begin{equation}
\label{eq:Aq-G0}
        \cG_0=\ctG_0=-(q-q^{-1})[2]_q^2.
\end{equation}
Both symbols in \eqref{eq:Aq-G0} denote this scalar; they are not
additional generators.

Terwilliger proved that $Z(\Aq)$ is a polynomial algebra in countably
many variables and that there is an algebra isomorphism
\begin{equation}
\label{eq:standard-factorisation-intro}
        \Aq\simeq\Oq\otimes\F[z_1,z_2,\ldots].
\end{equation}
See \cite{Terwilliger2021ACE,Terwilliger2022OqACE}.  Under this
factorisation, the canonical epimorphism
$\gamma:\Aq\longrightarrow\Oq$ corresponds to evaluation at
$z_1=z_2=\cdots=0$.  The alternating generators of $\Oq$ are the
images
\begin{equation*}
\begin{aligned}
 W_{-n}&=\gamma(\cW_{-n}),&
 W_{n+1}&=\gamma(\cW_{n+1}),\\
 G_{n+1}&=\gamma(\cG_{n+1}),&
 \widetilde G_{n+1}&=\gamma(\ctG_{n+1})
 \qquad(n\in\N).
\end{aligned}
\end{equation*}
We use calligraphic symbols for elements of $\Aq$, and the corresponding
plain symbols for their images in $\Oq$.

\subsection{The word-length filtration}

For $d\in\N$, let $F_d\Oq$ be the $\F$-subspace spanned by the
images of the words in $W_0,W_1$ of length at most $d$, and put
$F_{-1}\Oq=0$.  Then
$F_r\Oq\,F_s\Oq\subseteq F_{r+s}\Oq$ and
$\Oq=\bigcup_{d\geqslant 0}F_d\Oq$.  We call this the
\textbf{word-length filtration}.  Its associated graded algebra is
\[
        \gr\Oq
        =
        \bigoplus_{d\geqslant 0}F_d\Oq/F_{d-1}\Oq.
\]

For $u\in F_d\Oq$, write
$[u]_d=u+F_{d-1}\Oq\in F_d\Oq/F_{d-1}\Oq$.  When $d$ is clear,
we may write $\overline u$ in place of $[u]_d$.  For
$0\neq u\in\Oq$, put
\[
        \deg_F u=\min\{d\in\N\mid u\in F_d\Oq\}
\]
and call $\init(u)=[u]_{\deg_F u}$ the \textbf{initial form} of $u$.

Let $\Uplus$ be the $\F$-algebra generated by $A,B$, subject to the
cubic $q$-Serre relations
\begin{align*}
A^3B-[3]_qA^2BA+[3]_qABA^2-BA^3&=0,\\
B^3A-[3]_qB^2AB+[3]_qBAB^2-AB^3&=0.
\end{align*}
These are the highest homogeneous components of the
$q$-Dolan--Grady relations.  Terwilliger proved that $\Uplus$ is
precisely the associated graded algebra of $\Oq$
\cite[Theorem~4.4]{Terwilliger2017PositivePart}.

\begin{theorem}
\label{thm:gr-isomorphism}
There is an algebra isomorphism
\[
 \psi:\Uplus\longrightarrow\gr\Oq,
 \qquad
 \psi(A)=[W_0]_1,
 \qquad
 \psi(B)=[W_1]_1.
\]
\end{theorem}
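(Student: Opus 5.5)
The argument splits into a routine surjectivity half and an injectivity half, which carries the real content.

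\emph{Well-definedness and surjectivity.} The classes $[W_0]_1,[W_1]_1$ lie in $F_1\Oq/F_0\Oq$. For any word $w$ of length $4$, $[w]_4$ depends only on the word, since $F_4\Oq F_r\Oq\subseteq F_{4+r}$. Read in $F_4\Oq/F_3\Oq$, each left-hand side of \eqref{eq:qDG0}, \eqref{eq:qDG1} equals a right-hand side lying in $F_2\Oq\subseteq F_3\Oq$. Hence the $q$-Serre combinations of $[W_0]_1,[W_1]_1$ vanish in $\gr\Oq$, and $\psi$ is a well-defined algebra map. Every graded piece $F_d\Oq/F_{d-1}\Oq$ is spanned by classes of words of length exactly $d$, and these are products of degree-one classes. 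So $\psi$ is surjective.

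\emph{Injectivity.} Both algebras are $\N$-graded, with $\psi$ preserving degree; $\Uplus$ is moreover $\N^2$-graded by $(\deg_A,\deg_B)$. It therefore suffices to show $\dim F_d\Oq/F_{d-1}\Oq\geqslant\dim\Uplus_d$ for each $d$, which amounts to finding a spanning family of $\Uplus_d$ whose image under $\psi$ is linearly independent. I would use the following route.
\begin{enumerate}[label=(\roman*)]
\item Take a PBW-type basis of $\Uplus$: Damiani's root-vector basis, or Lyndon-word/Gröbner normal forms for the two cubic Serre relations. Over a general field with $q$ not a root of unity, the cleanest choice is the standard monomials of a Gröbner basis. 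Specifically, I would show that the $q$-Serre relations together with finitely describable consequences form a Gröbner basis for deg-lex order with $A<B$. The leading words are then those containing certain forbidden subwords.
\item Run the same rewriting in $\Oq$. The $q$-Dolan--Grady relations have the same leading terms, plus lower-degree tails. This is a nonhomogeneous rewriting system whose homogenisation is the Serre system.
\item By the standard Bergman diamond lemma argument, the overlap ambiguities resolve in the inhomogeneous system exactly when they resolve in the homogeneous one modulo lower-degree terms. Moreover the lower-degree discrepancies must also resolve.
\item Once resolution holds, the irreducible words form a basis of $\Oq$. Those of length at most $d$ form a basis of $F_d\Oq$, which gives the equality of dimensions.
\end{enumerate}

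\emph{Main obstacle.} The key difficulty is step (iii). Homogeneous confluence does not automatically imply confluence of the deformed system, because lower-order corrections might produce new relations and collapse the filtration. For this step I would first try to avoid an explicit Gröbner computation by exhibiting a faithful-enough representation or a larger algebra. One option is the injection of $\Oq$ as a coideal subalgebra $B_c\subseteq U_q(\widehat{\mathfrak{sl}}_2)$, via $W_0\mapsto$ a suitable scaling of $E_0+c\,F_0K_0^{-1}+\ldots$. By Kolb/Letzter, the coideal subalgebra carries a filtration whose associated graded algebra embeds into $U^+\otimes(\text{torus})$, with leading terms $E_0,E_1$. This gives a map $\gr\Oq\to\gr B_c\supseteq\Uplus$ sending $[W_i]_1\mapsto E_i$. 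Composing with $\psi$ yields the identity on $\Uplus$, so $\psi$ is injective. Verifying that the Letzter--Kolb filtration is compatible with the word-length filtration, and that the presentation of $B_c$ really is $\Oq$ over an arbitrary field, is where I expect the real work. Failing that, I would fall back on the direct diamond-lemma check of the finitely many overlaps of the two quartic leading words.
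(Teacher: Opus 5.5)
\paragraph{Where the proposal falls short.} The paper does not prove this statement itself; it cites Terwilliger \cite[Theorem~4.4]{Terwilliger2017PositivePart}. Your first half, well-definedness and surjectivity of $\psi$, is correct.

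The gap is injectivity. By your own account it is the whole content of the theorem, and you leave it as a choice between two unfinished plans.

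The Gr\"obner fallback cannot work as you describe it. For deg-lex with $A<B$, the leading words of the two Serre relations are $BA^3$ and $B^3A$. Their only ambiguity is $B^3A^3$, and it does not resolve even in $\Uplus$: there are $15$ words of bidegree $(3,3)$ avoiding $BA^3$ and $B^3A$, while $\dim\Uplus_{3\delta}=14$ (the number of ways to write $3\delta$ as a sum of positive roots). In fact no admissible monomial order gives a finite Gr\"obner basis, because the Hilbert series $\prod_{m\geqslant1}(1-t^{2m-1})^{-2}(1-t^{2m})^{-1}$ is not rational. So there is an infinite homogeneous completion that you have not produced. Step (iii) is also not a consequence of the diamond lemma. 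The claim that every lifted ambiguity resolves in $\Oq$, lower-degree discrepancies included, is just a restatement of the injectivity of $\psi$.

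\paragraph{How to close the gap.} The coideal route is the right idea, but you have misjudged what it needs. As you phrase it, you start from an injection $\Oq\hookrightarrow B_c$, that is, the Letzter--Kolb presentation over $\F$. That presupposes something equivalent to what you want to prove. Only the easy direction is needed.
\begin{enumerate}
\item[(1)] Take an algebra map $\phi:\Oq\to U_q(\widehat{\mathfrak{sl}}_2)$ with $\phi(W_i)=E_i+(\text{terms with no }E)$. For instance, use $W_i\mapsto E_i-c_iF_iK_i$, the Chevalley-involution image of Kolb's generators $F_i-c_iE_iK_i^{-1}$, with $c_i\in\F^\times$ chosen so that the constant is $(q^2-q^{-2})^2$. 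Checking the two $q$-Dolan--Grady relations for these images is a finite computation from the defining relations.
\item[(2)] Filter $U_q(\widehat{\mathfrak{sl}}_2)$ by the number of $E$'s; this is an algebra filtration. Since $E_iF_j-F_jE_i$ contains no $E$, the triangular decomposition $U\simeq U^+\otimes U^0\otimes U^-$ applies. It holds over any $\F$ with $q$ not a root of unity, and $U^+\simeq\Uplus$. It gives $\gr_nU\simeq U^+_n\otimes U^0\otimes U^-$. Hence $A\mapsto[E_0]_1$, $B\mapsto[E_1]_1$ defines an injective map $\Uplus\to\gr U$.
\item[(3)] Since $\phi(F_d\Oq)$ lies in the $d$-th filtered piece, $(\gr\phi)\circ\psi$ is exactly that injective map, so $\psi$ is injective. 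Injectivity of $\phi$ then comes out as a by-product rather than an input.
\end{enumerate}
With (1)--(3) written out, you would have a complete proof that does not depend on the cited one. As it stands, the proposal stops exactly at its key step.
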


From now on, we identify $\gr\Oq$ with $\Uplus$ through $\psi$.

\subsection{Filtered lifting}

We use the following convention for PBW bases.

\begin{definition}
Let $R$ be an $\F$-algebra, let $\Omega\subseteq R$, and let $<$
be a total order on $\Omega$.  We say that $\Omega$, with the order
$<$, gives \textbf{a PBW basis} of $R$ if the monomials
\[
        a_1a_2\cdots a_m,
        \qquad
        m\in\N,\qquad
        a_1\leqslant a_2\leqslant \cdots\leqslant a_m,
\]
form an $\F$-basis of $R$.  The monomial with $m=0$ is interpreted
as $1$.
\end{definition}

The following lifting lemma will be used repeatedly.

\begin{lemma}
\label{lem:filtered-lifting}
Let $R$ be an $\F$-algebra with an exhaustive algebra filtration
\[
        0=F_{-1}R\subseteq F_0R\subseteq F_1R\subseteq\cdots,
        \qquad
        F_rR\,F_sR\subseteq F_{r+s}R.
\]
Let $\{u_\lambda\}_{\lambda\in\Lambda}$ be elements of positive
filtration degree, say
\[
        u_\lambda\in
        F_{d_\lambda}R\setminus F_{d_\lambda-1}R,
        \qquad
        d_\lambda>0.
\]
Suppose that
\[
        \init(u_\lambda)=c_\lambda v_\lambda,
        \qquad
        c_\lambda\in\F^\times,
\]
where $v_\lambda\in\gr R$ is homogeneous of degree $d_\lambda$.
Fix a total order on $\Lambda$.  If the ordered monomials in the
$v_\lambda$ form a basis of $\gr R$, then the corresponding ordered
monomials in the $u_\lambda$ form a basis of $R$.
\end{lemma}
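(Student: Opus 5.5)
The plan is to prove spanning and linear independence separately, and in both halves to compare a monomial in the $u_\lambda$ with the corresponding monomial in the $v_\lambda$ through initial forms. For an ordered multiset $\lambda_1\leqslant\cdots\leqslant\lambda_m$, write $u_{\underline\lambda}=u_{\lambda_1}\cdots u_{\lambda_m}$, $v_{\underline\lambda}=v_{\lambda_1}\cdots v_{\lambda_m}$ and $d(\underline\lambda)=\sum d_{\lambda_i}$. Since the filtration is multiplicative, $u_{\underline\lambda}\in F_{d(\underline\lambda)}R$. Its class in $F_{d(\underline\lambda)}R/F_{d(\underline\lambda)-1}R$ is the product of the classes, so
\[
[u_{\underline\lambda}]_{d(\underline\lambda)}=c_{\lambda_1}\cdots c_{\lambda_m}\,v_{\underline\lambda},
\]
with a non-zero scalar in front. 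This identity is the only input needed.

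For linear independence, I will suppose that $\sum a_{\underline\lambda}u_{\underline\lambda}=0$ is a non-trivial finite relation, and let $D$ be the maximum of $d(\underline\lambda)$ over the terms with $a_{\underline\lambda}\neq0$. Projecting the relation to $F_DR/F_{D-1}R$ kills the terms with $d(\underline\lambda)<D$. What remains is $\sum_{d(\underline\lambda)=D}a_{\underline\lambda}c_{\underline\lambda}v_{\underline\lambda}=0$, where $c_{\underline\lambda}=c_{\lambda_1}\cdots c_{\lambda_m}$. The $v_{\underline\lambda}$ are distinct ordered monomials, hence part of a basis of $\gr R$, so all these coefficients vanish, which contradicts the choice of $D$.

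For spanning, I will show by induction on $d$ that $F_dR$ lies in the span of the $u_{\underline\lambda}$ with $d(\underline\lambda)\leqslant d$; the case $d=-1$ is trivial. Given $x\in F_dR$, expand $[x]_d$ in the basis of ordered monomials. Ordered monomials are homogeneous of degree $d(\underline\lambda)$, and $[x]_d$ is homogeneous of degree $d$, so only monomials with $d(\underline\lambda)=d$ occur: $[x]_d=\sum b_{\underline\lambda}v_{\underline\lambda}$. Then $x-\sum b_{\underline\lambda}c_{\underline\lambda}^{-1}u_{\underline\lambda}$ lies in $F_{d-1}R$, and the induction hypothesis applies; exhaustiveness then gives all of $R$.

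The one point needing care is this homogeneity argument in the spanning step. It uses $d_\lambda>0$ only mildly, to make the finitely many relevant monomials well defined. Beyond that, I expect no real obstacle, since the lemma is purely formal.
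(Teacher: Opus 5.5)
Your proposal is correct and follows the paper's proof essentially step for step. It uses the identity $[u_{\underline\lambda}]_{d(\underline\lambda)}=c_{\underline\lambda}v_{\underline\lambda}$, then linear independence by projecting onto the top degree $D$, then spanning by induction on filtration degree. You also spell out why only degree-$d$ monomials appear in the expansion of $[x]_d$, which the paper asserts without comment; note too that the hypothesis $d_\lambda>0$ is not actually needed anywhere in this argument.
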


\begin{proof}
For an ordered monomial $M$, let $M(u)$ and $M(v)$ denote the
corresponding products in the $u_\lambda$ and $v_\lambda$,
respectively.  Write $e_\lambda(M)$ for the multiplicity of
$u_\lambda$ in $M(u)$, and put $d(M)=\sum_\lambda e_\lambda(M)d_\lambda$.
The image of $M(u)$ in
$F_{d(M)}R/F_{d(M)-1}R$ is
\[
        \left(
        \prod_\lambda c_\lambda^{e_\lambda(M)}
        \right)M(v).
\]
This element is non-zero, since $M(v)$ is a basis element of
$\gr R$.  Hence $M(u)$ has filtration degree $d(M)$, and
\begin{equation}
\label{eq:filtered-lifting-monomial}
 \init(M(u))
 =
 \left(
 \prod_\lambda c_\lambda^{e_\lambda(M)}
 \right)M(v).
\end{equation}

Suppose that $\sum_M a_MM(u)=0$ is a non-trivial finite relation among distinct ordered monomials.  Choose
$D$ maximal such that $a_M\neq 0$ for some $M$ with $d(M)=D$.
Taking the degree-$D$ component and using
\eqref{eq:filtered-lifting-monomial} gives
\[
 \sum_{d(M)=D}
 a_M
 \left(
 \prod_\lambda c_\lambda^{e_\lambda(M)}
 \right)M(v)=0.
\]
The monomials $M(v)$ are distinct basis elements and all displayed
scalars are non-zero.  Thus $a_M=0$ whenever $d(M)=D$, contradicting
the choice of $D$.  The monomials $M(u)$ are therefore linearly
independent.

It remains to prove that they span $R$.  We argue by induction on $D$.
Let $r\in F_DR$.  Its image in
$F_DR/F_{D-1}R$ is a finite linear combination of ordered monomials
$M(v)$ of degree $D$.  By
\eqref{eq:filtered-lifting-monomial}, we may subtract from $r$ a linear
combination of the corresponding monomials $M(u)$ so that the difference
belongs to $F_{D-1}R$.  The induction hypothesis completes the proof.
\end{proof}

\begin{remark}
Lemma~\ref{lem:filtered-lifting} is an initial-form version of
\cite[Proposition~4.6]{Terwilliger2017PositivePart}.  The formulation
above does not require the elements $u_\lambda$ to be specified in
advance by homogeneous words in a free algebra.
\end{remark}

For a subset $S$ of an algebra $R$, write
$\Cent_R(S)=\{x\in R\mid xs=sx\text{ for every }s\in S\}$.  We shall
also use the following centraliser version of filtered lifting.

\begin{lemma}
\label{lem:filtered-centraliser}
Let $R=\bigcup_{d\geqslant 0}F_dR$ be an exhaustively filtered algebra, with
$F_{-1}R=0$, and let $H\subseteq R$ be a commutative subalgebra with
the induced filtration $H\cap F_dR$.  Regard
\[
 \gr H
 =
 \bigoplus_{d\geqslant 0}
 \frac{H\cap F_dR}{H\cap F_{d-1}R}
\]
as a graded subalgebra of $\gr R$.  If $\Cent_{\gr R}(\gr H)=\gr H$, then
\[
        \Cent_R(H)=H.
\]
\end{lemma}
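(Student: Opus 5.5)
The plan is a standard initial-form argument.  Since $H$ is commutative, $H\subseteq\Cent_R(H)$, so only the reverse inclusion needs proof.  Let $x\in\Cent_R(H)$ with $x\neq 0$.  We argue by induction on $d=\deg_F x$ that $x\in H$.  The case $x=0$ is trivial, and it is convenient to allow $d=0$ in the induction.

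The key step is to show that $\init(x)\in\Cent_{\gr R}(\gr H)$.  Because $\gr H$ is spanned by the homogeneous classes $[h]_e$ with $h\in H\cap F_eR$, it suffices to check that $\init(x)$ commutes with each such class.  For $h\in H\cap F_eR$, the product $[x]_d[h]_e$ in $\gr R$ is the image of $xh$ in $F_{d+e}R/F_{d+e-1}R$, and $[h]_e[x]_d$ is the image of $hx$.  Since $xh=hx$, these images agree, so $[x]_d[h]_e=[h]_e[x]_d$.  Note that this requires no hypothesis that $[h]_e$ is non-zero, and none that the filtration degree of $h$ equals $e$.  Hence $\init(x)=[x]_d$ lies in $\Cent_{\gr R}(\gr H)=\gr H$.

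We now lift.  The homogeneous degree-$d$ component of $\gr H$ is $(H\cap F_dR)/(H\cap F_{d-1}R)$, embedded in $F_dR/F_{d-1}R$; the embedding is injective because $(H\cap F_dR)\cap F_{d-1}R=H\cap F_{d-1}R$.  So there exists $h\in H\cap F_dR$ with $[h]_d=[x]_d$, which means $x-h\in F_{d-1}R$.  The element $x-h$ still commutes with $H$, since both $x$ and $h$ do; here $h$ commutes with $H$ because $H$ is commutative.  Either $x-h=0$, or $\deg_F(x-h)<d$.  In the second case the induction hypothesis gives $x-h\in H$, and hence $x\in H$.  The base case is covered by the same argument: if $d=0$ then $x-h\in F_{-1}R=0$.

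The main point requiring care is the identification of $\gr H$ with a graded subalgebra of $\gr R$.  In particular, we must confirm that the homogeneous pieces match, so that a homogeneous element of $\gr H$ of degree $d$ in $\gr R$ is represented by an element of $H\cap F_dR$.  This is exactly what the lifting step uses.  Beyond that, the argument is routine and needs no assumption on the field or on $q$.
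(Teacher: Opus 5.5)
Your proposal is correct and follows the paper's proof essentially step for step. Both arguments show that $\init(x)$ commutes with every homogeneous element of $\gr H$ because $xh=hx$, lift $\init(x)$ to some $h\in H\cap F_dR$, and induct on the filtration degree of $x-h$. Your explicit checks — that $(H\cap F_dR)/(H\cap F_{d-1}R)$ embeds in $F_dR/F_{d-1}R$, and that no exact-degree assumption on $h$ is needed — make precise points the paper leaves implicit.
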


\begin{proof}
Since $H$ is commutative, $H\subseteq\Cent_R(H)$.
Conversely, let $0\neq x\in\Cent_R(H)$, and put
$D=\deg_F x$.  Let $\eta$ be a homogeneous element of $\gr H$, and
choose $h\in H$ such that $\init(h)=\eta$.  Since $[x,h]=0$, its
highest possible homogeneous component gives $[\init(x),\eta]=0$.
Thus
\[
        \init(x)\in
        \Cent_{\gr R}(\gr H)
        =
        \gr H.
\]
There is therefore an element $y\in H\cap F_DR$ satisfying $\init(y)=\init(x)$.
It follows that
\[
        x-y\in F_{D-1}R.
\]
Moreover, $x-y$ still centralises $H$.  Induction on $D$ gives
$x-y\in H$, and hence $x\in H$.
\end{proof}

\subsection{Symmetries and notation}

We shall use three standard symmetries of $\Oq$.  There is an involutive
automorphism $\sigma$, an involutive antiautomorphism $\dagger$, and
an involutive antiautomorphism $\tau$, satisfying, for $n\in\N$,
\begin{align*}
\sigma:&\quad
 W_{-n}\leftrightarrow W_{n+1},
 \qquad
 G_{n+1}\leftrightarrow\widetilde G_{n+1},\\
\dagger:&\quad
 W_{-n}\mapsto W_{-n},
 \qquad
 W_{n+1}\mapsto W_{n+1},
 \qquad
 G_{n+1}\leftrightarrow\widetilde G_{n+1},\\
\tau:&\quad
 W_{-n}\leftrightarrow W_{n+1},
 \qquad
 G_{n+1}\mapsto G_{n+1},
 \qquad
 \widetilde G_{n+1}\mapsto\widetilde G_{n+1}.
\end{align*}
See
\cite[Definitions~3.4--3.6 and Theorem~11.6]
{Terwilliger2022OqACE}.  Each of these maps fixes or interchanges
$W_0,W_1$, and hence preserves the word-length filtration.

From Section~\ref{sec:qshuffle} onwards, the superscript $\sh$ is
reserved for elements of the $q$-shuffle algebra.  Thus
$\cW_{-n}\in\Aq$, $W_{-n}\in\Oq$, and
$W_{-n}^{\sh}\in\Vsh$ are distinguished by their notation.

\section{The \texorpdfstring{$q$}{q}-shuffle algebra}
\label{sec:qshuffle}

In this section, we recall the $q$-shuffle realisation of $\Uplus$, the
alternating words, the Catalan elements, and Damiani's root vectors, using
the normalisations required below.

\subsection{The \texorpdfstring{$q$}{q}-shuffle realisation}

Let $\Vsh$ be the $\F$-vector space with basis the set of all words in
the letters $x,y$, including the empty word $1$.  When words are written
without the symbol $\star$, juxtaposition denotes concatenation.

Let $\langle\, ,\,\rangle$ be the symmetric bilinear form on the free
abelian group $\Z x\oplus\Z y$ determined by
\[
 \langle x,x\rangle=\langle y,y\rangle=2,
 \qquad
 \langle x,y\rangle=\langle y,x\rangle=-2.
\]
The $q$-shuffle product is the unique bilinear product $\star$ on
$\Vsh$ such that $1\star v=v\star1=v$ for $v\in\Vsh$, and, for
non-empty words
$u=u_1\cdots u_r$ and $v=v_1\cdots v_s$,
\begin{equation}\label{eq:qshuffle-recursion}
\begin{aligned}
u\star v
={}
u_1\bigl((u_2\cdots u_r)\star v\bigr)+
v_1\bigl(u\star(v_2\cdots v_s)\bigr)
q^{\sum_{i=1}^{r}\langle u_i,v_1\rangle}.
\end{aligned}
\end{equation}
An empty subword in \eqref{eq:qshuffle-recursion} is interpreted as $1$.
The product $\star$ is associative; see
\cite{Rosso1998} and
\cite[Section~4]{Terwilliger2019Alternating}.  For example,
\[
        x\star y=xy+q^{-2}yx,
        \qquad
        y\star x=yx+q^{-2}xy.
\]

Let $U$ denote the subalgebra of $(\Vsh,\star)$ generated by $x,y$.
Rosso's $q$-shuffle embedding gives an algebra isomorphism
\[
        \iota:\Uplus\xrightarrow{\ \sim\ }U,
        \qquad
        A\longmapsto x,\qquad
        B\longmapsto y;
\]
see \cite{Rosso1998} and
\cite[Definition~4.2 and the paragraph following
equations~(14)--(15)]{Terwilliger2019Alternating}.
We henceforth identify $\Uplus$ with $U$.  Together with
Theorem~\ref{thm:gr-isomorphism}, this allows us to regard initial forms in
$\gr\Oq$ as elements of the $q$-shuffle algebra.

We record the root grading that will be used in
Section~\ref{sec:specialisation}.  Put
$Q_+=\N\alpha_0+\N\alpha_1$ and
$\delta=\alpha_0+\alpha_1$.
The set of positive roots of
$\widehat{\mathfrak{sl}}_2$ is
\[
 \Phi_+
 =
 \{n\delta+\alpha_0\mid n\in\N\}
 \,\sqcup\,
 \{m\delta\mid m\geqslant 1\}
 \,\sqcup\,
 \{n\delta+\alpha_1\mid n\in\N\}.
\]
The algebra $\Uplus$ is $Q_+$-graded by
$\deg A=\alpha_0$ and $\deg B=\alpha_1$.  Under the $q$-shuffle
realisation, $\deg x=\alpha_0$ and $\deg y=\alpha_1$; thus a word
containing $a$ copies of $x$ and $b$ copies of $y$ has degree
$a\alpha_0+b\alpha_1$.

\subsection{Alternating words}

For $n\in\N$, define
\begin{equation}
\label{eq:alternating-words}
 W_{-n}^{\sh}=x(yx)^n,\qquad
 W_{n+1}^{\sh}=y(xy)^n,\qquad
 G_n^{\sh}=(yx)^n,\qquad
 \widetilde G_n^{\sh}=(xy)^n.
\end{equation}
The powers in \eqref{eq:alternating-words} are concatenation powers.  In particular,
\[
 W_0^{\sh}=x,\qquad W_1^{\sh}=y,\qquad
 G_0^{\sh}=\widetilde G_0^{\sh}=1,\qquad
 G_1^{\sh}=yx,\qquad \widetilde G_1^{\sh}=xy.
\]
These are the alternating words of
\cite[Definitions~5.1 and~5.2]{Terwilliger2019Alternating}; the
superscript $\sh$ distinguishes them from the alternating generators of
$\Oq$.

Their degrees are
\begin{align*}
\deg W_{-n}^{\sh}&=n\delta+\alpha_0,&
\deg W_{n+1}^{\sh}&=n\delta+\alpha_1,\\
\deg G_n^{\sh}&=n\delta,&
\deg\widetilde G_n^{\sh}&=n\delta.
\end{align*}

For a word $v=v_1v_2\cdots v_r$, put
$\operatorname{rev}(v)=v_rv_{r-1}\cdots v_1$, and extend this map
linearly to $\Vsh$.  Word reversal is an involutive antiautomorphism of
$(\Vsh,\star)$.  Moreover,
\[
 \operatorname{rev}(W_{-n}^{\sh})=W_{-n}^{\sh},
 \qquad
 \operatorname{rev}(W_{n+1}^{\sh})=W_{n+1}^{\sh},\qquad
 \operatorname{rev}(G_n^{\sh})=\widetilde G_n^{\sh},
 \qquad
 \operatorname{rev}(\widetilde G_n^{\sh})=G_n^{\sh}.
\]
See
\cite[Lemmas~4.1 and~5.3]{Terwilliger2019Alternating}.

Whenever a commutator or $q$-commutator is taken in $\Vsh$, its
products are understood to be $q$-shuffle products.  For $n\geqslant 1$,
\cite[Proposition~5.7, equations~(38)--(39)]
{Terwilliger2019Alternating} gives
\begin{align}
\qcomm{\widetilde G_n^{\sh}}{x}
 &=(q-q^{-1})W_{-n}^{\sh},
\label{eq:qshuffle-Wminus}\\
\qcomm{y}{\widetilde G_n^{\sh}}
 &=(q-q^{-1})W_{n+1}^{\sh}.
\label{eq:qshuffle-Wplus}
\end{align}

The following result is
\cite[Theorems~10.1 and~10.2]{Terwilliger2019Alternating}.

\begin{theorem}
\label{thm:qshuffle-twelve}
The three families
\[
 \{W_{-i}^{\sh}\}_{i\in\N},\qquad
 \{\widetilde G_{j+1}^{\sh}\}_{j\in\N},\qquad
 \{W_{k+1}^{\sh}\}_{k\in\N}
\]
give a PBW basis of $\Uplus$, viewed inside
$(\Vsh,\star)$, for every total order satisfying one of the six block
conditions
\begin{equation}
\label{eq:six-shuffle-orders}
\begin{array}{lll}
W^-<\widetilde G<W^+,&
W^+<\widetilde G<W^-,&
W^+<W^-<\widetilde G,\\[1mm]
W^-<W^+<\widetilde G,&
\widetilde G<W^+<W^-,&
\widetilde G<W^-<W^+.
\end{array}
\end{equation}
No condition is imposed on the order inside a single block.  The same
statement holds with $\widetilde G_{j+1}^{\sh}$ replaced by
$G_{j+1}^{\sh}$ for every $j\in\N$.
\end{theorem}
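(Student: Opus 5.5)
The statement is cited as Terwilliger's Theorems~10.1 and~10.2, but a self-contained argument would run as follows.

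The plan is a Hilbert-series-plus-spanning (or linear-independence) argument in each fixed $Q_+$-degree. By the PBW theorem for $\Uplus$ (Damiani, or Beck's convex PBW basis), the graded dimension of $U$ is
\[
\prod_{n\geqslant 0}(1-t^{n\delta+\alpha_0})^{-1}\prod_{m\geqslant 1}(1-t^{m\delta})^{-1}\prod_{n\geqslant 0}(1-t^{n\delta+\alpha_1})^{-1}.
\]
The three alternating families have exactly one element in each positive root degree, so for any order the ordered monomials in degree $\beta$ are equinumerous with a basis of $U_\beta$. It therefore suffices to prove either spanning or linear independence for each of the six block orders. Only three orders need to be treated directly. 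Word reversal $\operatorname{rev}$ is an antiautomorphism fixing $W^{\pm,\sh}$ and exchanging $G^{\sh}$ with $\widetilde G^{\sh}$; composed with reversal of the order, it carries the statement for $\widetilde G$ in one block order to the statement for $G$ in the reversed order. Reversing the order also pairs the six conditions as $(W^-<\widetilde G<W^+)\leftrightarrow(W^+<\widetilde G<W^-)$, and similarly for the others.

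First I would establish the commutation relations among the alternating words inside $(\Vsh,\star)$. These are the graded analogues of \eqref{eq:Aq1}--\eqref{eq:Aq11} with $\rho_0$ replaced by $0$, together with \eqref{eq:qshuffle-Wminus}--\eqref{eq:qshuffle-Wplus}. In particular, $W^{-,\sh}$, $W^{+,\sh}$ and $\widetilde G^{\sh}$ each generate commutative subalgebras, and mixed commutators such as $[W_{-k}^{\sh},W_{\ell+1}^{\sh}]$ are expressible through $\widetilde G^{\sh}$ and $G^{\sh}$. One can also write $G_n^{\sh}$ as a polynomial in $\widetilde G^{\sh}$ corrected by products $W^{+,\sh}W^{-,\sh}$; this is the shuffle version of the Terwilliger relation between $G$ and $\widetilde G$. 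With these in hand, a straightening argument shows that ordered monomials span in each degree. For instance, for $W^-<\widetilde G<W^+$, any out-of-order product $W^+\cdot W^-$ or $W^+\cdot\widetilde G$ or $\widetilde G\cdot W^-$ is rewritten as the swapped product plus terms with fewer inversions. Termination would be controlled by induction on the number of $W^\pm$ factors, then on inversions. Since generation of $U$ by $x=W_0^{\sh}$ and $y=W_1^{\sh}$ is immediate, spanning plus the dimension count gives the basis.

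The main obstacle is the straightening step. The commutators produced by \eqref{eq:Aq5}--\eqref{eq:Aq9} are only symmetrised relations, not direct expressions for individual commutators such as $[W_{-k}^{\sh},W_{\ell+1}^{\sh}]$. To isolate these, I would use the generating-function form of the relations, as in Terwilliger's treatment, and express each swap via $\widetilde G$ and $G$, then eliminate $G$. If straightening proves unwieldy, I would instead prove linear independence directly. The idea is to order the words of $\Vsh$ lexicographically and show that each ordered monomial has a distinct leading word with non-zero coefficient, a product of $q$-integers that is non-zero by Lemma~\ref{lem:nonvanishing-scalars}. This is plausible because each alternating word is a single word whose shuffle products have computable extremal terms.
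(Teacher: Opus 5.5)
The paper gives no proof of this statement; it quotes it from Terwilliger's Theorems~10.1 and~10.2. Your outline therefore has to stand on its own.

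\emph{What is sound.} Your reduction is correct. Each positive root is the degree of exactly one alternating word, so by Damiani's basis the ordered monomials in any $Q_+$-degree $\nu$ number exactly $\dim U_\nu$. Hence spanning alone suffices, and the order inside a commutative block is irrelevant.

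\emph{Two side claims are wrong.} First, word reversal sends a $\widetilde G$-statement to a $G$-statement. On its own it cuts the twelve statements to six, not three. Adding the letter swap $x\leftrightarrow y$, which is an automorphism of $(\Vsh,\star)$ exchanging $W^-\leftrightarrow W^+$ and $G\leftrightarrow\widetilde G$, still leaves four orbits. Second, the graded relations are not those of Definition~\ref{def:Aq} with $\rho_0=0$. That would force $\qcomm{\widetilde G_n^{\sh}}{x}=0$, contradicting \eqref{eq:qshuffle-Wminus}; only the lower-degree terms drop.

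\emph{The genuine gap is the spanning step.} It carries the whole content of the theorem, and you leave it open. Your termination measure (number of $W^{\pm}$ factors, then inversions) does not decrease under the rewriting you describe.

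To swap $W_{k+1}^{\sh}\star W_0^{\sh}$ you need the graded form of \eqref{eq:Aq1}, namely $[W_0^{\sh},W_{k+1}^{\sh}]=(1-q^{-2})(\widetilde G_{k+1}^{\sh}-G_{k+1}^{\sh})$. This creates $G_{k+1}^{\sh}$. Eliminating it through the same relation is circular. Any other expression of $G_{k+1}^{\sh}=(yx)^{k+1}$ in the three families must contain terms with at least one $W^-$ and one $W^+$ factor. The reason is that every word occurring in $\F[\widetilde G_1^{\sh},\widetilde G_2^{\sh},\ldots]$ begins with $x$, and the degree is $(k+1)\delta$. So the number of $W^{\pm}$ factors does not drop. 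With the correction by products $W^+W^-$ that you propose, the inversion count does not drop either.

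The crossings $\widetilde G\cdot W^-$ and $W^+\cdot\widetilde G$ are no better. From \eqref{eq:qshuffle-Wminus} and the commutativity of the $\widetilde G_n^{\sh}$ one only obtains, for $i,j\geqslant 1$,
\[
 q\,\widetilde G_i^{\sh}\star W_{-j}^{\sh}-q^{-1}W_{-j}^{\sh}\star\widetilde G_i^{\sh}
 =q\,\widetilde G_j^{\sh}\star W_{-i}^{\sh}-q^{-1}W_{-i}^{\sh}\star\widetilde G_j^{\sh}.
\]
This exchanges the indices rather than straightening.

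What is missing is one of the following:
\begin{itemize}
\item a family of one-sided straightening identities, of the kind quoted from Terwilliger in the proof of Lemma~\ref{lem:mixed-negative-G-PBW}, together with a measure they provably decrease;
\item a triangular comparison with Damiani's basis.
\end{itemize}

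\emph{The leading-word fallback is only asserted, and the natural choices fail.} Take the lexicographic order with $x<y$. Least words do not separate $W_0^{\sh}\star W_1^{\sh}=xy+q^{-2}yx$ from $\widetilde G_1^{\sh}=xy$. Greatest words fail differently: $\widetilde G_1^{\sh}\star\widetilde G_1^{\sh}=2\,xyxy+[2]_q^2\,xxyy$ has leading coefficient $2$. This vanishes in characteristic $2$, which is allowed here, so leading coefficients are not automatically products of $q$-integers.
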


\subsection{Catalan elements and Damiani root vectors}

Set $\epsilon(x)=1$ and $\epsilon(y)=-1$.
For a word $v=v_1\cdots v_r$, define
\[
        \operatorname{ht}_i(v)
        =
        \sum_{h=1}^{i}\epsilon(v_h)
        \qquad(0\leqslant i\leqslant r),
\]
where $\operatorname{ht}_0(v)=0$.  The word $v$ is called
\textbf{Catalan} if
\[
        \operatorname{ht}_i(v)\geqslant 0
        \qquad(0\leqslant i\leqslant r),
        \qquad
        \operatorname{ht}_r(v)=0.
\]
The empty word is Catalan, and every non-empty Catalan word has even
length.

Let $\mathrm{Cat}_n$ denote the set of Catalan words of length $2n$.
For $n\in\N$, define
\begin{equation}
\label{eq:Catalan-definition}
 C_n
 =
 \sum_{v\in\mathrm{Cat}_n}
 v\prod_{j=0}^{2n}
 \bigl[1+\operatorname{ht}_j(v)\bigr]_q.
\end{equation}
Thus $C_0=1$ and $C_1=[2]_qxy$.  We use the normalisation of
\cite[Definition~1.5]{Terwilliger2019Catalan}; see also
\cite[Section~4]{Terwilliger2019Alternating}.  Every $C_n$ is
homogeneous of degree $n\delta$.

\textbf{Damiani's root vectors}
\cite{Damiani1993} are
\[
 \{E_{n\delta+\alpha_0}\}_{n\in\N},\qquad
 \{E_{n\delta+\alpha_1}\}_{n\in\N},\qquad
 \{E_{n\delta}\}_{n\geqslant 1}.
\]
In the normalisation of
\cite[Section~2, equations~(4)--(6)]
{Terwilliger2019Alternating}, they are defined by
\begin{align}
&E_{\alpha_0}=A,
\qquad\qquad E_{\alpha_1}=B,
\qquad \qquad E_\delta=q^{-2}BA-AB,
\label{eq:Damiani-initial}\\
&E_{n\delta+\alpha_0}
=\frac{\comm{E_\delta}{E_{(n-1)\delta+\alpha_0}}}{[2]_q},
\,\,\,\qquad \qquad E_{n\delta+\alpha_1}
=\frac{\comm{E_{(n-1)\delta+\alpha_1}}{E_\delta}}{[2]_q}
\qquad (n\geqslant 1),
\label{eq:Damiani-real}\\
&E_{n\delta}
=q^{-2}E_{(n-1)\delta+\alpha_1}A
-AE_{(n-1)\delta+\alpha_1}
\qquad (n\geqslant 1).
\label{eq:Damiani-imaginary}
\end{align}
The denominator $[2]_q$ is non-zero by
Lemma~\ref{lem:nonvanishing-scalars}.  Each root vector $E_\beta$ is
homogeneous of degree $\beta$.

The $q$-shuffle images of these root vectors are given by
\cite[Theorem~1.7]{Terwilliger2019Catalan}:
\begin{align}
\iota(E_{n\delta+\alpha_0})
 &=
 q^{-2n}(q-q^{-1})^{2n}xC_n
 \qquad(n\in\N),
\label{eq:Damiani-shuffle-0}\\
\iota(E_{n\delta+\alpha_1})
 &=
 q^{-2n}(q-q^{-1})^{2n}C_ny
 \qquad(n\in\N),
\label{eq:Damiani-shuffle-1}\\
\iota(E_{n\delta})
 &=
 -q^{-2n}(q-q^{-1})^{2n-1}C_n
 \qquad(n\geqslant 1).
\label{eq:Damiani-shuffle-delta}
\end{align}
In \eqref{eq:Damiani-shuffle-0} and
\eqref{eq:Damiani-shuffle-1}, the expressions $xC_n$ and $C_ny$
denote concatenation, not $q$-shuffle multiplication.

We shall use the following identity in
Section~\ref{sec:alternating-initial}.  It is
\cite[Theorem~11.14]{Terwilliger2019Alternating}.

\begin{theorem}
\label{thm:Catalan-alternating}
For every $n\in\N$,
\begin{equation}
\label{eq:Catalan-alternating}
        \sum_{i=0}^{n}
        (-1)^i[2n-i]_q\,
        C_i\star\widetilde G_{n-i}^{\sh}
        =0.
\end{equation}
\end{theorem}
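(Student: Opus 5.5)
Since both terms are homogeneous of degree $n\delta$, I would argue directly in $(\Vsh,\star)$ with generating functions and first-letter derivations. For $n=0$ the identity is trivial because $[0]_q=0$, so assume $n\geqslant 1$. I would first rewrite the coefficient using $[2n-i]_q(q-q^{-1})=q^{i}q^{2(n-i)}-q^{-i}q^{-2(n-i)}$. Put
\[
C(t)=\sum_{i\geqslant 0}C_it^i,
\qquad
\widetilde G(t)=\sum_{m\geqslant 0}\widetilde G_m^{\sh}t^m,
\]
as formal power series over $(\Vsh,\star)$. Then \eqref{eq:Catalan-alternating}, for all $n$ at once, is equivalent to the single identity
\[
C(-qt)\star\widetilde G(q^2t)=C(-q^{-1}t)\star\widetilde G(q^{-2}t).
\]
The two sides are the images of one series $\Phi(t)=C(-t)\star\widetilde G(t^2/t)$-type expression under $q\leftrightarrow q^{-1}$ in the twisting of $t$. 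So the task is to show that $C(-qt)\star\widetilde G(q^2t)$ is invariant under this rescaling.

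For the main verification I would use the left derivations $\partial_x,\partial_y$ of $\Vsh$. Here $\partial_a$ deletes a leading letter $a$ and kills words beginning with the other letter. The recursion \eqref{eq:qshuffle-recursion} gives the twisted Leibniz rule
\[
\partial_a(u\star v)=\partial_a(u)\star v+q^{\langle \operatorname{wt}u,a\rangle}\,u\star\partial_a(v).
\]
An element of degree $n\delta$ with $n\geqslant 1$ vanishes if and only if both $\partial_x$ and $\partial_y$ annihilate it. The letter-$y$ derivation is cheap. Every Catalan word begins with $x$, so $\partial_yC_i=0$, while $\partial_y\widetilde G_m^{\sh}=0$ because $(xy)^m$ also begins with $x$. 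Hence $\partial_y$ kills both sides, and only $\partial_x$ needs attention. Here $\partial_x\widetilde G_m^{\sh}=W_{m+1}^{\sh}=y(xy)^{m-1}$ (with $W_{m+1}^{\sh}$ as in \eqref{eq:alternating-words}), and $\partial_xC_i$ is an explicit weighted sum of words $w$ such that $xw$ is Catalan.

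The main obstacle is to get a usable formula for $\partial_xC_i$. The weights $\prod_j[1+\operatorname{ht}_j]_q$ do not restrict self-similarly after deleting the first letter. I would try first to avoid computing it directly. The plan is to express $\partial_xC_i$ through $C_{i-1}$ and $W^{\sh}$-type terms by a second derivation, this time on the right with $y$. That expression would then feed an induction on $n$: applying $\partial_x$ to the difference of the two sides and using the Leibniz rule, the required vanishing should reduce to the identity in degrees smaller than $n$ together with \eqref{eq:qshuffle-Wplus}.

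If this becomes unwieldy, the fallback is different. I would use \eqref{eq:Damiani-shuffle-delta} to identify $C(t)$ with the generating series of Damiani's imaginary root vectors. I would then use the known relation between the $E_{n\delta}$ and the $\widetilde G$-series, obtained from $E_{n\delta+\alpha_1}$ and \eqref{eq:qshuffle-Wplus}, and derive the twisted factorisation identity at the level of generating functions.
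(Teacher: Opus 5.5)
The paper does not prove this identity; it imports it as \cite[Theorem~11.14]{Terwilliger2019Alternating}, so your argument has to stand on its own, and as written it has a genuine gap. Several of your ingredients are sound: the generating-function reformulation $C(-qt)\star\widetilde G(q^2t)=C(-q^{-1}t)\star\widetilde G(q^{-2}t)$, the twisted Leibniz rule, the fact that an element of degree $n\delta$ ($n\geqslant1$) is determined by its images under $\partial_x,\partial_y$, and the vanishing of $\partial_y$ on all the relevant elements. There is one index slip: $\partial_x\widetilde G_m^{\sh}=y(xy)^{m-1}=W_m^{\sh}$, not $W_{m+1}^{\sh}$. But the proposal stops exactly where the work begins. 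No formula for $\partial_xC_i$ is given. The remark about ``invariance under $q\leftrightarrow q^{-1}$'' is not a usable tool over a fixed field, because $\star$ itself depends on $q$; the two sides are two different specialisations of $C(-a)\star\widetilde G(b)$, not one series and a rescaling of it. The fallback is circular: the relation between the $E_{n\delta}$ (equivalently the $C_n$) and the $\widetilde G$-series is precisely what is being proved. Compare how \eqref{eq:C-tG-polynomial-equality} is deduced from this theorem.

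More importantly, the induction you outline does not close even once the missing formula is supplied. A telescoping computation on the height weights, using $[h+1]_q(q^{h+1}+q^{-h-1})=[2h+2]_q$, gives
\[
\partial_xC_n=\frac{q^2\,C_{n-1}\star y-q^{-2}\,y\star C_{n-1}}{q-q^{-1}}\qquad(n\geqslant1).
\]
Also, \eqref{eq:qshuffle-Wplus} gives $\partial_x\widetilde G(t)=\frac{t}{q-q^{-1}}\bigl(q\,y\star\widetilde G(t)-q^{-1}\widetilde G(t)\star y\bigr)$. Put $X_\pm=C(-q^{\pm1}t)\star\widetilde G(q^{\pm2}t)$. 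The Leibniz rule then yields
\begin{align*}
\partial_xX_+&=\frac{t}{q-q^{-1}}\bigl(q^{-1}y\star X_+-q\,X_+\star y\bigr),\\
\partial_xX_-&=\frac{q^{-3}t}{q-q^{-1}}\bigl(y\star X_--X_-\star y\bigr)-t\,C(-q^{-1}t)\star y\star\widetilde G(q^{-2}t).
\end{align*}
So the two sides do not satisfy the same recursion. $X_-$ produces a term with $y$ trapped between the two factors, and the identity in lower degrees says nothing about such terms. Hence ``lower degrees plus \eqref{eq:qshuffle-Wplus}'' is not enough.

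The missing idea is to identify $X_+$ outright instead of comparing it with $X_-$ term by term. The first display, together with $\partial_yX_+=0$ and constant term $1$, determines $X_+$ uniquely. Differentiating $\widetilde G(t)^{-1}\star\widetilde G(t)=1$ shows that $\widetilde G(t)^{-1}$ satisfies the same system. Hence $X_+=\widetilde G(t)^{-1}$, that is, $C(-qt)=\widetilde G(t)^{-1}\star\widetilde G(q^2t)^{-1}$. Replacing $t$ by $q^{-2}t$ and using the pairwise commutativity of the $\widetilde G_n^{\sh}$ then gives $X_-=\widetilde G(t)^{-1}=X_+$.
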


\section{Damiani PBW bases in arbitrary order}
\label{sec:specialisation}

We prove that Damiani's root vectors give a PBW basis of $\Uplus$ for
every total order on $\Phi_+$.  The proof uses the explicit straightening
relations and is carried out directly over $\F$ under the standing
assumption that $q$ is not a root of unity.

For $i\in\N$ and $m\geqslant 1$, write
$A_i=E_{i\delta+\alpha_0}$,
$B_i=E_{i\delta+\alpha_1}$, and $I_m=E_{m\delta}$.
Let $<_{\mathrm D}$ denote the total order determined by
\begin{equation}
\label{eq:Damiani-standard-order}
 A_i<_{\mathrm D}A_{i+1},\qquad
 I_m<_{\mathrm D}I_{m+1},\qquad
 B_{j+1}<_{\mathrm D}B_j,\qquad
 A_i<_{\mathrm D}I_m<_{\mathrm D}B_j
\end{equation}
for $i,j\in\N$ and $m\geqslant 1$.
The $<_{\mathrm D}$-ordered monomials form an $\F$-basis of
$\Uplus$.  Under the present assumptions on $\F$ and $q$, this is
\cite[Proposition~2.4]{Terwilliger2019Alternating}; the original result is
\cite[Section~5, Theorem~2, p.~308]{Damiani1993}.

For a word $w$ in the root vectors, let $a_i(w)$ and $b_i(w)$ be
the multiplicities of $A_i$ and $B_i$, respectively.  Define the
\textbf{real signature} of $w$ by
\begin{equation}
\label{eq:real-signature}
 \operatorname{rsig}(w)
 =
 \bigl(
 a_0(w),a_1(w),a_2(w),\ldots;
 b_0(w),b_1(w),b_2(w),\ldots
 \bigr).
\end{equation}
Only finitely many entries are non-zero.  We order the signatures
lexicographically, reading the $A$-coordinates first and then the
$B$-coordinates, with the larger entry at the first place of
disagreement giving the larger signature.  For each fixed weight
$\nu\in Q_+$, only finitely many real signatures occur.

The real signature is additive under concatenation.  In particular, for
words $u,v,w,w'$,
\begin{equation}
\label{eq:signature-additivity}
 \operatorname{rsig}(w)<\operatorname{rsig}(w')
 \quad\Longrightarrow\quad
 \operatorname{rsig}(uwv)<\operatorname{rsig}(uw'v).
\end{equation}

We shall use the following straightening relations.  For
$i,j\in\N$ and $m\geqslant 1$,
\begin{align}
 B_iA_j
 &=
 q^2A_jB_i+q^2I_{i+j+1},
 \label{eq:Damiani-cross-BA}\\
 I_mA_j
 &=
 A_jI_m+q^{2-2m}[2]_qA_{m+j}
 -q^2(q^2-q^{-2})
 \sum_{\ell=1}^{m-1}
 q^{-2\ell}A_{j+\ell}I_{m-\ell},
 \label{eq:Damiani-cross-IA}\\
 B_jI_m
 &=
 I_mB_j+q^{2-2m}[2]_qB_{m+j}
 -q^2(q^2-q^{-2})
 \sum_{\ell=1}^{m-1}
 q^{-2\ell}I_{m-\ell}B_{j+\ell}.
 \label{eq:Damiani-cross-BI}
\end{align}
Let $i>j\geqslant 0$.  If $i-j=2r+1$, then
\begin{align}
 A_iA_j
 &=
 q^{-2}A_jA_i
 -(q^2-q^{-2})
 \sum_{\ell=1}^{r}
 q^{-2\ell}A_{j+\ell}A_{i-\ell},
 \label{eq:Damiani-cross-AA-odd}\\
 B_jB_i
 &=
 q^{-2}B_iB_j
 -(q^2-q^{-2})
 \sum_{\ell=1}^{r}
 q^{-2\ell}B_{i-\ell}B_{j+\ell}.
 \label{eq:Damiani-cross-BB-odd}
\end{align}
If $i-j=2r$, then
\begin{align}
 A_iA_j
 ={}&
 q^{-2}A_jA_i
 -q^{j-i+1}(q-q^{-1})A_{j+r}^{\,2}-
 (q^2-q^{-2})
 \sum_{\ell=1}^{r-1}
 q^{-2\ell}A_{j+\ell}A_{i-\ell},
 \label{eq:Damiani-cross-AA-even}\\
 B_jB_i
 ={}&
 q^{-2}B_iB_j
 -q^{j-i+1}(q-q^{-1})B_{j+r}^{\,2}-
 (q^2-q^{-2})
 \sum_{\ell=1}^{r-1}
 q^{-2\ell}B_{i-\ell}B_{j+\ell}.
 \label{eq:Damiani-cross-BB-even}
\end{align}
Empty sums are zero, and
\begin{equation}
\label{eq:Damiani-cross-II}
        I_mI_n=I_nI_m
        \qquad(m,n\geqslant 1).
\end{equation}

The relations
\eqref{eq:Damiani-cross-BA}--\eqref{eq:Damiani-cross-BB-even}
are recorded, under the present assumptions on the field and the
parameter, in
\cite[Lemmas~3.1--3.3]{Terwilliger2019Catalan}.
Their original sources are
\cite[p.~307]{Damiani1993} for
\eqref{eq:Damiani-cross-BA},
\cite[p.~304]{Damiani1993} for
\eqref{eq:Damiani-cross-IA}--\eqref{eq:Damiani-cross-BI}, and
\cite[p.~300]{Damiani1993} for
\eqref{eq:Damiani-cross-AA-odd}--\eqref{eq:Damiani-cross-BB-even}.
The commutativity in \eqref{eq:Damiani-cross-II} is
\cite[p.~307]{Damiani1993}.

\begin{lemma}
\label{lem:Damiani-triangular-crossings}
Let $X,Y$ be distinct Damiani root vectors.  There are
$\varepsilon_{X,Y}\in\{1,q^2,q^{-2}\}$, finitely many scalars
$c_s\in\F$, and words $w_s$ in the Damiani root vectors such that
\begin{equation}
\label{eq:Damiani-triangular-crossing}
        XY
        =
        \varepsilon_{X,Y}YX+\sum_s c_sw_s,
\end{equation}
where every $w_s$ has the same weight as $XY$ and satisfies
\[
        \operatorname{rsig}(w_s)<\operatorname{rsig}(XY).
\]
The same conclusion holds after solving
\eqref{eq:Damiani-triangular-crossing} for $YX$.
\end{lemma}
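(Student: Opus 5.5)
The plan is a case analysis over the unordered pair $\{X,Y\}$, reading each case off one of the straightening relations \eqref{eq:Damiani-cross-BA}--\eqref{eq:Damiani-cross-II}. In every case the correction terms are compared with $XY$ using the lexicographic order on real signatures. Weight homogeneity is automatic, since each relation holds in the $Q_+$-graded algebra $\Uplus$ and every root vector is homogeneous. We therefore only need to check the scalar $\varepsilon_{X,Y}$ and the signature decrease.

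First take the pairs in which neither factor is real, namely $X=I_m$, $Y=I_n$. Relation \eqref{eq:Damiani-cross-II} gives $\varepsilon=1$ and no correction terms.

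Next take the mixed pairs.
\begin{itemize}
\item For $\{A_j,B_i\}$, relation \eqref{eq:Damiani-cross-BA} gives $B_iA_j=q^2A_jB_i+q^2I_{i+j+1}$. The signature of $I_{i+j+1}$ is identically zero, while $B_iA_j$ has $a_j=1$. Hence the correction is strictly smaller already at the coordinate $a_j$, so $\varepsilon=q^2$.
\item For $\{I_m,A_j\}$, relation \eqref{eq:Damiani-cross-IA} gives $\varepsilon=1$. The corrections are $A_{m+j}$ and $A_{j+\ell}I_{m-\ell}$ with $\ell\geqslant 1$. Each has a single $A$-factor of index strictly greater than $j$ and no $B$-factors. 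Comparing with $I_mA_j$, the first disagreement is at coordinate $a_j$, where $I_mA_j$ has $1$ and the correction has $0$. So the correction is smaller.
\item For $\{B_j,I_m\}$, relation \eqref{eq:Damiani-cross-BI} is handled the same way. The $A$-coordinates agree (all zero), and the first disagreement is at $b_j$, where the corrections have $0$.
\end{itemize}

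Then take the pure real pairs $\{A_i,A_j\}$ with $i>j$. In \eqref{eq:Damiani-cross-AA-odd} and \eqref{eq:Damiani-cross-AA-even}, every correction term is $A_{j+\ell}A_{i-\ell}$ with $1\leqslant\ell$ and $j+\ell\leqslant i-\ell$, or $A_{j+r}^2$. All these indices lie strictly between $j$ and $i$. Hence $a_j$ drops from $1$ to $0$, while all coordinates $a_k$ with $k<j$ are zero on both sides. The corrections are therefore smaller, and $\varepsilon=q^{-2}$. The pairs $\{B_i,B_j\}$ are symmetric. There the $A$-coordinates are all zero, and the first disagreement is at $b_j$, which again drops from $1$ to $0$.

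For the final sentence of the lemma, suppose $XY=\varepsilon YX+\sum c_sw_s$. Then $YX=\varepsilon^{-1}XY-\varepsilon^{-1}\sum c_sw_s$, and $\varepsilon^{-1}\in\{1,q^2,q^{-2}\}$. The signature is additive and $\operatorname{rsig}(XY)=\operatorname{rsig}(YX)$, so the bound $\operatorname{rsig}(w_s)<\operatorname{rsig}(YX)$ still holds.

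This also settles the orientation issue, which I expect to be the main obstacle. Each relation is stated for only one ordering of the pair, so the case where $XY$ is the right-hand ordering, for example $XY=A_jB_i$, is obtained by this solving step. The delicate point is only to check, uniformly, that the first lexicographic disagreement is always the smallest real index present in the product, and that this coordinate decreases. This holds because every correction term replaces the real factor of least index by factors of strictly larger real index, or by imaginary factors.
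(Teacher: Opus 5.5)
Your proposal is correct and takes the same route as the paper. You read each straightening relation case by case, check that the first real coordinate at which a correction term differs from $XY$ strictly decreases, and handle the opposite orientation by solving with the non-zero leading scalar. One slip: your closing summary says the decisive coordinate is the ``smallest real index'', but it is really the first relevant coordinate in the $A$-before-$B$ reading (for $B_0A_5$ it is $a_5$). Your explicit case analysis already uses this correct coordinate, so the proof is unaffected.
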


\begin{proof}
All the displayed straightening relations are homogeneous, so every
correction word has the same weight as the product being straightened.

In \eqref{eq:Damiani-cross-BA}, the correction term contains no real root
vector, whereas $B_iA_j$ contains both $A_j$ and $B_i$.  Its real
signature is therefore smaller.

In \eqref{eq:Damiani-cross-IA}, each correction replaces $A_j$ by an
$A$-factor of index strictly larger than $j$.  Hence the first
$A$-coordinate at which the signature changes is the $j$-th
coordinate, and this coordinate decreases.  The same argument, using the
$B$-coordinates, applies to \eqref{eq:Damiani-cross-BI}.

Consider
\eqref{eq:Damiani-cross-AA-odd}--\eqref{eq:Damiani-cross-BB-even}.
Every correction replaces the pair of indices $j<i$ by two indices
strictly larger than $j$; in the even case the midpoint occurs twice.
Thus the multiplicity at index $j$ decreases, and no earlier real
coordinate changes.  Each correction again has smaller real signature.

The leading coefficients are $1,q^2,q^{-2}$, all of which are non-zero.
Solving a relation in the opposite direction merely multiplies its
correction terms by a non-zero scalar.  Since
$\operatorname{rsig}(XY)=\operatorname{rsig}(YX)$, the strict signature
inequality is unchanged.
\end{proof}

\begin{theorem}
\label{thm:any-Damiani}
Let $\prec$ be any total order on $\Phi_+$.  The
$\prec$-ordered monomials in Damiani's root vectors form an
$\F$-basis of $\Uplus$.
\end{theorem}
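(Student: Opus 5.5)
The plan is to prove spanning and linear independence separately, working one weight space $\Uplus_\nu$ at a time. Both parts rest on Lemma~\ref{lem:Damiani-triangular-crossings} and on the known basis of $<_{\mathrm D}$-ordered monomials.

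\textbf{Spanning.} Every element of $\Uplus$ is a linear combination of words in $A,B$, and $A=A_0$, $B=B_0$ are root vectors. So it suffices to show that every word $w$ in the Damiani root vectors of weight $\nu$ lies in the span of the $\prec$-ordered monomials of weight $\nu$. I argue by induction on the pair $(\operatorname{rsig}(w),\operatorname{inv}(w))$, ordered lexicographically. Here $\operatorname{inv}(w)$ is the number of pairs of positions $s<t$ whose factors satisfy $w_t\prec w_s$ strictly. This order is well founded on words of weight $\nu$: only finitely many real signatures occur for fixed $\nu$, and $\operatorname{inv}$ is a natural number.

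If $w$ is not $\prec$-ordered, it has an adjacent pair $XY$ with $Y\prec X$; in particular $X\neq Y$. Apply \eqref{eq:Damiani-triangular-crossing} to that pair. The leading term $\varepsilon_{X,Y}\,uYXv$ has the same real signature as $w$ and exactly one fewer inversion. Each correction term $u w_s v$ has strictly smaller real signature by \eqref{eq:signature-additivity}, and still has weight $\nu$. Induction therefore applies to every term, which gives spanning.

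\textbf{Linear independence.} For each weight $\nu$ I compare cardinalities. The set of $\prec$-ordered monomials of weight $\nu$ is in bijection with the set of finite multisets of positive roots summing to $\nu$; call this set $P(\nu)$. The order $\prec$ plays no role in this count. The same holds for $<_{\mathrm D}$. Hence the number of $\prec$-ordered monomials of weight $\nu$ equals the number of $<_{\mathrm D}$-ordered monomials of weight $\nu$, namely $|P(\nu)|$. This number is finite, because each positive root has positive height and the height of $\nu$ is fixed.

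By Damiani's theorem, cited above as \cite[Proposition~2.4]{Terwilliger2019Alternating}, the $<_{\mathrm D}$-ordered monomials form a basis. So $\dim\Uplus_\nu=|P(\nu)|<\infty$. A spanning set of $\Uplus_\nu$ with exactly $\dim\Uplus_\nu$ elements is a basis. Summing over $\nu$, and using that $\Uplus=\bigoplus_\nu\Uplus_\nu$ with every root vector homogeneous, gives the theorem.

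\textbf{Main obstacle.} The delicate step is the termination of the straightening, because two measures interact. The correction terms from \eqref{eq:Damiani-cross-IA}--\eqref{eq:Damiani-cross-BB-even} may contain many imaginary factors in arbitrary positions, so they can have far more inversions than $w$. This is why the real signature must be the primary key, with $\operatorname{inv}$ only secondary.

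Two points need checking. First, the lexicographic order on signatures must be well founded within a single weight; this follows from the finiteness remark above. Second, for the swap itself, moving a pair $XY$ with $Y\prec X$ to $YX$ must lower the inversion count by exactly one. This holds because the other factors of $w$ keep their relative positions with respect to both $X$ and $Y$. Everything else is bookkeeping.
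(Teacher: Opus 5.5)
Your proposal is correct and follows the paper's proof essentially step for step. Spanning is proved by lexicographic induction on $(\operatorname{rsig}(w),\operatorname{inv}_{\prec}(w))$, using Lemma~\ref{lem:Damiani-triangular-crossings} and \eqref{eq:signature-additivity} at each adjacent inversion. Linear independence then follows by comparing the number $|\mathcal P(\nu)|$ of $\prec$-ordered monomials with $\dim_{\F}U_\nu^+$, which the $<_{\mathrm D}$-ordered Damiani basis gives.
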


\begin{proof}
Fix $\nu\in Q_+$, and let $U_\nu^+$ denote the homogeneous component
of $\Uplus$ of weight $\nu$.  Since $A_0=A$ and $B_0=B$, it is
enough to straighten words in the Damiani root vectors.

For a word $w=X_1X_2\cdots X_d$ of weight $\nu$, let
\[
 \operatorname{inv}_{\prec}(w)
 =
 \bigl|\{(r,s)\mid 1\leqslant r<s\leqslant d,\;
                    X_s\prec X_r\}\bigr|.
\]
We use lexicographic induction on
$\bigl(\operatorname{rsig}(w),\operatorname{inv}_{\prec}(w)\bigr)$,
with the real signature as the first coordinate.  This induction is
well-founded, since only finitely many real signatures occur in weight
$\nu$, while the inversion number is a non-negative integer.

Suppose that $w$ is not $\prec$-ordered.  Then $w$ contains an
adjacent inversion $XY$, with $Y\prec X$.  Write $w=uXYv$.
By Lemma~\ref{lem:Damiani-triangular-crossings},
\[
        uXYv
        =
        \varepsilon_{X,Y}uYXv
        +
        \sum_s c_suw_sv.
\]
The word $uYXv$ has the same real signature as $w$ and exactly one
fewer $\prec$-inversion.  For every $s$,
\eqref{eq:signature-additivity} gives
$\operatorname{rsig}(uw_sv)<\operatorname{rsig}(uXYv)$.
The induction hypothesis therefore expresses every term on the
right-hand side as a linear combination of $\prec$-ordered monomials.
It follows that the $\prec$-ordered monomials span $U_\nu^+$.

Let
\[
 \mathcal P(\nu)
 =
 \left\{
 e:\Phi_+\longrightarrow\N
 \ \middle|\
 \begin{array}{l}
 e\text{ is finitely supported},\\[1mm]
 \displaystyle
 \sum_{\beta\in\Phi_+}e(\beta)\beta=\nu
 \end{array}
 \right\}.
\]
The set $\mathcal P(\nu)$ is finite.  For every
$e\in\mathcal P(\nu)$, there is exactly one $\prec$-ordered monomial
with root multiplicities $e$.  Thus the spanning family just obtained is
indexed by $\mathcal P(\nu)$.

The $<_{\mathrm D}$-ordered Damiani monomials of weight $\nu$ are
indexed by the same set $\mathcal P(\nu)$, and they form a basis.
Consequently,
$\dim_{\F}U_\nu^+=|\mathcal P(\nu)|$.  The $\prec$-ordered spanning
family therefore consists of exactly $\dim_{\F}U_\nu^+$ vectors, and hence is a basis of $U_\nu^+$.
Taking the direct sum over $\nu\in Q_+$ proves the result.
\end{proof}

\section{Baseilhac--Kolb root-vector PBW bases}
\label{sec:root-initial}

We compute the initial forms of the Baseilhac--Kolb root vectors and use
them to lift the arbitrary-order Damiani PBW bases from
$\gr\Oq\simeq\Uplus$ to $\Oq$.

The \textbf{Baseilhac--Kolb root vectors} are
\[
 \{B_{n\delta+\alpha_0}\}_{n\in\N},\qquad
 \{B_{n\delta+\alpha_1}\}_{n\in\N},\qquad
 \{B_{n\delta}\}_{n\geqslant 1}.
\]
In the normalisation used here, they are defined as follows:
\begin{equation}
\label{eq:Bdelta}
        B_\delta=q^{-2}W_1W_0-W_0W_1,
\end{equation}
and
\begin{align}
&B_{\alpha_0}
 =W_0, \qquad \qquad
B_{\delta+\alpha_0}
 =W_1+
   \frac{q\,\comm{B_\delta}{W_0}}
        {(q-q^{-1})(q^2-q^{-2})},
\label{eq:BK-real0-first}\\
&B_{n\delta+\alpha_0}
 =B_{(n-2)\delta+\alpha_0}
   +\frac{q\,\comm{B_\delta}{B_{(n-1)\delta+\alpha_0}}}
        {(q-q^{-1})(q^2-q^{-2})}
        \qquad(n\geqslant 2),
\label{eq:BK-real0}\\
&B_{\alpha_1}
 =W_1, \qquad\qquad
B_{\delta+\alpha_1}
 =W_0-
   \frac{q\,\comm{B_\delta}{W_1}}
        {(q-q^{-1})(q^2-q^{-2})},
\label{eq:BK-real1-first}\\
&B_{n\delta+\alpha_1}
 =B_{(n-2)\delta+\alpha_1}
   -\frac{q\,\comm{B_\delta}{B_{(n-1)\delta+\alpha_1}}}
        {(q-q^{-1})(q^2-q^{-2})}
        \qquad(n\geqslant 2).
\label{eq:BK-real1}
\end{align}
For $n\geqslant 1$, the imaginary root vectors are defined by
\begin{align}
B_{n\delta}
={}&
q^{-2}B_{(n-1)\delta+\alpha_1}W_0
-W_0B_{(n-1)\delta+\alpha_1}
+(q^{-2}-1)
\sum_{\ell=0}^{n-2}
B_{\ell\delta+\alpha_1}
B_{(n-\ell-2)\delta+\alpha_1}.
\label{eq:BK-imaginary}
\end{align}
An empty sum is zero.  These formulae are
\cite[Section~4, equations~(7)--(12)]{Terwilliger2022OqACE};
see also \cite[Section~3]{BaseilhacKolb2020}.

We shall use
\begin{equation}
\label{eq:denominator-expanded}
 (q-q^{-1})(q^2-q^{-2})
 =
 (q-q^{-1})^2[2]_q.
\end{equation}
All denominators above are non-zero by
Lemma~\ref{lem:nonvanishing-scalars}.

A filtered comparison of the Baseilhac--Kolb and Damiani root vectors over
a generic coefficient field appears in
\cite[Proposition~4.4]{BaseilhacKolb2020}.  The next proposition gives the
exact scalar factors in the present normalisation and under our standing
assumption on $q$.

\begin{proposition}
\label{prop:root-leading-U}
For $n\in\N$,
\begin{align}
\init(B_{n\delta+\alpha_0})
 &=
 q^n(q-q^{-1})^{-2n}
 E_{n\delta+\alpha_0},
\label{eq:root-leading-real0}\\
\init(B_{n\delta+\alpha_1})
 &=
 q^n(q-q^{-1})^{-2n}
 E_{n\delta+\alpha_1}.
\label{eq:root-leading-real1}
\end{align}
For $n\geqslant 1$,
\begin{equation}
\label{eq:root-leading-imaginary}
\init(B_{n\delta})
 =
 q^{n-1}(q-q^{-1})^{2-2n}E_{n\delta}.
\end{equation}
Consequently,
\[
 \deg_F B_{n\delta+\alpha_0}
 =
 \deg_F B_{n\delta+\alpha_1}
 =
 2n+1,
 \qquad
 \deg_F B_{n\delta}=2n.
\]
\end{proposition}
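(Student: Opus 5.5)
We argue by induction on $n$, working in $\gr\Oq\simeq\Uplus$ via $\psi$. The strategy is to show that each $B_\beta$ lies in $F_{d}\Oq$ with $d=2n+1$ or $2n$, and that its class in $F_d\Oq/F_{d-1}\Oq$ equals the stated non-zero multiple of $E_\beta$. Since $\Uplus$ is a domain-free but graded algebra with the Damiani vectors non-zero (they are PBW basis elements, by Theorem~\ref{thm:any-Damiani}), a non-zero class forces $\deg_F B_\beta=d$ and identifies $\init(B_\beta)$. The degree statements then follow at once.

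\textbf{Base cases.} We have $\init(B_{\alpha_0})=[W_0]_1=E_{\alpha_0}$ and $\init(B_{\alpha_1})=E_{\alpha_1}$. Next, $B_\delta\in F_2\Oq$ with class $q^{-2}BA-AB=E_\delta$, which is \eqref{eq:root-leading-imaginary} for $n=1$.

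\textbf{Real root vectors.} Assume the claim for $B_\delta$ and $B_{(n-1)\delta+\alpha_0}$ (degree $2n-1$), and note $B_{(n-2)\delta+\alpha_0}\in F_{2n-3}$ (or $W_1\in F_1$ when $n=1$). Then $\comm{B_\delta}{B_{(n-1)\delta+\alpha_0}}\in F_{2n+1}$, and its class is
\[
q^{n-1}(q-q^{-1})^{-2n+2}\,\comm{E_\delta}{E_{(n-1)\delta+\alpha_0}}
=q^{n-1}(q-q^{-1})^{2-2n}[2]_qE_{n\delta+\alpha_0},
\]
by \eqref{eq:Damiani-real}. Multiplying by $q/((q-q^{-1})^2[2]_q)$, using \eqref{eq:denominator-expanded}, gives $q^n(q-q^{-1})^{-2n}E_{n\delta+\alpha_0}$. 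The lower term $B_{(n-2)\delta+\alpha_0}$ does not contribute. The $\alpha_1$ case is identical, except that \eqref{eq:Damiani-real} uses $\comm{E_{(n-1)\delta+\alpha_1}}{E_\delta}=-\comm{E_\delta}{E_{(n-1)\delta+\alpha_1}}$; this sign is exactly absorbed by the minus sign in \eqref{eq:BK-real1-first}--\eqref{eq:BK-real1}. Alternatively, one could apply $\sigma$ together with the symmetry $A\leftrightarrow B$, but a direct check is cleaner.

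\textbf{Imaginary root vectors.} For $n\ge1$, use \eqref{eq:BK-imaginary}. The first two terms lie in $F_{2n}$, and their class is
\[
q^{n-1}(q-q^{-1})^{2-2n}\bigl(q^{-2}E_{(n-1)\delta+\alpha_1}A-AE_{(n-1)\delta+\alpha_1}\bigr)
=q^{n-1}(q-q^{-1})^{2-2n}E_{n\delta},
\]
by \eqref{eq:Damiani-imaginary}. Each summand $B_{\ell\delta+\alpha_1}B_{(n-\ell-2)\delta+\alpha_1}$ lies in $F_{(2\ell+1)+(2n-2\ell-3)}=F_{2n-2}\subseteq F_{2n-1}$, so it is invisible in degree $2n$. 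Finally, $E_{n\delta}\neq0$, completing the induction.

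\textbf{Main obstacle.} The only delicate point is bookkeeping. One must track the exact scalar factors, in particular the $[2]_q$ produced by \eqref{eq:Damiani-real}, which cancels against \eqref{eq:denominator-expanded}. One must also check that the sign conventions for the $\alpha_1$-series match. Beyond that, the argument is purely formal, since all correction terms in the Baseilhac--Kolb recursions have strictly smaller filtration degree.
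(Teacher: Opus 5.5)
Your proposal is correct and is essentially the paper's proof: the same induction computing the top-degree class through $[E_\delta,\,\cdot\,]$, the same cancellation of $[2]_q$ against $(q-q^{-1})(q^2-q^{-2})=(q-q^{-1})^2[2]_q$, the same sign absorption in the $\alpha_1$-series, and the same bound $2n-2$ on the quadratic correction terms in the imaginary recursion. The only slip is your aside, which you do not use: the automorphism $\sigma$ does not fix $B_\delta$, so it does not carry the $\alpha_0$-recursion to the $\alpha_1$-recursion (the antiautomorphism $\tau$ does), and since you rely on the direct check instead, this does not affect the argument.
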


\begin{proof}
Put $c_n=q^n(q-q^{-1})^{-2n}$ for $n\in\N$.  Every Damiani root
vector is non-zero by
Theorem~\ref{thm:any-Damiani}.

Equation~\eqref{eq:Bdelta} gives
$[B_\delta]_2=q^{-2}BA-AB=E_\delta$.  In particular, $B_\delta$
has filtration degree $2$.

We first prove \eqref{eq:root-leading-real0} by strong induction on $n$.
For $n=0$, one has
$[B_{\alpha_0}]_1=A=E_{\alpha_0}=c_0E_{\alpha_0}$.
Let $n\geqslant 1$, and suppose that the assertion holds at all smaller
indices.  In the defining recursion for
$B_{n\delta+\alpha_0}$, the term not involving a commutator has
filtration degree strictly smaller than $2n+1$.  The recursion therefore
gives $B_{n\delta+\alpha_0}\in F_{2n+1}\Oq$, and
\begin{align*}
[B_{n\delta+\alpha_0}]_{2n+1}
&=
\frac{qc_{n-1}}
     {(q-q^{-1})(q^2-q^{-2})}
\comm{E_\delta}{E_{(n-1)\delta+\alpha_0}}\\
&=
\frac{qc_{n-1}[2]_q}
     {(q-q^{-1})^2[2]_q}
E_{n\delta+\alpha_0}\\
&=
c_nE_{n\delta+\alpha_0},
\end{align*}
where we used \eqref{eq:denominator-expanded} and
\eqref{eq:Damiani-real}.  This class is non-zero, proving both the
filtration degree and \eqref{eq:root-leading-real0}.

The same strong induction proves
\eqref{eq:root-leading-real1}.  The minus sign in
\eqref{eq:BK-real1-first} and \eqref{eq:BK-real1} reverses the commutator,
as required by the second relation in \eqref{eq:Damiani-real}.  Hence
$[B_{n\delta+\alpha_1}]_{2n+1}=c_nE_{n\delta+\alpha_1}$.

It remains to consider $B_{n\delta}$.  By the real-root formula,
$[B_{(n-1)\delta+\alpha_1}]_{2n-1}
=c_{n-1}E_{(n-1)\delta+\alpha_1}$.
Hence the first two terms of \eqref{eq:BK-imaginary} belong to
$F_{2n}\Oq$, and their degree-$2n$ component is
\begin{align*}
c_{n-1}
\bigl(
q^{-2}E_{(n-1)\delta+\alpha_1}A
-
AE_{(n-1)\delta+\alpha_1}
\bigr)
&=
c_{n-1}E_{n\delta}=
q^{n-1}(q-q^{-1})^{2-2n}E_{n\delta},
\end{align*}
by \eqref{eq:Damiani-imaginary}.

Every summand in the final sum of \eqref{eq:BK-imaginary} has filtration
degree at most
$(2\ell+1)+(2(n-\ell-2)+1)=2n-2$.  It therefore makes no contribution in degree $2n$.  Since the displayed
degree-$2n$ component is non-zero,
\eqref{eq:root-leading-imaginary} follows.
\end{proof}

Substituting the $q$-shuffle formulae for the Damiani root vectors gives
the following form of Proposition~\ref{prop:root-leading-U}.

\begin{corollary}
\label{cor:root-leading-shuffle}
For $n\in\N$,
\begin{align}
\init(B_{n\delta+\alpha_0})
 &=q^{-n}xC_n,
\label{eq:BK-leading-xC}\\
\init(B_{n\delta+\alpha_1})
 &=q^{-n}C_ny.
\label{eq:BK-leading-Cy}
\end{align}
For $n\geqslant 1$,
\begin{equation}
\label{eq:BK-leading-C}
        \init(B_{n\delta})
        =
        -q^{-n-1}(q-q^{-1})C_n.
\end{equation}
\end{corollary}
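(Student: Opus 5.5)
The plan is to substitute the $q$-shuffle images \eqref{eq:Damiani-shuffle-0}--\eqref{eq:Damiani-shuffle-delta} of Damiani's root vectors into Proposition~\ref{prop:root-leading-U}, and then simplify the scalar factors. Throughout, $\gr\Oq$ is identified with $\Uplus$ via $\psi$ (Theorem~\ref{thm:gr-isomorphism}) and $\Uplus$ with $U\subseteq\Vsh$ via $\iota$. Initial forms are therefore read as elements of $\Vsh$, and $\init(B_\beta)$ means $\iota$ applied to the class computed in Proposition~\ref{prop:root-leading-U}. Since $\iota$ is linear, scalar factors pass through unchanged.

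For the real roots, I would start from \eqref{eq:root-leading-real0}, which gives $\init(B_{n\delta+\alpha_0})=q^n(q-q^{-1})^{-2n}E_{n\delta+\alpha_0}$. Applying \eqref{eq:Damiani-shuffle-0} then gives
\[
q^n(q-q^{-1})^{-2n}\cdot q^{-2n}(q-q^{-1})^{2n}xC_n=q^{-n}xC_n,
\]
because the powers of $q-q^{-1}$ cancel exactly and $q^nq^{-2n}=q^{-n}$. The same computation with \eqref{eq:root-leading-real1} and \eqref{eq:Damiani-shuffle-1} gives $q^{-n}C_ny$.

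For the imaginary roots, with $n\geqslant1$, I would combine \eqref{eq:root-leading-imaginary} with \eqref{eq:Damiani-shuffle-delta}:
\[
q^{n-1}(q-q^{-1})^{2-2n}\cdot\bigl(-q^{-2n}(q-q^{-1})^{2n-1}\bigr)C_n=-q^{-n-1}(q-q^{-1})C_n,
\]
since $(2-2n)+(2n-1)=1$ and $(n-1)-2n=-n-1$.

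There is no genuine obstacle here. The only point requiring care is that $xC_n$ and $C_ny$ denote concatenations, exactly as in \eqref{eq:Damiani-shuffle-0} and \eqref{eq:Damiani-shuffle-1}, and that the normalisations of $E_\beta$ used in Proposition~\ref{prop:root-leading-U} agree with those in \cite[Theorem~1.7]{Terwilliger2019Catalan}. Both facts were fixed in Section~\ref{sec:qshuffle}.
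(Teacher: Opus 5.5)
Your proposal is correct and is exactly the paper's argument: substitute the $q$-shuffle images of the Damiani root vectors into Proposition~\ref{prop:root-leading-U} and simplify the scalars. Your exponent bookkeeping, $q^{n}q^{-2n}=q^{-n}$ in the real case and $(2-2n)+(2n-1)=1$, $(n-1)-2n=-n-1$ in the imaginary case, is right.
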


\begin{proof}
Substitute
\eqref{eq:Damiani-shuffle-0}--\eqref{eq:Damiani-shuffle-delta}
into
\eqref{eq:root-leading-real0}--\eqref{eq:root-leading-imaginary}
and simplify the scalar factors.
\end{proof}

For later use, we make the scalar convention
\begin{equation}
\label{eq:B0delta-convention}
        B_{0\delta}
        =
        q^{-2}-1
        =
        -q^{-1}(q-q^{-1}).
\end{equation}
Since $C_0=1$, equation~\eqref{eq:BK-leading-C} remains valid for
$n=0$ under this convention.  The element $B_{0\delta}$ in
\eqref{eq:B0delta-convention} is a scalar, not an additional root vector.

For $\beta\in\Phi_+$, let $B_\beta$ denote the corresponding
Baseilhac--Kolb root vector.

\begin{theorem}
\label{thm:BK-nonroot}
Let $\prec$ be any total order on $\Phi_+$.  The
$\prec$-ordered monomials in the Baseilhac--Kolb root vectors
$\{B_\beta\mid\beta\in\Phi_+\}$ form an $\F$-basis of $\Oq$.
\end{theorem}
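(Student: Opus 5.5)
The plan is to apply the filtered lifting Lemma~\ref{lem:filtered-lifting} to $R=\Oq$ with its word-length filtration. By Theorem~\ref{thm:gr-isomorphism}, $\gr\Oq$ is identified with $\Uplus$. For the family to be lifted, take $\Lambda=\Phi_+$ equipped with the given total order $\prec$, and set $u_\beta=B_\beta$. The homogeneous elements are $v_\beta=E_\beta$, Damiani's root vectors, viewed inside $\gr\Oq$ via $\psi$.

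The hypotheses of the lemma are supplied as follows.
\begin{itemize}
\item By Proposition~\ref{prop:root-leading-U}, every $B_\beta$ has positive filtration degree: $2n+1$ for the real roots $n\delta+\alpha_0$ and $n\delta+\alpha_1$, and $2n\geqslant 2$ for the imaginary roots $n\delta$.
\item The same proposition gives $\init(B_\beta)=c_\beta E_\beta$. The scalar is $c_\beta=q^n(q-q^{-1})^{-2n}$ for real $\beta$ and $c_\beta=q^{n-1}(q-q^{-1})^{2-2n}$ for $\beta=n\delta$. Each $c_\beta$ lies in $\F^\times$ because $q\neq0$ and $q-q^{-1}\neq0$ by Lemma~\ref{lem:nonvanishing-scalars}.
\item Each $E_\beta$ is homogeneous of $Q_+$-degree $\beta$. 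Hence its word-length degree is $2n+1$ or $2n$, which matches $d_\beta=\deg_F B_\beta$, as the lemma requires.
\end{itemize}

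The remaining hypothesis is that the $\prec$-ordered monomials in the $E_\beta$ form a basis of $\gr\Oq=\Uplus$. This is exactly Theorem~\ref{thm:any-Damiani}, applied to the same order $\prec$. It holds over any field with $q$ not a root of unity, so no transcendence assumption enters. Lemma~\ref{lem:filtered-lifting} then gives that the $\prec$-ordered monomials in the $B_\beta$ form an $\F$-basis of $\Oq$.

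One bookkeeping point needs care. The lemma treats grading by word length, so in $\gr\Oq$ each $E_\beta$ must be read as the element $\psi(E_\beta)$ of $F_d\Oq/F_{d-1}\Oq$ with $d=\deg_F B_\beta$. The compatibility between the $Q_+$-grading and the total-length grading holds because $\psi$ sends $A$ and $B$ to degree-one classes. The substantive difficulties have already been handled in Proposition~\ref{prop:root-leading-U} and Theorem~\ref{thm:any-Damiani}, so I expect no further obstacle in this step.
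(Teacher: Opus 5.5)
Your proposal is correct and is essentially the paper's own proof. Proposition~\ref{prop:root-leading-U} shows that each $B_\beta$ has positive filtration degree and that its initial form is a non-zero scalar multiple of $E_\beta$. Theorem~\ref{thm:any-Damiani} supplies the $\prec$-ordered basis of $\gr\Oq\simeq\Uplus$, and Lemma~\ref{lem:filtered-lifting} lifts this basis to $\Oq$.
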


\begin{proof}
By Proposition~\ref{prop:root-leading-U}, for every
$\beta\in\Phi_+$ there is a scalar $a_\beta\in\F^\times$ such that
$\init(B_\beta)=a_\beta E_\beta$.
The scalar is non-zero by Lemma~\ref{lem:nonvanishing-scalars}, and every
$B_\beta$ has positive filtration degree.

By Theorem~\ref{thm:any-Damiani}, the $\prec$-ordered monomials in the
Damiani root vectors $E_\beta$ form a basis of
$\Uplus\simeq\gr\Oq$.  Lemma~\ref{lem:filtered-lifting} therefore lifts
this basis to the $\prec$-ordered monomials in the root vectors
$B_\beta$.
\end{proof}

Theorem~\ref{thm:BK-nonroot} proves
\cite[Conjecture~16.1]{Terwilliger2022OqACE}.

\section{Initial forms of the alternating generators}
\label{sec:alternating-initial}

In this section, we determine the initial forms of the four families of
alternating generators of $\Oq$.  We use the identifications
$\gr\Oq\simeq\Uplus\subseteq(\Vsh,\star)$ fixed in
Sections~\ref{sec:prelim} and~\ref{sec:qshuffle}.  Accordingly,
products of initial forms below are written as $q$-shuffle products.

We retain the scalar convention
\begin{equation}
\label{eq:G0-Oq}
        G_0=\widetilde G_0
        =-(q-q^{-1})[2]_q^2.
\end{equation}
Together with the convention \eqref{eq:B0delta-convention}, this gives
\begin{equation}
\label{eq:B-leading-all-n}
 [B_{m\delta}]_{2m}
 =
 -q^{-m-1}(q-q^{-1})C_m
 \qquad(m\in\N).
\end{equation}

The following relations are the cases $k=n-1$ of
\cite[Theorem~11.6, equations~(110)--(111)]
{Terwilliger2022OqACE}.  For $n\geqslant 1$,
\begin{align}
\qcomm{\widetilde G_n}{W_0}
 &=
 \rho_0W_{-n}-\rho_0W_n,
\label{eq:Oq-tG-W0}\\
\qcomm{W_1}{\widetilde G_n}
 &=
 \rho_0W_{n+1}-\rho_0W_{-(n-1)}.
\label{eq:Oq-W1-tG}
\end{align}
We shall also use
\cite[Lemma~11.8]{Terwilliger2022OqACE}, which gives
\begin{equation}
\label{eq:dagger-alternating}
        G_n^\dagger=\widetilde G_n,
        \qquad
        \widetilde G_n^\dagger=G_n
        \qquad(n\geqslant 1).
\end{equation}

\subsection{The family \texorpdfstring{$\widetilde G$}{G-tilde}}

The recursion relating the imaginary Baseilhac--Kolb root vectors to the
family $\{\widetilde G_n\}_{n\in\N}$ is
\cite[Lemma~12.5, equation~(133)]{Terwilliger2022OqACE}.  For
$n\geqslant 1$,
\begin{equation}
\label{eq:root-tG-recursion}
\begin{aligned}
0={}&
[n]_qB_{n\delta}\widetilde G_0+
\sum_{\substack{j+k+2\ell+1=n\\j,k,\ell\geqslant 0}}
(-1)^\ell
\binom{k+\ell}{\ell}
[2n-j]_q[2]_q^{k+1}
B_{j\delta}\widetilde G_{k+1}.
\end{aligned}
\end{equation}

\begin{proposition}
\label{prop:tG-leading}
For every $n\in\N$,
\begin{equation}
\label{eq:tG-leading}
\init(\widetilde G_n)
 =
 (-1)^{n+1}(q-q^{-1})q^{-n}[2]_q^{2-n}
 \widetilde G_n^{\sh}.
\end{equation}
In particular, $\deg_F\widetilde G_n=2n$.
\end{proposition}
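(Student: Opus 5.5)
We argue by strong induction on $n$, using the recursion \eqref{eq:root-tG-recursion} to express $\widetilde G_n$ in terms of lower $\widetilde G_k$ and the imaginary root vectors $B_{j\delta}$, and then identifying the top-degree component with the Catalan identity of Theorem~\ref{thm:Catalan-alternating}. For $n=0$ the claim is the scalar convention \eqref{eq:G0-Oq}: the right-hand side of \eqref{eq:tG-leading} is $-(q-q^{-1})[2]_q^2\cdot 1=\widetilde G_0$, viewed in $F_0\Oq/F_{-1}\Oq=\F$. Fix $n\geqslant 1$ and assume \eqref{eq:tG-leading} for all smaller indices. Write $\gamma_k=(-1)^{k+1}(q-q^{-1})q^{-k}[2]_q^{2-k}$ for the scalar in \eqref{eq:tG-leading}.

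In \eqref{eq:root-tG-recursion}, the summands with $\ell=0$ and $j+k+1=n$ have $B_{j\delta}\widetilde G_{k+1}$ of total index $j+(k+1)=n$. Among them, the term $j=0$, $k=n-1$ is $[2n]_q[2]_q^{n}B_{0\delta}\widetilde G_n$, which is a non-zero multiple of $\widetilde G_n$ because $B_{0\delta}$ is a non-zero scalar. All other $\ell=0$ terms involve $\widetilde G_{k+1}$ with $k+1<n$, and $B_{j\delta}$ with $j\geqslant1$; by \eqref{eq:B-leading-all-n} and the induction hypothesis they lie in $F_{2n}\Oq$. The term $[n]_qB_{n\delta}\widetilde G_0$ also lies in $F_{2n}\Oq$. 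Terms with $\ell\geqslant1$ have total index $j+k+1=n-2\ell$, so they lie in $F_{2n-4}\Oq$. Solving for $\widetilde G_n$ then gives $\widetilde G_n\in F_{2n}\Oq$.

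Its degree-$2n$ class satisfies
\[
[2n]_q[2]_q^{n}B_{0\delta}[\widetilde G_n]_{2n}
=-[n]_q\widetilde G_0[B_{n\delta}]_{2n}
-\sum_{j=1}^{n-1}[2n-j]_q[2]_q^{n-j}[B_{j\delta}]_{2j}\star\gamma_{n-j}\widetilde G^{\sh}_{n-j}.
\]
Substituting $[B_{j\delta}]_{2j}=-q^{-j-1}(q-q^{-1})C_j$ turns the right-hand side into a linear combination of the products $C_j\star\widetilde G^{\sh}_{n-j}$ for $1\leqslant j\leqslant n$ (using $\widetilde G^{\sh}_0=1$).

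The main obstacle is the scalar bookkeeping. One has to check that the coefficient of $C_j\star\widetilde G^{\sh}_{n-j}$ is $\kappa_n(-1)^j[2n-j]_q$ for a constant $\kappa_n$ independent of $j$, the case $j=n$ included. Concretely, the $j$-dependence of
\[
[2]_q^{n-j}\,q^{-j-1}\,(-1)^{n-j+1}\,q^{-(n-j)}\,[2]_q^{2-n+j}
\]
must collapse: the powers of $[2]_q$ cancel to $[2]_q^2$, and the powers of $q$ combine to $q^{-n-1}$, leaving the sign $(-1)^{n-j+1}$. For $j=n$ one checks that $-[n]_q\widetilde G_0$ matches the required $[n]_q\cdot[2]_q^2(q-q^{-1})$ factor, using $[2n-n]_q=[n]_q$. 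Once this holds, Theorem~\ref{thm:Catalan-alternating} replaces $\sum_{j\geqslant1}(-1)^j[2n-j]_qC_j\star\widetilde G^{\sh}_{n-j}$ by $-[2n]_q\widetilde G^{\sh}_n$. Dividing by $[2n]_q[2]_q^nB_{0\delta}$, which is non-zero by Lemma~\ref{lem:nonvanishing-scalars}, yields a multiple of $\widetilde G^{\sh}_n$, and the sign and power tracking should give exactly $\gamma_n$.

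Finally, the class is non-zero because $\widetilde G^{\sh}_n$ is a PBW basis element by Theorem~\ref{thm:qshuffle-twelve}, so it is non-zero in $U$. Hence $\deg_F\widetilde G_n=2n$ and $\init(\widetilde G_n)=\gamma_n\widetilde G^{\sh}_n$.
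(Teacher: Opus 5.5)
Your proposal is correct and is essentially the paper's proof: strong induction through the recursion \eqref{eq:root-tG-recursion}, passage to the degree-$2n$ component, substitution of $[B_{m\delta}]_{2m}=-q^{-m-1}(q-q^{-1})C_m$, and Theorem~\ref{thm:Catalan-alternating}. The only difference is that the paper checks that the candidate class satisfies the top-degree equation and then uses that the coefficient of the unknown is non-zero, whereas you solve for the class directly. The scalar check you left at ``should give'' does close: with $\kappa_n=(-1)^{n+1}(q-q^{-1})^2q^{-n-1}[2]_q^2$, which also matches the $j=n$ term, dividing $-\kappa_n[2n]_q$ by $[2n]_q[2]_q^nB_{0\delta}=-[2n]_q[2]_q^nq^{-1}(q-q^{-1})$ gives exactly $\gamma_n$.
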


\begin{proof}
For $r\in\N$, put
\[
 X_r
 =
 (-1)^{r+1}(q-q^{-1})q^{-r}[2]_q^{2-r}
 \widetilde G_r^{\sh}.
\]
For $r=0$, equation~\eqref{eq:G0-Oq} gives
$\init(\widetilde G_0)=X_0$.

Let $n\geqslant 1$, and assume that
$\init(\widetilde G_r)=X_r$ for $0\leqslant r<n$.
In \eqref{eq:root-tG-recursion}, the unique term containing
$\widetilde G_n$ corresponds to
$(j,k,\ell)=(0,n-1,0)$.  Its coefficient is
$d_n=[2n]_q[2]_q^nB_{0\delta}\in\F^\times$.
Indeed, each factor is non-zero by
Lemma~\ref{lem:nonvanishing-scalars} and
\eqref{eq:B0delta-convention}.

The first term in \eqref{eq:root-tG-recursion} belongs to
$F_{2n}\Oq$.  Every summand other than the one containing
$\widetilde G_n$ involves an alternating generator of index smaller
than $n$, and belongs to filtration degree at most
$2j+2(k+1)=2n-4\ell\leqslant 2n$.  Since $d_n\neq 0$, solving
\eqref{eq:root-tG-recursion} for $\widetilde G_n$ gives
$\widetilde G_n\in F_{2n}\Oq$.

For $m\in\N$, set $b_m=[B_{m\delta}]_{2m}$.
By \eqref{eq:B-leading-all-n},
\begin{equation}
\label{eq:bm-Cm}
        b_m=-q^{-m-1}(q-q^{-1})C_m.
\end{equation}
Taking the degree-$2n$ component of
\eqref{eq:root-tG-recursion}, the terms with $\ell>0$ disappear.
Writing $r=k+1$, and including the first term of
\eqref{eq:root-tG-recursion} as the term $r=0$, we obtain
\begin{equation}
\label{eq:top-recursion-Oq}
0=
\sum_{r=0}^{n}
[n+r]_q[2]_q^r\,
b_{n-r}\star[\widetilde G_r]_{2r}.
\end{equation}

For $0\leqslant r\leqslant n$, equations
\eqref{eq:bm-Cm} and the definition of $X_r$ give
\begin{align}
[2]_q^r b_{n-r}\star X_r
={}&
(-1)^r q^{-n-1}(q-q^{-1})^2[2]_q^2
 C_{n-r}\star\widetilde G_r^{\sh}.
\label{eq:top-recursion-substitution}
\end{align}
Consequently,
\begin{align*}
\sum_{r=0}^{n}
[n+r]_q[2]_q^r b_{n-r}\star X_r
={}
q^{-n-1}(q-q^{-1})^2[2]_q^2\times
\sum_{r=0}^{n}
(-1)^r[n+r]_q
C_{n-r}\star\widetilde G_r^{\sh}.
\end{align*}
The final sum is zero by
Theorem~\ref{thm:Catalan-alternating}: substitute $i=n-r$ in
\eqref{eq:Catalan-alternating} and multiply by $(-1)^n$.

Thus $X_n$ satisfies \eqref{eq:top-recursion-Oq}.  The coefficient of
the unknown class $[\widetilde G_n]_{2n}$ in that equation is
$[2n]_q[2]_q^n b_0=[2n]_q[2]_q^nB_{0\delta}$, which is non-zero.
Hence the solution is unique, and $[\widetilde G_n]_{2n}=X_n$.
The right-hand side is non-zero, since
$\widetilde G_n^{\sh}=(xy)^n$ is a non-zero word.  Therefore
$\deg_F \widetilde G_n=2n$, and
\eqref{eq:tG-leading} follows.
\end{proof}

The first imaginary generators admit particularly simple expressions.

\begin{lemma}
\label{lem:first-imaginary-generators}
One has
\begin{align}
\widetilde G_1
 &=-qB_\delta
  =\qcomm{W_0}{W_1},
\label{eq:tG1-Bdelta}\\
G_1
 &=\qcomm{W_1}{W_0}.
\label{eq:G1-low-check}
\end{align}
\end{lemma}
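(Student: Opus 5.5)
The plan is to specialise the recursion \eqref{eq:root-tG-recursion} to $n=1$, solve it for $\widetilde G_1$, and then obtain the formula for $G_1$ by applying the antiautomorphism $\dagger$.

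\emph{The case $n=1$ of the recursion.} For $n=1$ the constraint $j+k+2\ell+1=1$ with $j,k,\ell\geqslant 0$ forces $(j,k,\ell)=(0,0,0)$. The summand is therefore
\[
[2]_q[2]_q\,B_{0\delta}\widetilde G_1,
\]
and the recursion reads
\[
0=[1]_qB_\delta\widetilde G_0+[2]_q^2B_{0\delta}\widetilde G_1 .
\]

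\emph{Substituting the scalar conventions.} Insert $\widetilde G_0=-(q-q^{-1})[2]_q^2$ from \eqref{eq:G0-Oq} and $B_{0\delta}=-q^{-1}(q-q^{-1})$ from \eqref{eq:B0delta-convention}. Since both are scalars, they commute with everything, and the relation becomes
\[
-(q-q^{-1})[2]_q^2\bigl(B_\delta+q^{-1}\widetilde G_1\bigr)=0 .
\]
The prefactor is non-zero by Lemma~\ref{lem:nonvanishing-scalars}, so $\widetilde G_1=-qB_\delta$.

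\emph{Expanding $B_\delta$.} By \eqref{eq:Bdelta},
\[
-qB_\delta=-q\bigl(q^{-2}W_1W_0-W_0W_1\bigr)=qW_0W_1-q^{-1}W_1W_0=\qcomm{W_0}{W_1},
\]
which proves \eqref{eq:tG1-Bdelta}.

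\emph{The formula for $G_1$.} Apply the antiautomorphism $\dagger$, which fixes $W_0$ and $W_1$. By \eqref{eq:dagger-alternating}, $G_1=\widetilde G_1^\dagger$. Because $\dagger$ reverses products,
\[
G_1=\bigl(qW_0W_1-q^{-1}W_1W_0\bigr)^\dagger=qW_1W_0-q^{-1}W_0W_1=\qcomm{W_1}{W_0},
\]
which proves \eqref{eq:G1-low-check}.

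The only delicate point I expect is the bookkeeping of conventions. The scalar $B_{0\delta}$ appearing in \eqref{eq:root-tG-recursion}, as quoted from \cite{Terwilliger2022OqACE}, must be the same scalar as in \eqref{eq:B0delta-convention}. The paper already relies on this identification in the proof of Proposition~\ref{prop:tG-leading}.

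As an independent check, I would also confirm the result against Proposition~\ref{prop:tG-leading}. Its $n=1$ case predicts $\init(\widetilde G_1)=(q-q^{-1})q^{-1}[2]_q\,xy$. On the other hand, the initial form of $\qcomm{W_0}{W_1}$ is $q\,x\star y-q^{-1}\,y\star x$. Using $x\star y=xy+q^{-2}yx$ and $y\star x=yx+q^{-2}xy$, this equals $(q-q^{-3})xy$, and $q-q^{-3}=(q-q^{-1})q^{-1}[2]_q$, in agreement with the prediction.
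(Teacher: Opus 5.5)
Your proposal is correct and follows the paper's proof. Like the paper, you specialise the recursion \eqref{eq:root-tG-recursion} at $n=1$, substitute the scalar conventions for $\widetilde G_0$ and $B_{0\delta}$, cancel the non-zero scalar to get $\widetilde G_1=-qB_\delta$, expand $B_\delta$, and then apply $\dagger$ to obtain $G_1$. Your initial-form consistency check against Proposition~\ref{prop:tG-leading} is a correct extra verification, but the proof does not need it.
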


\begin{proof}
Taking $n=1$ in \eqref{eq:root-tG-recursion} gives
\[
        0=B_\delta\widetilde G_0
        +[2]_q^2B_{0\delta}\widetilde G_1.
\]
Substituting \eqref{eq:G0-Oq} and
\eqref{eq:B0delta-convention}, and cancelling the resulting non-zero
scalar, gives $\widetilde G_1=-qB_\delta$.
Equation \eqref{eq:Bdelta} now yields
\[
        -qB_\delta
        =
        qW_0W_1-q^{-1}W_1W_0
        =
        \qcomm{W_0}{W_1}.
\]
Applying $\dagger$ and using \eqref{eq:dagger-alternating} gives
\eqref{eq:G1-low-check}.
\end{proof}

\subsection{The families \texorpdfstring{$W^-$}{W-} and
\texorpdfstring{$W^+$}{W+}}

\begin{proposition}
\label{prop:W-leading}
For every $n\in\N$,
\begin{align}
\init(W_{-n})
 &=
 (-1)^nq^{-n}[2]_q^{-n}W_{-n}^{\sh},
\label{eq:Wminus-leading}\\
\init(W_{n+1})
 &=
 (-1)^nq^{-n}[2]_q^{-n}W_{n+1}^{\sh}.
\label{eq:Wplus-leading}
\end{align}
In particular,
$\deg_F W_{-n}=\deg_F W_{n+1}=2n+1$.
\end{proposition}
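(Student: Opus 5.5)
Strong induction on $n$, using relation \eqref{eq:Oq-tG-W0} to pass from $W_{-(n-1)}$ to $W_{-n}$ and relation \eqref{eq:Oq-W1-tG} to pass from $W_n$ to $W_{n+1}$. Leading forms are read off in the $q$-shuffle algebra via \eqref{eq:qshuffle-Wminus}--\eqref{eq:qshuffle-Wplus}. Put $e_n=(-1)^nq^{-n}[2]_q^{-n}$.

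The base cases are immediate: $W_0$ and $W_1$ have filtration degree $1$ with initial forms $x=W_0^{\sh}$ and $y=W_1^{\sh}$, and $e_0=1$.

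For the induction step, fix $n\geqslant 1$ and assume that $\init(W_{-r})=e_rW_{-r}^{\sh}$ and $\init(W_{r+1})=e_rW_{r+1}^{\sh}$ for all $r<n$. Equation \eqref{eq:Oq-tG-W0} gives
\[
W_{-n}=W_n+\rho_0^{-1}\qcomm{\widetilde G_n}{W_0}.
\]
Since $n\geqslant 1$, we have $W_n=W_{(n-1)+1}$, which by hypothesis has filtration degree $2n-1$. By Proposition~\ref{prop:tG-leading}, $\widetilde G_n\in F_{2n}\Oq$, so the $q$-commutator lies in $F_{2n+1}\Oq$. Its degree-$(2n+1)$ component is
\[
\rho_0^{-1}\qcomm{\init(\widetilde G_n)}{x}.
\]
By \eqref{eq:tG-leading} and \eqref{eq:qshuffle-Wminus} this equals
\[
\rho_0^{-1}(-1)^{n+1}(q-q^{-1})^2q^{-n}[2]_q^{2-n}\,W_{-n}^{\sh}.
\]
With $\rho_0=-(q^2-q^{-2})^2=-(q-q^{-1})^2[2]_q^2$, the scalar simplifies to $(-1)^nq^{-n}[2]_q^{-n}=e_n$. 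Since $W_{-n}^{\sh}$ is a non-zero word, this also gives $\deg_FW_{-n}=2n+1$.

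The same argument applied to \eqref{eq:Oq-W1-tG} gives
\[
W_{n+1}=W_{-(n-1)}+\rho_0^{-1}\qcomm{W_1}{\widetilde G_n}.
\]
Here the lower term $W_{-(n-1)}$ has degree $2n-1$ by hypothesis. The leading component is
\[
\rho_0^{-1}\qcomm{y}{\init(\widetilde G_n)},
\]
which \eqref{eq:qshuffle-Wplus} turns into the same scalar $e_n$ times $W_{n+1}^{\sh}$. So the two families can be handled in parallel within a single induction. Alternatively, the $W^+$ case follows from the $W^-$ case by applying the antiautomorphism $\tau$ together with word reversal in $\Vsh$.

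The only delicate point is the scalar bookkeeping: checking that $(q-q^{-1})\cdot(q-q^{-1})\,[2]_q^{2-n}/\rho_0$ gives exactly $-[2]_q^{-n}$ with the correct sign. It also has to be confirmed that the $q$-commutator is taken in the same order in $\Oq$ and in $\Vsh$, which holds because $\psi$ and $\iota$ are algebra maps. There is no other obstacle.
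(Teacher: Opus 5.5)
Your proposal is correct and follows the paper's proof essentially verbatim: induction on $n$, solving \eqref{eq:Oq-tG-W0} and \eqref{eq:Oq-W1-tG} for the new generator, discarding the lower terms of degree $2n-1$, and reading off the degree-$(2n+1)$ component via Proposition~\ref{prop:tG-leading} and \eqref{eq:qshuffle-Wminus}--\eqref{eq:qshuffle-Wplus}, with the same scalar computation. One small correction to your optional alternative via $\tau$: the map $\tau$ induces on $\gr\Oq$ the composite of word reversal with the letter swap $x\leftrightarrow y$, not word reversal alone; your main argument does not need this alternative.
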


\begin{proof}
We argue by induction on $n$.  For $n=0$, one has
$\init(W_0)=x=W_0^{\sh}$ and
$\init(W_1)=y=W_1^{\sh}$.

Let $n\geqslant 1$, and assume the result at all smaller indices.
By Proposition~\ref{prop:tG-leading},
$\widetilde G_n\in F_{2n}\Oq$.  Equations
\eqref{eq:Oq-tG-W0} and \eqref{eq:Oq-W1-tG}, together with the induction
hypothesis, imply $W_{-n},W_{n+1}\in F_{2n+1}\Oq$.
Taking the degree-$(2n+1)$ components in those relations gives
\begin{align}
\qcomm{\init(\widetilde G_n)}{x}
 &=
 \rho_0[W_{-n}]_{2n+1},
\label{eq:top-Wminus}\\
\qcomm{y}{\init(\widetilde G_n)}
 &=
 \rho_0[W_{n+1}]_{2n+1}.
\label{eq:top-Wplus}
\end{align}
The terms $W_n$ and $W_{-(n-1)}$ disappear because they have
filtration degree $2n-1$.

By Proposition~\ref{prop:tG-leading} and
\eqref{eq:qshuffle-Wminus}--\eqref{eq:qshuffle-Wplus}, the left-hand
sides of \eqref{eq:top-Wminus} and \eqref{eq:top-Wplus} are
respectively
\begin{align*}
(-1)^{n+1}(q-q^{-1})^2q^{-n}[2]_q^{2-n}W_{-n}^{\sh}
\quad\text{and}\quad
(-1)^{n+1}(q-q^{-1})^2q^{-n}[2]_q^{2-n}W_{n+1}^{\sh}.
\end{align*}
Since $\rho_0=-(q-q^{-1})^2[2]_q^2$,
division by $\rho_0$ gives
\begin{align*}
[W_{-n}]_{2n+1}
 &=
 (-1)^nq^{-n}[2]_q^{-n}W_{-n}^{\sh},\\
[W_{n+1}]_{2n+1}
 &=
 (-1)^nq^{-n}[2]_q^{-n}W_{n+1}^{\sh}.
\end{align*}
Both classes are non-zero, so the asserted filtration degrees and initial
forms follow.
\end{proof}

We now collect the four initial-form formulae.

\begin{theorem}
\label{thm:all-leading}
For every $n\in\N$,
\begin{align}
\init(W_{-n})
 &=
 (-1)^nq^{-n}[2]_q^{-n}W_{-n}^{\sh},
\label{eq:all-leading-Wminus}\\
\init(W_{n+1})
 &=
 (-1)^nq^{-n}[2]_q^{-n}W_{n+1}^{\sh},
\label{eq:all-leading-Wplus}\\
\init(G_n)
 &=
 (-1)^{n+1}(q-q^{-1})q^{-n}[2]_q^{2-n}
 G_n^{\sh},
\label{eq:all-leading-G}\\
\init(\widetilde G_n)
 &=
 (-1)^{n+1}(q-q^{-1})q^{-n}[2]_q^{2-n}
 \widetilde G_n^{\sh}.
\label{eq:all-leading-tG}
\end{align}
In particular,
\[
 \deg_F W_{-n}=\deg_F W_{n+1}=2n+1,
 \qquad
 \deg_F G_n=\deg_F\widetilde G_n=2n.
\]
\end{theorem}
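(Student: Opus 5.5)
Three of the four formulae are already in hand: \eqref{eq:all-leading-Wminus} and \eqref{eq:all-leading-Wplus} are Proposition~\ref{prop:W-leading}, and \eqref{eq:all-leading-tG} is Proposition~\ref{prop:tG-leading}. So the plan is to derive \eqref{eq:all-leading-G} from \eqref{eq:all-leading-tG} using the antiautomorphism $\dagger$, and then read off the filtration degrees.

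First, I will check that $\dagger$ fixes $W_0$ and $W_1$. It is an antiautomorphism, so it maps a word in $W_0,W_1$ of length $d$ to the reversed word. Hence $\dagger$ preserves every $F_d\Oq$. It therefore induces a graded antiautomorphism $\gr\dagger$ of $\gr\Oq\simeq\Uplus$ with $A\mapsto A$ and $B\mapsto B$. Under $\iota$ this becomes an antiautomorphism of $U$ fixing $x$ and $y$.

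Second, I will identify $\gr\dagger$ with word reversal. The map $\operatorname{rev}$ is an antiautomorphism of $(\Vsh,\star)$ fixing $x,y$, so it preserves $U$. Since $U$ is generated by $x,y$, two antiautomorphisms of $U$ agreeing on $x$ and $y$ coincide, and so $\iota\circ\gr\dagger\circ\iota^{-1}=\operatorname{rev}|_U$. The only care needed in the whole argument is here: one must confirm that $\operatorname{rev}$ really preserves $U$ and is multiplicative in the reversed sense for $\star$. Both facts are recorded in Section~\ref{sec:qshuffle}, so no real obstacle arises.

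Third, I will apply this to $\widetilde G_n$. Because $\dagger$ is bijective and filtration-preserving with inverse $\dagger$, it preserves $\deg_F$, and $\init(u^\dagger)=(\gr\dagger)(\init u)$. By \eqref{eq:dagger-alternating}, $G_n=\widetilde G_n^\dagger$ for $n\geqslant 1$, and $n=0$ is the scalar convention \eqref{eq:G0-Oq}. Then Proposition~\ref{prop:tG-leading} gives $\init(G_n)=(-1)^{n+1}(q-q^{-1})q^{-n}[2]_q^{2-n}\operatorname{rev}(\widetilde G_n^{\sh})$, and $\operatorname{rev}(\widetilde G_n^{\sh})=G_n^{\sh}$.

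Finally, the degree statements follow at once. The initial forms are nonzero scalar multiples of nonzero words, of word lengths $2n+1$ for $W_{-n},W_{n+1}$ and $2n$ for $G_n,\widetilde G_n$.
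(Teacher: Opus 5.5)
Your proposal is correct and follows essentially the same route as the paper: quote Propositions~\ref{prop:W-leading} and~\ref{prop:tG-leading} for the $W^{\pm}$ and $\widetilde G$ formulae, and treat $n=0$ by the scalar convention~\eqref{eq:G0-Oq}. For $n\geqslant 1$, transport $\init(\widetilde G_n)$ to $\init(G_n)$ via $\dagger$, whose induced antiautomorphism on $\gr\Oq$ agrees with word reversal on $U$; the degrees then follow because all the scalars are non-zero.
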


\begin{proof}
Equations \eqref{eq:all-leading-Wminus} and
\eqref{eq:all-leading-Wplus} follow from
Proposition~\ref{prop:W-leading}, while
\eqref{eq:all-leading-tG} follows from
Proposition~\ref{prop:tG-leading}.

For $n=0$, equation \eqref{eq:all-leading-G} is
\eqref{eq:G0-Oq}.  Let $n\geqslant 1$.  The antiautomorphism
$\dagger$ preserves the word-length filtration, since it fixes
$W_0,W_1$.  Its induced antiautomorphism on $\gr\Oq$ fixes
$A,B$.  Under the $q$-shuffle realisation, it therefore agrees with
the word-reversal antiautomorphism on the subalgebra generated by $x,y$.
Using \eqref{eq:dagger-alternating}, we obtain
\begin{align*}
\init(G_n)
 &=
 \operatorname{rev}\bigl(\init(\widetilde G_n)\bigr)\\
 &=
 (-1)^{n+1}(q-q^{-1})q^{-n}[2]_q^{2-n}
 \operatorname{rev}(\widetilde G_n^{\sh})\\
 &=
 (-1)^{n+1}(q-q^{-1})q^{-n}[2]_q^{2-n}
 G_n^{\sh}.
\end{align*}
This proves \eqref{eq:all-leading-G}.  Each scalar appearing in
\eqref{eq:all-leading-Wminus}--\eqref{eq:all-leading-tG} is non-zero, so
the stated filtration degrees follow.
\end{proof}

\section{Alternating PBW bases}
\label{sec:twelve}

We now combine the initial-form calculations of
Theorem~\ref{thm:all-leading} with the alternating PBW bases of
Theorem~\ref{thm:qshuffle-twelve}.

Set
\[
 W^-=\{W_{-i}\mid i\in\N\},\qquad
 W^+=\{W_{i+1}\mid i\in\N\},\qquad
 G=\{G_{i+1}\mid i\in\N\},\qquad
 \widetilde G=\{\widetilde G_{i+1}\mid i\in\N\}.
\]
Thus neither $G$ nor $\widetilde G$ contains the scalar element with
index zero.

Let $\prec$ be a total order on a union of three families.  We write
$X<Y<Z$ if every element of $X$ precedes every element of $Y$, and
every element of $Y$ precedes every element of $Z$, with respect to
$\prec$.  This notation imposes no condition on the order within an
individual family.

\begin{theorem}
\label{thm:twelve-PBW}
Let $H\in\{G,\widetilde G\}$, and let $\prec$ be a total order on
$W^-\cup H\cup W^+$.
Suppose that $\prec$ satisfies one of the six block conditions
\begin{equation}
\label{eq:six-Oq-orders}
\begin{array}{lll}
W^-<H<W^+,&
W^+<H<W^-,&
W^+<W^-<H,\\[1mm]
W^-<W^+<H,&
H<W^+<W^-,&
H<W^-<W^+.
\end{array}
\end{equation}
Then the $\prec$-ordered monomials in
$W^-\cup H\cup W^+$ form an $\F$-basis of $\Oq$.
\end{theorem}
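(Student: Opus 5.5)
The plan is to deduce Theorem~\ref{thm:twelve-PBW} directly from Lemma~\ref{lem:filtered-lifting}, applied to $R=\Oq$ with the word-length filtration. We identify $\gr\Oq$ with $\Uplus\subseteq(\Vsh,\star)$ via Theorem~\ref{thm:gr-isomorphism} and Rosso's embedding $\iota$. The index set $\Lambda$ is $W^-\cup H\cup W^+$, with the given total order $\prec$. For each element $u_\lambda$ we take $v_\lambda$ to be the corresponding alternating word, namely $W_{-i}^{\sh}$, $W_{i+1}^{\sh}$, and $G_{j+1}^{\sh}$ or $\widetilde G_{j+1}^{\sh}$ according as $H=G$ or $H=\widetilde G$. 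We transport $\prec$ to these words through the bijection $u_\lambda\leftrightarrow v_\lambda$.

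First, check the hypotheses of the lemma on the generators. Theorem~\ref{thm:all-leading} gives $\init(u_\lambda)=c_\lambda v_\lambda$, where $c_\lambda$ is an explicit product of signs, powers of $q$, powers of $[2]_q$ and possibly a factor $q-q^{-1}$. Each $c_\lambda$ is non-zero by Lemma~\ref{lem:nonvanishing-scalars}. The same theorem gives the filtration degrees $d_\lambda=2n+1$ for $W_{-n},W_{n+1}$ and $d_\lambda=2n$ for $G_n,\widetilde G_n$. These are positive because $H$ only contains indices $n\geqslant 1$, so the scalar $G_0=\widetilde G_0$ is excluded. The $v_\lambda$ are homogeneous of the correct degree, since a word of length $d$ lies in filtration degree $d$ of $\gr\Oq$ under the identification.

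Second, check the graded input. Since the bijection preserves the block structure, $\prec$ on the words satisfies the same block condition from \eqref{eq:six-shuffle-orders}. Theorem~\ref{thm:qshuffle-twelve}, including its final sentence for the case $H=G$, then says that the ordered $\star$-monomials in the $v_\lambda$ form a basis of $\Uplus=\gr\Oq$. It imposes no condition within a block, which matches the arbitrariness of $\prec$ inside each family. Lemma~\ref{lem:filtered-lifting} now gives the conclusion immediately.

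The only delicate point is bookkeeping. The products in $\gr\Oq$ must correspond to $q$-shuffle products rather than concatenation, and the identification of $\gr\Oq$ with $U$ must send the filtration grading to word length. Both hold because $\psi$ and $\iota$ are algebra isomorphisms sending $[W_0]_1,[W_1]_1$ to the length-one words $x,y$. No genuine obstacle remains, since all the substantive work was done in Theorem~\ref{thm:all-leading}.
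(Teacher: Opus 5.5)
Your proposal is correct and follows the paper's proof essentially verbatim: you take the initial forms $\init(X)=c_XX^{\sh}$ with $c_X\in\F^\times$ from Theorem~\ref{thm:all-leading}, the graded PBW bases from Theorem~\ref{thm:qshuffle-twelve} (including its $G$-variant) in $\Uplus\simeq\gr\Oq$, and conclude with Lemma~\ref{lem:filtered-lifting}. Your extra remarks are accurate: positivity of the filtration degrees (the scalar $G_0=\widetilde G_0$ is excluded from $H$), transfer of the block condition to the words, and products in $\gr\Oq$ being $q$-shuffle products.
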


\begin{proof}
For every $X\in W^-\cup H\cup W^+$, let $X^{\sh}$ denote the
corresponding alternating word in the $q$-shuffle algebra.  By
Theorem~\ref{thm:all-leading}, $X$ has positive filtration degree and
$\init(X)=c_XX^{\sh}$ for some $c_X\in\F^\times$.

Transfer the order $\prec$ to the corresponding alternating words.
By Theorem~\ref{thm:qshuffle-twelve}, the $\prec$-ordered monomials in
\[
        \{X^{\sh}\mid X\in W^-\cup H\cup W^+\}
\]
form an $\F$-basis of $\Uplus\simeq\gr\Oq$.
Lemma~\ref{lem:filtered-lifting} now shows that the corresponding ordered
monomials in $W^-\cup H\cup W^+$ form an $\F$-basis of $\Oq$.
\end{proof}

The case $H=\widetilde G$ proves
\cite[Conjecture~16.2]{Terwilliger2022OqACE}.

\begin{corollary}
\label{cor:alternating-polynomial-algebras}
The following four subalgebras of $\Oq$ are polynomial algebras in the
displayed generators:
\begin{equation*}
\begin{gathered}
 \F[W_0,W_{-1},W_{-2},\ldots],\qquad
 \F[W_1,W_2,W_3,\ldots],\\
 \F[G_1,G_2,G_3,\ldots],\qquad
 \F[\widetilde G_1,\widetilde G_2,\widetilde G_3,\ldots].
\end{gathered}
\end{equation*}
\end{corollary}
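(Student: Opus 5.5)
The plan is to deduce the corollary directly from Theorem~\ref{thm:twelve-PBW}.  We need two things: each family is commutative, and the ordered monomials in it are linearly independent.  A commutative algebra generated by a family whose ordered monomials are linearly independent is a polynomial algebra in that family.  Indeed, the surjection from $\F[t_1,t_2,\ldots]$ sends the monomial basis onto those ordered monomials, so it is injective.

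First we record commutativity inside $\Oq$.  Relations \eqref{eq:Aq4} and \eqref{eq:Aq10} give $[\cW_{-k},\cW_{-\ell}]=0$, $[\cW_{k+1},\cW_{\ell+1}]=0$, $[\cG_{k+1},\cG_{\ell+1}]=0$ and $[\ctG_{k+1},\ctG_{\ell+1}]=0$ in $\Aq$.  Applying the epimorphism $\gamma$, each of $W^-$, $W^+$, $G$ and $\widetilde G$ generates a commutative subalgebra of $\Oq$.  In particular, each of the four displayed algebras is spanned by the ordered monomials in its generators, for any fixed order within the family.

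Second, we prove linear independence.  Take $H=G$ (respectively $H=\widetilde G$) and choose any order satisfying the block condition $W^-<H<W^+$.  By Theorem~\ref{thm:twelve-PBW}, the ordered monomials in $W^-\cup H\cup W^+$ form a basis of $\Oq$.  The ordered monomials involving only one block are a subfamily of this basis.  For the block $W^-$, these are the monomials with empty $H$- and $W^+$-parts; the same holds for $W^+$ and for $H$.  A subfamily of a basis is linearly independent, so the ordered monomials in each single family are linearly independent.

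Combining the two steps, each of the four subalgebras is free commutative on its displayed generators.  The generators are pairwise distinct and nonzero, since they have distinct initial forms by Theorem~\ref{thm:all-leading}.  There is no real obstacle here.  The only point needing care is to use only the monomials from a single block, so that restricting the three-family basis is legitimate for every family, including $G$ and $\widetilde G$ separately.
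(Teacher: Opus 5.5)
Your proposal is correct and takes the same approach as the paper: commutativity within each family comes from applying $\gamma$ to \eqref{eq:Aq4} and \eqref{eq:Aq10}, and linear independence of the single-family monomials comes from restricting a basis from Theorem~\ref{thm:twelve-PBW}. Your extra remarks on distinctness of the generators and on choosing a block order are harmless additions.
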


\begin{proof}
Applying the canonical central reduction to
\eqref{eq:Aq4} and \eqref{eq:Aq10} shows that the elements within each of
the four families commute pairwise.  By
Theorem~\ref{thm:twelve-PBW}, the monomials in any one of these families
are linearly independent.  Hence each family is algebraically independent
and generates the corresponding polynomial algebra.
\end{proof}

\begin{remark}
The proof of Lemma~\ref{lem:filtered-lifting} shows that the PBW bases in
Theorem~\ref{thm:twelve-PBW} are compatible with the word-length
filtration.  For every $m\geqslant 1$, the two generators $W_{-(m-1)}$ and
$W_m$ have filtration degree $2m-1$, while $G_m$ and
$\widetilde G_m$ have filtration degree $2m$.  Consequently,
\[
 \operatorname{Hilb}_{\gr\Oq}(t)
 =
 \prod_{m\geqslant 1}
 \frac{1}
 {(1-t^{2m-1})^2(1-t^{2m})}.
\]
\end{remark}

\section{Central specialisations of the alternating central extension}
\label{sec:central-specialisations}

We show that the twelve alternating PBW bases remain valid after arbitrary
scalar central specialisation of $\Aq$.

Put $z_0=1$.  By
\cite[Lemma~13.1 and Theorem~13.5]{Terwilliger2022OqACE}, there is an
algebra isomorphism
$\Phi:\Aq\longrightarrow\Oq\otimes\F[z_1,z_2,\ldots]$ such that, for
$n\in\N$,
\begin{align}
\Phi(\cW_{-n})
 &=
 \sum_{k=0}^{n}W_{k-n}\otimes z_k,
\label{eq:central-convolution-Wminus}\\
\Phi(\cW_{n+1})
 &=
 \sum_{k=0}^{n}W_{n+1-k}\otimes z_k,
\label{eq:central-convolution-Wplus}\\
\Phi(\cG_n)
 &=
 \sum_{k=0}^{n}G_{n-k}\otimes z_k,
\label{eq:central-convolution-G}\\
\Phi(\ctG_n)
 &=
 \sum_{k=0}^{n}\widetilde G_{n-k}\otimes z_k.
\label{eq:central-convolution-tG}
\end{align}
Here $W_{k-n}=W_{-(n-k)}$, while
$G_0=\widetilde G_0=-(q-q^{-1})[2]_q^2$ are scalars.

For $n\in\N$, let $\Zconv_n$ be the central element of $\Aq$
defined by
\begin{equation}
\label{eq:convolution-central-coordinate}
        \Phi(\Zconv_n)=1\otimes z_n.
\end{equation}
Thus $\Zconv_0=1$.  The elements
$\Zconv_1,\Zconv_2,\ldots$ are algebraically independent and generate $Z(\Aq)$; see
\cite[Definition~13.8 and Lemma~13.9]{Terwilliger2022OqACE}.  We refer to
them as the \textbf{convolution central coordinates}.

\subsection{Scalar central fibres}

Let
$\boldsymbol\zeta=(\zeta_1,\zeta_2,\ldots)
\in\F^{\N\setminus\{0\}}$, and put $\zeta_0=1$.  Evaluation at
$z_n=\zeta_n$ defines an algebra homomorphism
$\operatorname{ev}_{\boldsymbol\zeta}:
\F[z_1,z_2,\ldots]\longrightarrow\F$.  Set
\[
 \pi_{\boldsymbol\zeta}
 =
 \bigl(
 \operatorname{id}_{\Oq}\otimes
 \operatorname{ev}_{\boldsymbol\zeta}
 \bigr)\circ\Phi:
 \Aq\longrightarrow\Oq.
\]
Then $\pi_{\boldsymbol\zeta}(\Zconv_n)=\zeta_n$ for
$n\geqslant 1$.
Since the $\Zconv_n$ freely generate $Z(\Aq)$, every
$\F$-valued central character of $\Aq$ is obtained uniquely in this
way.

Define the corresponding central fibre by
\begin{equation}
\label{eq:central-fibre-definition}
 \Aqfib{\boldsymbol\zeta}
 =
 \Aq\big/
 \left\langle
        \Zconv_n-\zeta_n
        \mid n\geqslant 1
 \right\rangle.
\end{equation}
Under $\Phi$, the ideal in
\eqref{eq:central-fibre-definition} corresponds to
\[
 \Oq\otimes
 \left\langle
        z_n-\zeta_n
        \mid n\geqslant 1
 \right\rangle.
\]
Consequently, $\pi_{\boldsymbol\zeta}$ induces an algebra isomorphism
\begin{equation}
\label{eq:central-fibre-isomorphism}
        \Aqfib{\boldsymbol\zeta}
        \xrightarrow{\ \sim\ }
        \Oq.
\end{equation}
We use \eqref{eq:central-fibre-isomorphism} to identify the central fibre
with $\Oq$, equipped with its word-length filtration.

For $n\in\N$, let
$W_{-n}^{(\boldsymbol\zeta)}$,
$W_{n+1}^{(\boldsymbol\zeta)}$,
$G_n^{(\boldsymbol\zeta)}$, and
$\widetilde G_n^{(\boldsymbol\zeta)}$ denote the images under
$\pi_{\boldsymbol\zeta}$ of the corresponding calligraphic generators.
Equations
\eqref{eq:central-convolution-Wminus}--\eqref{eq:central-convolution-tG}
give
\begin{align}
W_{-n}^{(\boldsymbol\zeta)}
 &=
 \sum_{k=0}^{n}\zeta_kW_{k-n}
 =
 W_{-n}
 +
 \sum_{k=1}^{n}
 \zeta_kW_{-(n-k)},
\label{eq:special-Wminus}\\
W_{n+1}^{(\boldsymbol\zeta)}
 &=
 \sum_{k=0}^{n}\zeta_kW_{n+1-k}
 =
 W_{n+1}
 +
 \sum_{k=1}^{n}
 \zeta_kW_{n+1-k},
\label{eq:special-Wplus}\\
G_n^{(\boldsymbol\zeta)}
 &=
 \sum_{k=0}^{n}\zeta_kG_{n-k}
 =
 G_n
 +
 \sum_{k=1}^{n}
 \zeta_kG_{n-k},
\label{eq:special-G}\\
\widetilde G_n^{(\boldsymbol\zeta)}
 &=
 \sum_{k=0}^{n}\zeta_k\widetilde G_{n-k}
 =
 \widetilde G_n
 +
 \sum_{k=1}^{n}
 \zeta_k\widetilde G_{n-k}.
\label{eq:special-tG}
\end{align}
In the last two formulae, the summand with $k=n$ is a scalar multiple
of $G_0=\widetilde G_0=-(q-q^{-1})[2]_q^2$.

\begin{proposition}
\label{prop:special-leading}
For every $\boldsymbol\zeta$ and every $n\in\N$,
\begin{align}
\init(W_{-n}^{(\boldsymbol\zeta)})
 &=
 \init(W_{-n}),
\label{eq:special-leading-Wminus}\\
\init(W_{n+1}^{(\boldsymbol\zeta)})
 &=
 \init(W_{n+1}),
\label{eq:special-leading-Wplus}\\
\init(G_n^{(\boldsymbol\zeta)})
 &=
 \init(G_n),
\label{eq:special-leading-G}\\
\init(\widetilde G_n^{(\boldsymbol\zeta)})
 &=
 \init(\widetilde G_n).
\label{eq:special-leading-tG}
\end{align}
Consequently,
\[
 \deg_F W_{-n}^{(\boldsymbol\zeta)}
 =\deg_F W_{n+1}^{(\boldsymbol\zeta)}=2n+1,
 \qquad
 \deg_F G_n^{(\boldsymbol\zeta)}
 =\deg_F\widetilde G_n^{(\boldsymbol\zeta)}=2n.
\]
\end{proposition}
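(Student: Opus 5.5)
The plan is to read the initial forms off the explicit convolution formulae \eqref{eq:special-Wminus}--\eqref{eq:special-tG}, using the filtration degrees already computed in Theorem~\ref{thm:all-leading}. In each formula the specialised generator is the unspecialised one plus an $\F$-linear combination of generators from the same family with strictly smaller index. All that is needed is that these correction terms lie in strictly lower filtration.

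First, the family $W^-$. By Theorem~\ref{thm:all-leading}, $W_{-n}\in F_{2n+1}\Oq\setminus F_{2n}\Oq$. For $1\leqslant k\leqslant n$, the correction term $\zeta_kW_{-(n-k)}$ has filtration degree at most $2(n-k)+1\leqslant 2n-1$, so it lies in $F_{2n}\Oq$. Hence $W_{-n}^{(\boldsymbol\zeta)}\in F_{2n+1}\Oq$ and
\[
 [W_{-n}^{(\boldsymbol\zeta)}]_{2n+1}=[W_{-n}]_{2n+1}=\init(W_{-n}).
\]
This class is non-zero, so $\deg_F W_{-n}^{(\boldsymbol\zeta)}=2n+1$, and the initial forms agree. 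The same argument applies verbatim to $W^+$ using \eqref{eq:special-Wplus}, since $W_{n+1-k}$ has degree $2(n-k)+1<2n+1$.

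For the imaginary families, the same reasoning applies to \eqref{eq:special-G} and \eqref{eq:special-tG}. For $1\leqslant k\leqslant n-1$, the term $\zeta_kG_{n-k}$ has degree $2(n-k)<2n$. The term with $k=n$ is the scalar $\zeta_nG_0\in F_0\Oq$, which also lies in $F_{2n-1}\Oq$ whenever $n\geqslant 1$. Thus $[G_n^{(\boldsymbol\zeta)}]_{2n}=\init(G_n)\neq0$, and similarly for $\widetilde G_n$. The case $n=0$ needs a separate look. There the sum reduces to $\zeta_0G_0=G_0$, so the statement is trivial: the degree is $0$ and the initial form is the non-zero scalar class, consistent with \eqref{eq:G0-Oq}.

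There is no real obstacle. The only points needing care are that the identification \eqref{eq:central-fibre-isomorphism} is what transports the word-length filtration to the fibre, and that the scalar term at $k=n$ for the $G$-families is harmless only because $n\geqslant1$. The degree bound for this term uses $F_0\Oq\subseteq F_{2n-1}\Oq$, which needs $2n-1\geqslant 0$.
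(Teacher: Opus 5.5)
Your proposal is correct and follows the paper's proof. In each convolution formula the $k=0$ term is the unspecialised generator, and Theorem~\ref{thm:all-leading} shows that every $k\geqslant 1$ term has strictly smaller filtration degree. Your separate check of the scalar term at $k=n$, and of the case $n=0$ for the $G$-families, is a harmless extra detail; the paper absorbs it into $\deg_F G_0=\deg_F\widetilde G_0=0<2n$ for $n\geqslant 1$.
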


\begin{proof}
In each of
\eqref{eq:special-Wminus}--\eqref{eq:special-tG}, the term with $k=0$
is the corresponding canonical alternating generator.

For $1\leqslant k\leqslant n$, Theorem~\ref{thm:all-leading} gives
\begin{align*}
 \deg_F W_{-(n-k)}
 &=\deg_F W_{n+1-k}=2(n-k)+1<2n+1,\\
 \deg_F G_{n-k}
 &=\deg_F\widetilde G_{n-k}=2(n-k)<2n.
\end{align*}
Thus every term with $k\geqslant 1$ has filtration degree strictly smaller
than the term with $k=0$.  The initial forms are therefore unchanged.
The degree statements follow from Theorem~\ref{thm:all-leading}.
\end{proof}

Set
\begin{equation*}
\begin{aligned}
 W^-_{\boldsymbol\zeta}
 &=\{W_{-i}^{(\boldsymbol\zeta)}\mid i\in\N\},&
 W^+_{\boldsymbol\zeta}
 &=\{W_{i+1}^{(\boldsymbol\zeta)}\mid i\in\N\},\\
 G_{\boldsymbol\zeta}
 &=\{G_{i+1}^{(\boldsymbol\zeta)}\mid i\in\N\},&
 \widetilde G_{\boldsymbol\zeta}
 &=\{\widetilde G_{i+1}^{(\boldsymbol\zeta)}\mid i\in\N\}.
\end{aligned}
\end{equation*}

\begin{theorem}
\label{thm:central-fibre-PBW}
Let
$H_{\boldsymbol\zeta}\in
\{G_{\boldsymbol\zeta},\widetilde G_{\boldsymbol\zeta}\}$, and let
$\prec$ be a total order on
$W^-_{\boldsymbol\zeta}\cup H_{\boldsymbol\zeta}
\cup W^+_{\boldsymbol\zeta}$.
Suppose that $\prec$ satisfies one of the six block conditions in
\eqref{eq:six-Oq-orders}, after replacing $W^-$, $H$, and $W^+$
by $W^-_{\boldsymbol\zeta}$, $H_{\boldsymbol\zeta}$, and
$W^+_{\boldsymbol\zeta}$, respectively.  Then the $\prec$-ordered monomials in the specialised
alternating generators form an $\F$-basis of
$\Aqfib{\boldsymbol\zeta}$.
\end{theorem}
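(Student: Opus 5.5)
The plan is to transport the problem through the isomorphism \eqref{eq:central-fibre-isomorphism} and then repeat the proof of Theorem~\ref{thm:twelve-PBW} with the specialised generators in place of the canonical ones.

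The map $\pi_{\boldsymbol\zeta}$ induces an algebra isomorphism $\Aqfib{\boldsymbol\zeta}\to\Oq$. Under it, the image of each calligraphic generator in the fibre goes to the corresponding specialised element $W_{-n}^{(\boldsymbol\zeta)}$, $W_{n+1}^{(\boldsymbol\zeta)}$, $G_n^{(\boldsymbol\zeta)}$ or $\widetilde G_n^{(\boldsymbol\zeta)}$. An algebra isomorphism carries ordered monomials to ordered monomials and bases to bases. It therefore suffices to show that the $\prec$-ordered monomials in $W^-_{\boldsymbol\zeta}\cup H_{\boldsymbol\zeta}\cup W^+_{\boldsymbol\zeta}$ form an $\F$-basis of $\Oq$, with the order $\prec$ transported along the bijection. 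We then equip $\Oq$ with its word-length filtration.

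Next we record the initial forms. By Proposition~\ref{prop:special-leading}, each specialised generator has the same initial form and the same positive filtration degree as the corresponding canonical generator. Combining this with Theorem~\ref{thm:all-leading}, every $X\in W^-_{\boldsymbol\zeta}\cup H_{\boldsymbol\zeta}\cup W^+_{\boldsymbol\zeta}$ satisfies $\init(X)=c_XX^{\sh}$ for some $c_X\in\F^\times$. Here $X^{\sh}$ is the alternating word with the same index, and the block families correspond as follows: $W^-_{\boldsymbol\zeta}$ goes to the words $W_{-i}^{\sh}$, $W^+_{\boldsymbol\zeta}$ to the words $W_{i+1}^{\sh}$, and $H_{\boldsymbol\zeta}$ to the words $G_{j+1}^{\sh}$ or $\widetilde G_{j+1}^{\sh}$. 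In particular, the assignment $X\mapsto X^{\sh}$ is injective, and it carries a block condition of \eqref{eq:six-Oq-orders} for $\prec$ to the matching block condition of \eqref{eq:six-shuffle-orders}.

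Finally, Theorem~\ref{thm:qshuffle-twelve} says that the transferred ordered monomials in the alternating words form a basis of $\Uplus\simeq\gr\Oq$. Lemma~\ref{lem:filtered-lifting} then lifts this basis to the $\prec$-ordered monomials in the specialised generators, which proves the theorem.

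The only step needing care is the first: checking that the identification via $\pi_{\boldsymbol\zeta}$ really sends the fibre images of the generators to the elements in \eqref{eq:special-Wminus}--\eqref{eq:special-tG}. This follows directly from the definition of $\pi_{\boldsymbol\zeta}$ and from the fact, already recorded before the statement, that the ideal in \eqref{eq:central-fibre-definition} is exactly the kernel of $\pi_{\boldsymbol\zeta}$. Everything else is formal.
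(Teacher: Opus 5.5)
Your proposal is correct and follows the paper's own proof essentially step for step. You identify the fibre with $\Oq$ via $\pi_{\boldsymbol\zeta}$, obtain $\init(X^{(\boldsymbol\zeta)})=c_XX^{\sh}$ from Proposition~\ref{prop:special-leading} and Theorem~\ref{thm:all-leading}, and lift the basis of Theorem~\ref{thm:qshuffle-twelve} to the specialised generators using Lemma~\ref{lem:filtered-lifting}.
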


\begin{proof}
We use \eqref{eq:central-fibre-isomorphism} to regard the specialised
generators as elements of $\Oq$.

For every
$X^{(\boldsymbol\zeta)}\in
W^-_{\boldsymbol\zeta}\cup H_{\boldsymbol\zeta}
\cup W^+_{\boldsymbol\zeta}$, let $X^{\sh}$ be the corresponding
alternating word.  By Proposition~\ref{prop:special-leading} and
Theorem~\ref{thm:all-leading},
$\init(X^{(\boldsymbol\zeta)})=c_XX^{\sh}$ for some
$c_X\in\F^\times$.

Transfer $\prec$ to the corresponding alternating words.  By
Theorem~\ref{thm:qshuffle-twelve}, their $\prec$-ordered monomials form an $\F$-basis of $\Uplus\simeq\gr\Oq$.
Lemma~\ref{lem:filtered-lifting} therefore shows that the corresponding
ordered monomials in the specialised alternating generators form an
$\F$-basis of $\Oq$, and hence of
$\Aqfib{\boldsymbol\zeta}$.
\end{proof}

\subsection{Comparison of central coordinates}

The preceding argument uses only the convolution coordinates
$\{\Zconv_n\}_{n\geqslant 1}$.  We finish by comparing them with the
central coordinates used in Section~\ref{sec:BB}.

Put $h=[2]_q=q+q^{-1}$.  Let
$\Zvee_0,\Zvee_1,\Zvee_2,\ldots$ be the normalised central elements of
\cite[Definitions~8.3 and~8.4]{Terwilliger2022OqACE}; in particular,
$\Zvee_0=1$.  By
\cite[Lemma~13.10(i)]{Terwilliger2022OqACE}, for every
$n\geqslant 1$ there is a polynomial
$P_n\in\F[X_1,\ldots,X_{n-1}]$ with zero constant term such that
\begin{equation}
\label{eq:Zvee-Zconv-triangular}
 \Zvee_n
 =
 h^n(q^n+q^{-n})\Zconv_n
 +
 P_n(\Zconv_1,\ldots,\Zconv_{n-1}).
\end{equation}
The polynomial $P_n$ has weighted degree at most $n$ when
$\deg X_k=k$.  In particular,
\begin{equation}
\label{eq:Zvee1-Zconv1}
        \Zvee_1=h^2\Zconv_1.
\end{equation}

The coefficient $h^n(q^n+q^{-n})$ is non-zero by Lemma~\ref{lem:nonvanishing-scalars}.  Moreover,
\cite[Lemma~13.10(ii)]{Terwilliger2022OqACE} gives the inverse triangular
system: $\Zconv_n$ is a polynomial in
$\Zvee_1,\ldots,\Zvee_n$, with coefficient $h^{-n}(q^n+q^{-n})^{-1}$ at $\Zvee_n$.
Consequently, prescribing the scalar values of the
$\Zvee_n$ is equivalent to prescribing those of the $\Zconv_n$.

Assume now that $\operatorname{char}\F\neq 2$.  Let $\Delta_n$
denote the central elements used by Baseilhac and Belliard.  Combining
\cite[Definitions~8.3 and~8.4]{Terwilliger2022OqACE} with
\cite[Remark~8.17]{Terwilliger2021ACE} gives
\begin{equation}
\label{eq:Delta-Zvee}
 \Delta_n
 =
 -2
 \frac{(q-q^{-1})h^{2-n}}
      {q^n+q^{-n}}
 \Zvee_n
 \qquad(n\geqslant 1).
\end{equation}
Every scalar coefficient in \eqref{eq:Delta-Zvee} is non-zero.  Hence, in
characteristic different from $2$, prescribing the values of the
$\Delta_n$, the $\Zvee_n$, or the $\Zconv_n$ gives equivalent
systems of scalar central specialisations.

No restriction on $\operatorname{char}\F$ is required for
Theorem~\ref{thm:central-fibre-PBW}; the characteristic assumption is used
only in the comparison with the Baseilhac--Belliard coordinates
$\Delta_n$.

\section{The Baseilhac--Belliard
\texorpdfstring{$WG$}{WG}-basis}
\label{sec:BB}

Throughout this section, we assume that $\operatorname{char}\F\neq 2$.
This assumption is used only to pass between the central coordinates
$\Delta_n$ of Baseilhac and Belliard and the central coordinates of
Section~\ref{sec:central-specialisations}.

For $\rho\in\F^\times$, let $\Aq(\rho)$ denote the current algebra
obtained from Definition~\ref{def:Aq} by replacing $\rho_0$ in
\eqref{eq:Aq2} and \eqref{eq:Aq3} by $\rho$.  Let
$\Delta_1^{(\rho)},\Delta_2^{(\rho)},\ldots$ be the corresponding
central elements of Baseilhac and Belliard.  For
$\boldsymbol\delta=(\delta_1,\delta_2,\ldots)
\in\F^{\N\setminus\{0\}}$, set
\begin{equation}
\label{eq:BB-quotient-rho}
 \BBqquot{\rho}{\boldsymbol\delta}
 =
 \Aq(\rho)\big/
 \left\langle
        \Delta_n^{(\rho)}-2\delta_n
        \mid n\geqslant 1
 \right\rangle.
\end{equation}
These are the quotients of
\cite[Definition~3.1]{BaseilhacBelliard2017}.  For $\rho=\rho_0$, we identify $\Aq(\rho_0)=\Aq$ and omit the
superscript $(\rho_0)$ from the central elements.

Whenever a quotient
$\BBqquot{\rho}{\boldsymbol\delta}$ is fixed, the symbols
$W^-$, $W^+$, $G$, and $\widetilde G$ denote the images in the quotient of the corresponding four current
families.

\subsection{The normalised parameter}

For $n\geqslant 1$, define
\begin{equation}
\label{eq:BB-Zvee-values}
 \eta_n
 =
 -\frac{(q^n+q^{-n})[2]_q^{\,n-2}}
        {q-q^{-1}}\,
 \delta_n.
\end{equation}
By \eqref{eq:Delta-Zvee}, the relation $\Delta_n=2\delta_n$ is
equivalent to $\Zvee_n=\eta_n$.

\begin{lemma}
\label{lem:BB-central-fibre}
There is a unique sequence
$\boldsymbol\zeta=(\zeta_1,\zeta_2,\ldots)
\in\F^{\N\setminus\{0\}}$ such that
\begin{equation}
\label{eq:BB-central-ideal-equality}
 \left\langle
        \Delta_n-2\delta_n
        \mid n\geqslant 1
 \right\rangle
 =
 \left\langle
        \Zvee_n-\eta_n
        \mid n\geqslant 1
 \right\rangle
 =
 \left\langle
        \Zconv_n-\zeta_n
        \mid n\geqslant 1
 \right\rangle.
\end{equation}
Consequently, there is an algebra isomorphism
\begin{equation}
\label{eq:BB-normalised-fibre-isomorphism}
        \BBqquot{\rho_0}{\boldsymbol\delta}
        \xrightarrow{\ \sim\ }
        \Aqfib{\boldsymbol\zeta}.
\end{equation}
Under this isomorphism, the images of the current generators are the
specialised alternating generators
$W_{-n}^{(\boldsymbol\zeta)}$,
$W_{n+1}^{(\boldsymbol\zeta)}$,
$G_{n+1}^{(\boldsymbol\zeta)}$, and
$\widetilde G_{n+1}^{(\boldsymbol\zeta)}$, for $n\in\N$.
\end{lemma}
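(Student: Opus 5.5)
The plan is to prove the two equalities of ideals in \eqref{eq:BB-central-ideal-equality} separately, both times using only that the elements involved are central and related by invertible triangular polynomial substitutions. The isomorphism \eqref{eq:BB-normalised-fibre-isomorphism} then follows by comparing presentations.

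\emph{First equality.} By \eqref{eq:Delta-Zvee}, each $\Delta_n-2\delta_n$ is a non-zero scalar multiple of $\Zvee_n-\eta_n$. Indeed, with $\eta_n$ as in \eqref{eq:BB-Zvee-values}, one checks directly that
\[
-2\frac{(q-q^{-1})h^{2-n}}{q^n+q^{-n}}\,\eta_n=2\delta_n .
\]
The scalar $-2(q-q^{-1})h^{2-n}/(q^n+q^{-n})$ is non-zero because $\operatorname{char}\F\neq2$ and by Lemma~\ref{lem:nonvanishing-scalars}. So the generators of the two ideals agree up to units, and the ideals coincide.

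\emph{Second equality.} Define $\zeta_n$ recursively by solving \eqref{eq:Zvee-Zconv-triangular} at scalar values:
\[
\zeta_n=h^{-n}(q^n+q^{-n})^{-1}\bigl(\eta_n-P_n(\zeta_1,\ldots,\zeta_{n-1})\bigr).
\]
Let $I=\langle\Zconv_n-\zeta_n\rangle$ and $J=\langle\Zvee_n-\eta_n\rangle$.
\begin{itemize}
\item To show $J\subseteq I$: modulo $I$ each $\Zconv_k$ is congruent to $\zeta_k$. Since $P_n$ is a polynomial and all $\Zconv_k$ are central, $\Zvee_n\equiv h^n(q^n+q^{-n})\zeta_n+P_n(\zeta_1,\ldots,\zeta_{n-1})=\eta_n$.
\item To show $I\subseteq J$: argue by induction on $n$, using \eqref{eq:Zvee-Zconv-triangular} again. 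Modulo $J$ and the induction hypothesis, $h^n(q^n+q^{-n})\Zconv_n\equiv\eta_n-P_n(\zeta_1,\ldots,\zeta_{n-1})$. Dividing by the non-zero coefficient gives $\Zconv_n\equiv\zeta_n$.
\end{itemize}
For uniqueness, suppose a second sequence $\boldsymbol\zeta'$ gave the same ideal. Then $\zeta_n-\zeta'_n$ would lie in the ideal, so it would be zero unless the ideal is the whole algebra. It is not, since by \eqref{eq:central-fibre-isomorphism} the quotient is $\Oq\neq0$.

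\emph{The isomorphism and the generators.} For $\rho=\rho_0$ we have $\Aq(\rho_0)=\Aq$, and the current generators are the calligraphic alternating generators. Hence $\BBqquot{\rho_0}{\boldsymbol\delta}$ and $\Aqfib{\boldsymbol\zeta}$ are quotients of the same algebra by the same ideal. The identity map of $\Aq$ therefore induces the isomorphism \eqref{eq:BB-normalised-fibre-isomorphism}, and it sends the image of each current generator to the image of the corresponding calligraphic generator. Composing with \eqref{eq:central-fibre-isomorphism} identifies these images with $W_{-n}^{(\boldsymbol\zeta)}$, $W_{n+1}^{(\boldsymbol\zeta)}$, $G_{n+1}^{(\boldsymbol\zeta)}$ and $\widetilde G_{n+1}^{(\boldsymbol\zeta)}$.

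The main obstacle is bookkeeping rather than mathematics. One has to make sure that the Baseilhac--Belliard current algebra at $\rho_0$ is literally $\Aq$, with matching generator names, and that \eqref{eq:Delta-Zvee} holds as an identity in $\Aq$. Both are taken from the cited sources and the setup of Section~\ref{sec:BB}.
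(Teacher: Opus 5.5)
Your proposal is correct and follows essentially the same route as the paper: rescaling by a non-zero scalar for the first equality, the triangular recursion $\zeta_n=a_n^{-1}\bigl(\eta_n-P_n(\zeta_1,\ldots,\zeta_{n-1})\bigr)$ with $a_n=[2]_q^n(q^n+q^{-n})$ and induction for the second, and identification of two quotients of $\Aq$ by the same ideal. Your two inclusions repackage the paper's single identity $\Zvee_n-\eta_n=a_n(\Zconv_n-\zeta_n)+P_n(\Zconv_1,\ldots,\Zconv_{n-1})-P_n(\zeta_1,\ldots,\zeta_{n-1})$. Your uniqueness argument, that the ideal is proper because the fibre is $\Oq\neq 0$, spells out what the paper compresses into ``the same recursion''.
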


\begin{proof}
By \eqref{eq:Delta-Zvee},
\[
 \Delta_n-2\delta_n
 =
 -2\frac{(q-q^{-1})[2]_q^{\,2-n}}
          {q^n+q^{-n}}
 \bigl(\Zvee_n-\eta_n\bigr).
\]
The scalar multiplying
$\Zvee_n-\eta_n$ is non-zero by
Lemma~\ref{lem:nonvanishing-scalars} and the assumption
$\operatorname{char}\F\neq 2$.  This proves the first equality in
\eqref{eq:BB-central-ideal-equality}.

For $n\geqslant 1$, put $a_n=[2]_q^n(q^n+q^{-n})$.  By
\eqref{eq:Zvee-Zconv-triangular},
\[
 \Zvee_n
 =
 a_n\Zconv_n
 +
 P_n(\Zconv_1,\ldots,\Zconv_{n-1}),
\]
where $a_n\neq 0$.  Define $\zeta_n$ recursively by
\begin{equation}
\label{eq:BB-zeta-recursion}
 \zeta_n
 =
 a_n^{-1}
 \bigl(
 \eta_n-P_n(\zeta_1,\ldots,\zeta_{n-1})
 \bigr).
\end{equation}
For $n=1$, the polynomial $P_1$ is zero.

We show by induction on $n$ that
\begin{equation}
\label{eq:BB-truncated-ideal-equality}
 \left\langle
        \Zvee_k-\eta_k
        \mid 1\leqslant k\leqslant n
 \right\rangle
 =
 \left\langle
        \Zconv_k-\zeta_k
        \mid 1\leqslant k\leqslant n
 \right\rangle.
\end{equation}
Indeed,
\begin{align*}
\Zvee_n-\eta_n
={}&
a_n(\Zconv_n-\zeta_n)+
P_n(\Zconv_1,\ldots,\Zconv_{n-1})
-
P_n(\zeta_1,\ldots,\zeta_{n-1}).
\end{align*}
The second line belongs to the ideal generated by
$\Zconv_k-\zeta_k$ for $1\leqslant k<n$.
Since $a_n$ is invertible, the induction hypothesis gives
\eqref{eq:BB-truncated-ideal-equality}.  Taking the union over $n$
proves the second equality in
\eqref{eq:BB-central-ideal-equality}.  The same recursion also proves the
uniqueness of $\boldsymbol\zeta$.

The isomorphism
\eqref{eq:BB-normalised-fibre-isomorphism} now follows from the definition
of $\Aqfib{\boldsymbol\zeta}$.  The assertion concerning the current
generators follows from
\eqref{eq:special-Wminus}--\eqref{eq:special-tG}.
\end{proof}

\begin{remark}
\label{rem:BB-low-degree}
For $n=1$, equations
\eqref{eq:BB-Zvee-values} and
\eqref{eq:BB-zeta-recursion} give
\[
        \eta_1
        =
        -\frac{\delta_1}{q-q^{-1}},
        \qquad
        \zeta_1
        =
        -\frac{\delta_1}
        {(q-q^{-1})[2]_q^2}.
\]
Since $G_0=\widetilde G_0=-(q-q^{-1})[2]_q^2$, we obtain
\begin{align}
G_1^{(\boldsymbol\zeta)}
 &=
 G_1+\zeta_1G_0
 =
 \qcomm{W_1}{W_0}+\delta_1,
\label{eq:BB-low-check}\\
\widetilde G_1^{(\boldsymbol\zeta)}
 &=
 \widetilde G_1+\zeta_1\widetilde G_0
 =
 \qcomm{W_0}{W_1}+\delta_1.
\end{align}
Here we used
\eqref{eq:tG1-Bdelta} and
\eqref{eq:G1-low-check}.  These are precisely the formulae in
\cite[equation~(3.5)]{BaseilhacBelliard2017}.
\end{remark}

For later reference, consider the monomials
\begin{equation}
\label{eq:BB-monomials}
 W_{-k_1}^{a_1}\cdots W_{-k_N}^{a_N}
 G_{p_1+1}^{b_1}\cdots G_{p_P+1}^{b_P}
 W_{\ell_M+1}^{c_M}\cdots W_{\ell_1+1}^{c_1},
\end{equation}
where $N,P,M\in\N$,
\[
 0\leqslant k_1<\cdots<k_N,\qquad
 0\leqslant p_1<\cdots<p_P,\qquad
 0\leqslant \ell_1<\cdots<\ell_M,
\]
and $a_i,b_j,c_t\in\Z_{>0}$.
Empty products are interpreted as $1$.

\begin{theorem}
\label{thm:BB-normalised}
Let $H\in\{G,\widetilde G\}$, and let $\prec$ be a total order on
$W^-\cup H\cup W^+$ satisfying one of the six block conditions in
\eqref{eq:six-Oq-orders}.  Then the $\prec$-ordered monomials form an
$\F$-basis of
$\BBqquot{\rho_0}{\boldsymbol\delta}$.

In particular, the monomials in
\eqref{eq:BB-monomials} form an $\F$-basis of
$\BBqquot{\rho_0}{\boldsymbol\delta}$.
\end{theorem}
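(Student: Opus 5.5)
The plan is to transport the problem to a central fibre and then apply Theorem~\ref{thm:central-fibre-PBW}.

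First, fix $\boldsymbol\delta$ and let $\boldsymbol\zeta$ be the unique sequence supplied by Lemma~\ref{lem:BB-central-fibre}. The isomorphism \eqref{eq:BB-normalised-fibre-isomorphism} identifies $\BBqquot{\rho_0}{\boldsymbol\delta}$ with $\Aqfib{\boldsymbol\zeta}$. It sends the images of the current families $W^-$, $G$, $\widetilde G$, $W^+$ to $W^-_{\boldsymbol\zeta}$, $G_{\boldsymbol\zeta}$, $\widetilde G_{\boldsymbol\zeta}$, $W^+_{\boldsymbol\zeta}$, respectively, preserving indices. Because the isomorphism is a bijection on each family, a total order $\prec$ on $W^-\cup H\cup W^+$ transfers to a total order on $W^-_{\boldsymbol\zeta}\cup H_{\boldsymbol\zeta}\cup W^+_{\boldsymbol\zeta}$, and it satisfies the same block condition. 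An algebra isomorphism maps $\prec$-ordered monomials to the corresponding ordered monomials. Hence Theorem~\ref{thm:central-fibre-PBW} shows that the ordered monomials form an $\F$-basis of $\BBqquot{\rho_0}{\boldsymbol\delta}$.

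The one point requiring care is that the elements of each family remain pairwise distinct in the quotient. Otherwise the transferred ``order'' would not be well defined, and the monomials might coincide. This distinctness follows from Proposition~\ref{prop:special-leading}: the specialised generators have pairwise distinct initial forms, being non-zero multiples of distinct alternating words, or words of different degree. So the identification of index sets is injective.

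For the final assertion, take $H=G$ and the block condition $W^-<G<W^+$. Inside the blocks, order $W^-$ by increasing index $k$, order $G$ by increasing $p$, and order $W^+$ by decreasing $\ell$; the last choice matches the product $W_{\ell_M+1}^{c_M}\cdots W_{\ell_1+1}^{c_1}$ with $\ell_1<\cdots<\ell_M$. Every $\prec$-ordered monomial can be written uniquely in the form \eqref{eq:BB-monomials} by grouping repeated factors into powers, and conversely. Hence the monomials \eqref{eq:BB-monomials} are exactly the $\prec$-ordered monomials, and the first part applies.

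The argument has no real obstacle, since all the substantive work is done in Lemma~\ref{lem:BB-central-fibre} and Theorem~\ref{thm:central-fibre-PBW}. The only things to check carefully are the matching of generators under \eqref{eq:BB-normalised-fibre-isomorphism}, in particular that the index ranges $G_{n+1}$, $n\in\N$, agree, and the reindexing of monomials in \eqref{eq:BB-monomials}.
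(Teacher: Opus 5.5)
Your proposal is correct and follows essentially the same route as the paper: identify $\BBqquot{\rho_0}{\boldsymbol\delta}$ with $\Aqfib{\boldsymbol\zeta}$ via Lemma~\ref{lem:BB-central-fibre} and invoke Theorem~\ref{thm:central-fibre-PBW}. You then obtain \eqref{eq:BB-monomials} from the block order $W^-<G<W^+$, with increasing index in the first two blocks and decreasing index in the $W^+$-block. Your extra check that the specialised generators stay pairwise distinct, via Proposition~\ref{prop:special-leading}, is a harmless addition that the paper leaves implicit.
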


\begin{proof}
By Lemma~\ref{lem:BB-central-fibre}, the quotient is the central fibre
$\Aqfib{\boldsymbol\zeta}$, and its current generators are the
specialised alternating generators.  The first assertion is therefore
Theorem~\ref{thm:central-fibre-PBW}.

For the final assertion, take the block order $W^-<G<W^+$, order the $W^-$- and $G$-blocks by increasing index, and order the
$W^+$-block by decreasing index.  The resulting ordered monomials are
exactly those in \eqref{eq:BB-monomials}.
\end{proof}

\subsection{Arbitrary non-zero parameter}

The quantum-determinant generating series in
\cite[Proposition~2.1]{BaseilhacBelliard2017} is a linear combination of
terms of the following three types:
\[
        \cG+\ctG,\qquad
        \cW\cW,\qquad
        \rho^{-1}\cG\ctG.
\]
Consequently, each coefficient $\Delta_n^{(\rho)}$ has scaling degree
$2$ under the assignments $\deg_s\cW=1$,
$\deg_s\cG=\deg_s\ctG=2$, and $\deg_s\rho=2$.

\begin{lemma}
\label{lem:rho-scaling}
Let $\rho\in\F^\times$, and let $\F'/\F$ be a field extension
containing $s\in\F'^\times$ such that $s^2\rho_0=\rho$.  There is
an $\F'$-algebra isomorphism
\[
 \Theta_s:
 \Aq(\rho)\otimes_\F\F'
 \longrightarrow
 \Aq(\rho_0)\otimes_\F\F'
\]
determined, for $n\in\N$, by
\begin{align*}
\cW_{-n}^{(\rho)}
 &\longmapsto
 s\,\cW_{-n}^{(\rho_0)},
&
\cW_{n+1}^{(\rho)}
 &\longmapsto
 s\,\cW_{n+1}^{(\rho_0)},\\
\cG_{n+1}^{(\rho)}
 &\longmapsto
 s^2\cG_{n+1}^{(\rho_0)},
&
\ctG_{n+1}^{(\rho)}
 &\longmapsto
 s^2\ctG_{n+1}^{(\rho_0)}.
\end{align*}
Moreover,
\begin{equation}
\label{eq:Delta-scaling}
        \Theta_s(\Delta_n^{(\rho)})
        =
        s^2\Delta_n^{(\rho_0)}
        \qquad(n\geqslant 1).
\end{equation}
Hence $\Theta_s$ induces an algebra isomorphism
\begin{equation}
\label{eq:BB-quotient-scaling}
\begin{aligned}
\overline\Theta_s:\quad
\BBqquot{\rho}{\boldsymbol\delta}\otimes_\F\F'
&\longrightarrow
\bigl(\Aq(\rho_0)\otimes_\F\F'\bigr)
\Big/
\left\langle
 \Delta_n^{(\rho_0)}-2s^{-2}\delta_n
 \mid n\geqslant 1
\right\rangle.
\end{aligned}
\end{equation}
\end{lemma}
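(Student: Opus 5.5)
The plan is to define $\Theta_s$ on the free algebra generated by the four current families (after extending scalars to $\F'$) by the displayed assignments, and to check that it carries each defining relation of $\Aq(\rho)$ to a relation of $\Aq(\rho_0)$. The relations \eqref{eq:Aq4}--\eqref{eq:Aq11} are homogeneous in the generators; in each summand the total scaling degree is the same ($2$, $3$ or $4$). Each relation is therefore sent to $s^{k}$ times the corresponding relation, with $k$ fixed.

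For \eqref{eq:Aq1}, both sides acquire the factor $s^2$. For \eqref{eq:Aq2} and \eqref{eq:Aq3}, the left-hand side $\qcomm{\cG_{k+1}}{\cW_0}$ acquires $s^3$. The right-hand side $\rho\cW_{-k-1}-\rho\cW_{k+1}$ becomes $\rho s(\cW_{-k-1}-\cW_{k+1})=s^3\rho_0(\cdots)$, using $s^2\rho_0=\rho$. Thus every relation of $\Aq(\rho)$ maps to $s^{k}$ times a relation of $\Aq(\rho_0)$, so the map descends to a homomorphism $\Theta_s$.

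The inverse is built symmetrically. We define $\Theta_s^{-1}$ by $\cW^{(\rho_0)}\mapsto s^{-1}\cW^{(\rho)}$ and $\cG,\ctG\mapsto s^{-2}(\cdot)$. The same check, now using $\rho_0=s^{-2}\rho$, shows it is well defined. Both composites fix the generators, so $\Theta_s$ is an isomorphism.

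The main point is \eqref{eq:Delta-scaling}, and for this I would use the structural description of $\Delta_n^{(\rho)}$ stated just before the lemma. Each coefficient of the quantum-determinant series of \cite[Proposition~2.1]{BaseilhacBelliard2017} is an $\F$-linear combination, with coefficients depending only on $q$ and not on $\rho$, of three kinds of term: terms $\cG+\ctG$, quadratic terms $\cW\cW$, and terms $\rho^{-1}\cG\ctG$. Any scalar-index convention such as $\cG_0$ would also have to be tracked. Under $\Theta_s$, the first two types acquire the factor $s^2$. The third becomes $\rho^{-1}s^4\cG\ctG=s^2\rho_0^{-1}\cG\ctG$, which again has factor $s^2$. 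Hence $\Theta_s(\Delta_n^{(\rho)})=s^2\Delta_n^{(\rho_0)}$.

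The obstacle is to verify from the explicit formula that no term escapes this homogeneity. The danger is a constant term or a term with a different power of $\rho$. I would first inspect the generating function and confirm that every constant contribution also carries a factor $\rho$, or else cancels in $\Delta_n$ for $n\geqslant 1$.

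Finally, the isomorphism $\Theta_s$ maps the ideal generated by the elements $\Delta_n^{(\rho)}-2\delta_n$ onto the ideal generated by the elements $s^2\Delta_n^{(\rho_0)}-2\delta_n=s^2(\Delta_n^{(\rho_0)}-2s^{-2}\delta_n)$. Since $s^2$ is a unit, this is the ideal appearing in \eqref{eq:BB-quotient-scaling}. Base change is exact, so $\BBqquot{\rho}{\boldsymbol\delta}\otimes_\F\F'$ is the quotient of $\Aq(\rho)\otimes_\F\F'$ by the extended ideal, and $\overline\Theta_s$ is therefore induced.
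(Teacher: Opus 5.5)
Your proposal is correct and follows essentially the same route as the paper: the relation-by-relation scaling check using $\rho s=s^3\rho_0$ in \eqref{eq:Aq2}--\eqref{eq:Aq3}, the inverse given by the same assignment with $s^{-1}$, the three-type scaling of $\Delta_n^{(\rho)}$ via $\rho^{-1}s^4=s^2\rho_0^{-1}$, and the identification of the two ideals through the unit $s^2$. The constant-term caveat you raise is resolved as you anticipate: a scalar contribution proportional to $\rho=s^2\rho_0$, such as the product of the constant off-diagonal parts of the $K$-matrix, also scales by $s^2$, so the argument is unaffected.
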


\begin{proof}
Under the proposed assignment, both sides of
\eqref{eq:Aq1} and
\eqref{eq:Aq4}--\eqref{eq:Aq11} are multiplied by the same power of
$s$.  In \eqref{eq:Aq2} and \eqref{eq:Aq3}, the left-hand side is
multiplied by $s^3$, while each parameter-dependent term is multiplied
by $\rho s=s^3\rho_0$.  Thus all defining relations are preserved.  The same assignment with
$s^{-1}$ gives the inverse map.

For the central elements, the terms of type $\cG+\ctG$ and
$\cW\cW$ are multiplied by $s^2$.  A term of type $\rho^{-1}\cG\ctG$ is also multiplied by $s^2$,
since $\rho^{-1}s^4=(s^2\rho_0)^{-1}s^4=s^2\rho_0^{-1}$.
This proves \eqref{eq:Delta-scaling}.

Finally,
\[
 \Theta_s
 \bigl(
 \Delta_n^{(\rho)}-2\delta_n
 \bigr)
 =
 s^2
 \bigl(
 \Delta_n^{(\rho_0)}-2s^{-2}\delta_n
 \bigr).
\]
Since $s^2$ is invertible, $\Theta_s$ identifies the two defining
ideals and therefore induces
\eqref{eq:BB-quotient-scaling}.
\end{proof}

\begin{theorem}
\label{thm:BB-general}
Let $\rho\in\F^\times$, and let
\[
        \boldsymbol\delta
        =
        (\delta_1,\delta_2,\ldots)
        \in\F^{\N\setminus\{0\}}.
\]
Let $H\in\{G,\widetilde G\}$, and let $\prec$ be a total order on
$W^-\cup H\cup W^+$ satisfying one of the six block conditions in
\eqref{eq:six-Oq-orders}.  Then the $\prec$-ordered monomials form an
$\F$-basis of
$\BBqquot{\rho}{\boldsymbol\delta}$.

In particular, the monomials in
\eqref{eq:BB-monomials} form an $\F$-basis of
$\BBqquot{\rho}{\boldsymbol\delta}$.
\end{theorem}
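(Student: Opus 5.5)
The plan is to reduce to the normalised case, Theorem~\ref{thm:BB-normalised}, by scalar extension and rescaling, and then descend to $\F$. Choose a field extension $\F'/\F$ containing an element $s$ with $s^2=\rho/\rho_0$; for instance $\F'=\F[t]/(t^2-\rho/\rho_0)$ if $\rho/\rho_0$ is not a square in $\F$, and $\F'=\F$ otherwise. Here $\rho_0\neq0$ by Lemma~\ref{lem:nonvanishing-scalars}, so the quotient makes sense. Lemma~\ref{lem:rho-scaling} then gives an $\F'$-algebra isomorphism
\[
\overline\Theta_s:\BBqquot{\rho}{\boldsymbol\delta}\otimes_\F\F'\longrightarrow \BBqquot{\rho_0}{s^{-2}\boldsymbol\delta}\otimes_{\F}\F',
\]
where the right-hand side is to be read as the Baseilhac--Belliard quotient over $\F'$ with parameter sequence $s^{-2}\boldsymbol\delta\in\F'^{\N\setminus\{0\}}$. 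This identification holds because extending scalars commutes with forming quotients by ideals generated by elements defined over $\F'$.

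Next I will apply Theorem~\ref{thm:BB-normalised} over the field $\F'$. Its hypotheses hold there: $q$ is still not a root of unity and $\operatorname{char}\F'=\operatorname{char}\F\neq2$. It shows that the $\prec$-ordered monomials in the images of the $\rho_0$-current generators form an $\F'$-basis of the target. Under $\overline\Theta_s$, each generator $X^{(\rho)}$ is sent to $s^{e_X}X^{(\rho_0)}$ with $e_X\in\{1,2\}$. Hence every $\prec$-ordered monomial in the $X^{(\rho)}$ is sent to a non-zero scalar power of $s$ times the corresponding ordered monomial in the $X^{(\rho_0)}$, and distinct monomials stay distinct. Therefore the $\prec$-ordered monomials in the images of the $X^{(\rho)}\otimes1$ form an $\F'$-basis of $\BBqquot{\rho}{\boldsymbol\delta}\otimes_\F\F'$.

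The final step is descent, which is routine but is where care is needed. Let $S$ be the set of $\prec$-ordered monomials in $\BBqquot{\rho}{\boldsymbol\delta}$, and let $V$ be their $\F$-span.

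For spanning, note that $\F'$ is faithfully flat over $\F$. The inclusion $V\to\BBqquot{\rho}{\boldsymbol\delta}$ becomes surjective after applying $-\otimes_\F\F'$, so its cokernel $C$ satisfies $C\otimes_\F\F'=0$, hence $C=0$. Concretely, pick an $\F$-basis of $\F'$ containing $1$ and compare components.

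For linear independence, observe that a non-trivial $\F$-linear relation among the elements of $S$ remains a non-trivial relation after tensoring with $\F'$, because $\F\to\F'$ is injective and tensoring over a field is exact. This contradicts the $\F'$-basis property.

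The last assertion, concerning the monomials \eqref{eq:BB-monomials}, follows from the specific order $W^-<G<W^+$ chosen exactly as in the proof of Theorem~\ref{thm:BB-normalised}.

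I expect the main obstacle to be bookkeeping rather than mathematics: checking carefully that $\overline\Theta_s$ carries images of current generators to scalar multiples of images of current generators in the quotient. The second point to check is that Theorem~\ref{thm:BB-normalised} and all the inputs it relies on apply over the extended field $\F'$. This holds because all those inputs are proved over an arbitrary field in which $q$ is not a root of unity.
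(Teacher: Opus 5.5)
Your proposal is correct and follows essentially the same route as the paper. You adjoin $s$ with $s^2\rho_0=\rho$, use Lemma~\ref{lem:rho-scaling} to identify the base-changed quotient with the normalised quotient over $\F'$ with central values $s^{-2}\boldsymbol\delta$, apply Theorem~\ref{thm:BB-normalised} over $\F'$ (noting the monomials correspond up to non-zero powers of $s$), and descend to $\F$ by faithful flatness of $\F'/\F$, exactly as the paper does.
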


\begin{proof}
By Lemma~\ref{lem:nonvanishing-scalars},
$\rho_0\neq 0$.  Choose a field extension $\F'/\F$ and an element
$s\in\F'^\times$ such that $s^2\rho_0=\rho$.
The parameter $q$ is still not a root of unity in $\F'$, and
$\operatorname{char}\F'=\operatorname{char}\F$.

By Lemma~\ref{lem:rho-scaling},
$\BBqquot{\rho}{\boldsymbol\delta}\otimes_\F\F'$ is isomorphic to the
normalised quotient over $\F'$ with central values
$s^{-2}\boldsymbol\delta=(s^{-2}\delta_1,s^{-2}\delta_2,\ldots)$.
By Theorem~\ref{thm:BB-normalised}, the chosen ordered monomials form a
basis of this normalised quotient.  Under
$\overline\Theta_s$, every ordered monomial in the original quotient is
sent to a non-zero scalar multiple of the corresponding ordered monomial
in the normalised quotient.  Hence the chosen monomials form an
$\F'$-basis after extending scalars.

Let $M=\BBqquot{\rho}{\boldsymbol\delta}$, and let $V$ be the free
$\F$-vector space with basis indexed by the chosen ordered monomials.
There is a natural linear map $\theta:V\longrightarrow M$ sending each formal basis vector to the corresponding monomial.  The preceding paragraph shows that
$\theta\otimes_{\F}\operatorname{id}_{\F'}$ is an isomorphism.  Since $\F'/\F$ is faithfully flat,
\[
        \ker\theta=0,
        \qquad
        \operatorname{coker}\theta=0.
\]
Thus $\theta$ is an isomorphism, and the ordered monomials form an
$\F$-basis of $M$.

The final assertion is obtained by taking $H=G$, increasing index order
in the $W^-$- and $G$-blocks, and decreasing index order in the
$W^+$-block.
\end{proof}

The particular basis in \eqref{eq:BB-monomials} is the $WG$-basis
proposed in \cite[Conjecture~1]{BaseilhacBelliard2017}.  Thus that
conjecture holds.

\section{The negative alternating centraliser}
\label{sec:negative-centraliser}

We work in the $q$-shuffle realisation
$U\subseteq(\Vsh,\star)$ fixed in Section~\ref{sec:qshuffle}.  In this
section only, we omit the superscript $\sh$ from the alternating words and
write
\begin{equation*}
 W_{-n}=x(yx)^n,
 \qquad
 G_n=(yx)^n
 \qquad(n\in\N).
\end{equation*}
All products in this section are $q$-shuffle products.  Put
\begin{equation}
\label{eq:negative-shuffle-algebra}
 \mathcal N
 =
 \F[W_0,W_{-1},W_{-2},\ldots]
 \subseteq U.
\end{equation}
By \cite[Lemma~5.12(i)]{Terwilliger2019Alternating}, the negative
alternating words commute pairwise.  By
Theorem~\ref{thm:qshuffle-twelve}, their monomials are linearly
independent.  Hence $\mathcal N$ is a polynomial algebra.

\subsection{A left-deletion operator}

For $a,b\in\N$, let $\Vsh_{a,b}$ denote the span of the words containing
$a$ copies of $x$ and $b$ copies of $y$.  The $q$-shuffle product makes
$\Vsh$ an $\N^2$-graded algebra, and $U$ is a graded subalgebra.  If
$u\in\Vsh_{a,b}$ is homogeneous, put
\begin{equation*}
        d(u)=a-b.
\end{equation*}
Set $t=q^2$.  For $m\geqslant 1$, write
\begin{equation}
\label{eq:t-integer}
 \langle m\rangle_t
 =1+t+\cdots+t^{m-1}
 =\frac{t^m-1}{t-1}.
\end{equation}
Since $q$ is not a root of unity, neither is $t$.  Consequently,
\begin{equation}
\label{eq:t-scalars-nonzero}
        t^m-1\neq 0,
        \qquad
        \langle m\rangle_t\neq 0
        \qquad(m\geqslant 1).
\end{equation}

Define an $\F$-linear map $D:\Vsh\longrightarrow\Vsh$ by
\begin{equation}
\label{eq:left-deletion-definition}
        D(1)=0,
        \qquad
        D(xw)=w,
        \qquad
        D(yw)=0
\end{equation}
for every word $w$.  Thus $D$ deletes an initial $x$ and annihilates a
word that does not begin with $x$.

\begin{lemma}
\label{lem:left-deletion-Leibniz}
Let $u\in\Vsh$ be homogeneous and let $v\in\Vsh$.  Then
\begin{equation}
\label{eq:left-deletion-Leibniz}
 D(u\star v)
 =
 D(u)\star v+t^{d(u)}u\star D(v).
\end{equation}
In particular, $D(U)\subseteq U$.
\end{lemma}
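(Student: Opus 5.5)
By bilinearity of $\star$ and linearity of $D$, it suffices to prove \eqref{eq:left-deletion-Leibniz} when $u$ and $v$ are words. If $u\in\Vsh_{a,b}$ is homogeneous, it is a linear combination of words that all have the same $d(u)=a-b$, so the factor $t^{d(u)}$ is common to every term. No induction is needed: a single application of the recursion \eqref{eq:qshuffle-recursion} suffices, because $D$ only inspects the first letter of each word.

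First dispose of the degenerate cases. If $u=1$, then $D(u)=0$ and $d(u)=0$, so the right-hand side is $D(v)=D(1\star v)$. If $v=1$, then $D(v)=D(1)=0$, and both sides equal $D(u)$.

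Now let $u=u_1\cdots u_r$ and $v=v_1\cdots v_s$ be non-empty words, and apply $D$ to the two terms of \eqref{eq:qshuffle-recursion} separately.

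\begin{itemize}
\item[(i)] The first term is $u_1\bigl((u_2\cdots u_r)\star v\bigr)$, where the outer $u_1$ is concatenated. If $u_1=x$, then $D$ sends this term to $(u_2\cdots u_r)\star v=D(u)\star v$. If $u_1=y$, both $D$ of the term and $D(u)\star v$ are zero.
\item[(ii)] The second term is $v_1\bigl(u\star(v_2\cdots v_s)\bigr)q^{\sum_i\langle u_i,v_1\rangle}$. If $v_1=y$, $D$ kills it, and $D(v)=0$ as well. If $v_1=x$, then $D$ sends it to $q^{\sum_i\langle u_i,x\rangle}\,u\star D(v)$.
\end{itemize}

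For the exponent in (ii), note that $\langle x,x\rangle=2$ and $\langle y,x\rangle=-2$, so $\sum_i\langle u_i,x\rangle=2a-2b=2d(u)$. Hence $q^{2d(u)}=t^{d(u)}$. Adding (i) and (ii) gives \eqref{eq:left-deletion-Leibniz}. The only point requiring care is this scalar bookkeeping, namely checking that the $q$-power attached to $v_1$ depends only on the letter counts of $u$. That is exactly why homogeneity of $u$ is assumed.

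For $D(U)\subseteq U$, observe that $U$ is spanned by $\star$-products $z_1\star\cdots\star z_k$ with $z_i\in\{x,y\}$, each of which is homogeneous and lies in $U$. We argue by induction on $k$. For $k=0$ and $k=1$ we have $D(1)=0$, $D(x)=1$ and $D(y)=0$, all in $U$. For $k\geqslant 2$, write $u=z_1$ and $v=z_2\star\cdots\star z_k$ and apply \eqref{eq:left-deletion-Leibniz}. The terms $D(z_1)\star v$ and $t^{d(z_1)}z_1\star D(v)$ lie in $U$ by the induction hypothesis, since $U$ is closed under $\star$. Linearity of $D$ then finishes the argument.
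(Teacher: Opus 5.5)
Your proof is correct and is essentially the paper's argument: you split $u\star v$ according to whether its first letter comes from $u$ or from $v$, obtain the scalar $q^{\sum_i\langle u_i,x\rangle}=q^{2d(u)}=t^{d(u)}$ in the same way, and prove $D(U)\subseteq U$ by the same induction on the length of a product of the generators $x,y$. The one difference is presentational: you derive the split from a single application of the defining recursion, whereas the paper describes it through the shuffle expansion, which it never writes out. Your version is therefore slightly more self-contained.
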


\begin{proof}
It suffices to prove \eqref{eq:left-deletion-Leibniz} when $u$ and $v$
are words.  In the shuffle expansion of $u\star v$, a word that begins
with $x$ has its first letter either from $u$ or from $v$.  The terms of
the first type give $D(u)\star v$.  For a term of the second type, the
initial $x$ from $v$ crosses every letter of $u$, and therefore contributes
the scalar
\begin{equation*}
 q^{\langle\deg u,x\rangle}
 =q^{2d(u)}
 =t^{d(u)}.
\end{equation*}
These terms give $t^{d(u)}u\star D(v)$, proving the formula.

Since $D(x)=1$ and $D(y)=0$, the final assertion follows from
\eqref{eq:left-deletion-Leibniz} by induction on the length of a product
in the generators $x,y$.
\end{proof}

Define endomorphisms $L,R$ of $\Vsh$ by
\begin{equation*}
        L(u)=x\star u,
        \qquad
        R(u)=u\star x,
\end{equation*}
and, for $\alpha\in\F$, put
\begin{equation}
\label{eq:C-alpha-definition}
        C_\alpha=R-\alpha L.
\end{equation}
The operators $L$ and $R$ commute, so the operators $C_\alpha$ commute
pairwise.  Equation~\eqref{eq:left-deletion-Leibniz} gives, for every
homogeneous $u\in\Vsh$,
\begin{equation}
\label{eq:D-C-alpha}
 D C_\alpha(u)
 =
 C_{\alpha t}D(u)
 +\bigl(t^{d(u)}-\alpha\bigr)u.
\end{equation}
Notice that
\begin{equation}
\label{eq:C1-commutator}
        C_1(u)=u\star x-x\star u=[u,x].
\end{equation}

Order the words lexicographically, with $x<y$.  For a non-zero element
$z\in\Vsh$, let $\operatorname{lw}(z)$ denote the least word that occurs
in $z$ with non-zero coefficient.

\begin{lemma}
\label{lem:x-centraliser-deletion}
Let $0\neq z\in\Vsh_{a,b}$ satisfy $[z,x]=0$, and put $r=a-b$.  Then
\begin{equation}
\label{eq:x-centraliser-deletion}
        r\geqslant 0,
        \qquad
        D^{r+1}z=0,
        \qquad
        D^rz\neq 0.
\end{equation}
Consequently, for $r,b\in\N$, the map $D^r$ is injective on
$\Cent_{\Vsh}(x)\cap\Vsh_{r+b,b}$.
\end{lemma}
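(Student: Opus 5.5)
The plan is to iterate the commutation formula \eqref{eq:D-C-alpha} along the chain of left deletions of $z$. Put $z_k=D^kz$ for $k\in\N$, with the convention $z_{-1}=0$. Since $D$ deletes one copy of $x$, the element $z_k$ lies in $\Vsh_{a-k,b}$, and so it is homogeneous with $d(z_k)=r-k$. Moreover $D^{a+1}$ vanishes on $\Vsh_{a,b}$. Because $z_0=z\neq0$, there is therefore a largest index $m$ with $0\leqslant m\leqslant a$ and $z_m\neq0$, and then $z_{m+1}=0$.

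The key step is the following claim, which I would prove by induction on $k\geqslant0$:
\[
 C_{t^k}z_k=\mu_kz_{k-1},
 \qquad
 \mu_k=\sum_{j=0}^{k-1}\bigl(t^j-t^{r-j}\bigr).
\]
The base case $k=0$ is exactly the hypothesis $C_1z=[z,x]=0$, by \eqref{eq:C1-commutator}; here $\mu_0=0$. For the inductive step, apply $D$ to the relation at stage $k$. Using \eqref{eq:D-C-alpha} with $\alpha=t^k$ and $u=z_k$, the left-hand side becomes
\[
 DC_{t^k}z_k=C_{t^{k+1}}z_{k+1}+\bigl(t^{r-k}-t^k\bigr)z_k .
\]
The right-hand side becomes $\mu_kDz_{k-1}=\mu_kz_k$. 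Comparing the two gives $C_{t^{k+1}}z_{k+1}=\bigl(\mu_k-t^{r-k}+t^k\bigr)z_k$, which is the claim at stage $k+1$.

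Now take $k=m+1$. Since $z_{m+1}=0$, the claim gives $0=\mu_{m+1}z_m$, and $z_m\neq0$ forces $\mu_{m+1}=0$. Summing the geometric series gives
\[
 \mu_{m+1}=\langle m+1\rangle_t-t^{r-m}\langle m+1\rangle_t=\langle m+1\rangle_t\bigl(1-t^{r-m}\bigr).
\]
By \eqref{eq:t-scalars-nonzero} we have $\langle m+1\rangle_t\neq0$. Since $t$ is not a root of unity, $t^{r-m}=1$ forces $r=m$; this holds whether $r-m$ is positive or negative. Hence $r=m\geqslant0$, $D^rz=z_m\neq0$, and $D^{r+1}z=z_{m+1}=0$, which is \eqref{eq:x-centraliser-deletion}.

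For the final assertion, note that $\Cent_{\Vsh}(x)\cap\Vsh_{r+b,b}$ is a subspace and $D^r$ is linear on it. Every non-zero element $z$ of this subspace satisfies $[z,x]=0$ and has $a-b=r$. By the first part, $D^rz\neq0$, so the kernel of $D^r$ on this subspace is zero.

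The only delicate point is to guess the inductive form $C_{t^k}z_k\in\F z_{k-1}$ and to track the scalars correctly; everything else is formal.
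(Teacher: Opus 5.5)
Your proof is correct, and it takes a different route from the paper's.

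\textbf{The paper's route.} The paper argues at the level of words. It takes the lexicographically least word $w=x^{a_0}w'$ of $z$ (with $x<y$) and computes the coefficient of $xw$ in $[z,x]$ as $c_w\langle a_0+1\rangle_t(t^{r-a_0}-1)$, which gives $r=a_0$. It then uses the minimality of $w$ to show two things: no word of $z$ begins with more than $r$ copies of $x$, so $D^{r+1}z=0$; and the suffix of $w$ survives in $D^rz$, so $D^rz\neq 0$.

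\textbf{Your route.} You iterate \eqref{eq:D-C-alpha} instead. Your identity $C_{t^k}D^kz=\langle k\rangle_t\bigl(1-t^{r-k+1}\bigr)D^{k-1}z$ is exactly \eqref{eq:iterated-C-D}. The paper proves that identity only afterwards, in Lemma~\ref{lem:x-centraliser-reconstruction}, and states it for $0\leqslant k<r$, after $r\geqslant 0$ has already been obtained from the present lemma. You prove it for every $k$ without knowing the sign of $r$, and evaluate it at the first vanishing iterate $D^{m+1}z$. There is no circularity, since \eqref{eq:D-C-alpha} and \eqref{eq:C1-commutator} are established before the lemma.

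\textbf{What each approach buys.} Your version makes the present lemma a by-product of the computation the paper needs anyway for the reconstruction formula, so the two lemmas effectively merge. It also avoids the $q$-shuffle bookkeeping: the insertion coefficients $t^i$ and $t^{r-i}$, and the check of which other words can produce $xw$. The paper's argument is self-contained at the level of words and gives slightly finer information. Namely, $r$ is the length of the initial $x$-run of the lexicographically least word of $z$, and no word occurring in $z$ has a longer initial run.
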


\begin{proof}
Let $w=\operatorname{lw}(z)$, and write
$w=x^{a_0}w'$, where $w'$ is empty or begins with $y$.  Let
$c_w\in\F^\times$ be the coefficient of $w$ in $z$.

Consider the coefficient of the word $xw$ in $[z,x]$.  In
$x\star w$, the extra $x$ can be inserted at any place in the initial
run of $x$'s, and the sum of the resulting coefficients is
$\langle a_0+1\rangle_t$.  In $w\star x$, the corresponding sum is
$t^{r-a_0}\langle a_0+1\rangle_t$.

Suppose that a word $v\neq w$ can also produce $xw$ by the insertion of
one copy of $x$.  Then $v$ is obtained from $xw$ by deleting an $x$ to
the right of the initial run.  It therefore begins with at least
$a_0+1$ copies of $x$, and hence $v<w$.  By the definition of $w$, no
such word occurs in $z$.  It follows that the coefficient of $xw$ in
$[z,x]$ is
\begin{equation*}
 c_w\langle a_0+1\rangle_t
 \bigl(t^{r-a_0}-1\bigr).
\end{equation*}
This coefficient is zero.  By \eqref{eq:t-scalars-nonzero}, we obtain
$t^{r-a_0}=1$.  Since $t$ is not a root of unity,
$r=a_0\geqslant 0$.

If $b>0$, no word occurring in $z$ can begin with more than $r$ copies
of $x$, since such a word would be smaller than $w$.  If $b=0$, the
bidegree contains only the word $x^a$.  Hence $D^{r+1}z=0$.  On the
other hand, the coefficient of the suffix obtained from $w$ by deleting
its initial $r$ copies of $x$ remains $c_w$ in $D^rz$.  Thus
$D^rz\neq 0$.  The injectivity assertion follows immediately.
\end{proof}

For $r\in\N$, define
\begin{equation}
\label{eq:S-r-definition}
        S_r=C_tC_{t^2}\cdots C_{t^r},
        \qquad
        S_0=\operatorname{id}_{\Vsh}.
\end{equation}
Also put
\begin{equation*}
 \langle r\rangle_t!
 =\prod_{j=1}^{r}\langle j\rangle_t,
 \qquad
 \langle 0\rangle_t!=1.
\end{equation*}

\begin{lemma}
\label{lem:x-centraliser-reconstruction}
Let $0\neq z\in\Vsh$ be homogeneous, satisfy $[z,x]=0$, and put
$r=d(z)$.  Then $r\geqslant 0$.  Put $v=D^rz$.  One has $Dv=0$ and
\begin{equation}
\label{eq:x-centraliser-reconstruction}
 z
 =
 \gamma_rS_rv,
 \qquad
 \gamma_r
 =
 \frac{(-1)^r}
 {\langle r\rangle_t!\displaystyle\prod_{j=1}^{r}(t^j-1)}.
\end{equation}
The empty products in \eqref{eq:x-centraliser-reconstruction} are
interpreted as $1$.
\end{lemma}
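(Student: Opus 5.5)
The plan is to peel off one copy of $x$ at a time using the commutation rule \eqref{eq:D-C-alpha} between $D$ and the operators $C_\alpha$, starting from the hypothesis $C_1z=[z,x]=0$ of \eqref{eq:C1-commutator}. First, Lemma~\ref{lem:x-centraliser-deletion} applied to $z$ gives $r\geqslant 0$, $D^{r+1}z=0$ and $D^rz\neq 0$. Hence $v=D^rz$ satisfies $Dv=0$.

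For $0\leqslant k\leqslant r$, put $z_k=D^kz$. Each $z_k$ is homogeneous with $d(z_k)=r-k$, because $D$ removes one $x$. Set
\[
 c_k=(t^{r-k+1}-1)\langle k\rangle_t\qquad(1\leqslant k\leqslant r).
\]
I will prove by induction on $k$ that
\begin{equation*}
 C_{t^k}z_k=-c_k\,z_{k-1}\qquad(1\leqslant k\leqslant r).
\end{equation*}

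For $k=1$, apply $D$ to $C_1z=0$ and use \eqref{eq:D-C-alpha} with $\alpha=1$ and $u=z$. This gives $C_tz_1+(t^r-1)z=0$, which is the claim since $c_1=t^r-1$.

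For the inductive step, apply $D$ to $C_{t^k}z_k=-c_kz_{k-1}$. By \eqref{eq:D-C-alpha} with $\alpha=t^k$ and $u=z_k$, the left side becomes $C_{t^{k+1}}z_{k+1}+(t^{r-k}-t^k)z_k$, while the right side becomes $-c_kz_k$. Thus it remains to check the scalar identity
\[
 c_k+t^{r-k}-t^k=(t^{r-k}-1)\langle k+1\rangle_t .
\]
This follows by writing $c_k=t^{r-k}\cdot t\langle k\rangle_t-\langle k\rangle_t$ and using $t\langle k\rangle_t+1=\langle k+1\rangle_t=\langle k\rangle_t+t^k$.

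Since $t$ is not a root of unity, \eqref{eq:t-scalars-nonzero} shows that every $c_k$ with $1\leqslant k\leqslant r$ is non-zero; here $r-k+1\geqslant 1$. We may therefore solve
\[
 z_{k-1}=-c_k^{-1}C_{t^k}z_k .
\]
Iterating from $k=1$ up to $k=r$ yields
\[
 z=(-1)^r\Bigl(\prod_{k=1}^rc_k\Bigr)^{-1}C_tC_{t^2}\cdots C_{t^r}z_r=(-1)^r\Bigl(\prod_{k=1}^rc_k\Bigr)^{-1}S_rv .
\]
Reindexing with $j=r-k+1$ gives $\prod_{k=1}^rc_k=\langle r\rangle_t!\prod_{j=1}^r(t^j-1)$, which is exactly $\gamma_r^{-1}$. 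This proves \eqref{eq:x-centraliser-reconstruction}. For $r=0$ the statement reduces to $z=v$ with $Dz=0$, which was already established.

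The main obstacle is finding the correct inductive invariant $C_{t^k}z_k=-c_kz_{k-1}$ with the right constant $c_k$. Once that invariant is fixed, each step is just one application of \eqref{eq:D-C-alpha} together with the $t$-integer identity above. One point to watch is that \eqref{eq:D-C-alpha} requires a homogeneous argument; this holds because every $z_k$ is homogeneous.
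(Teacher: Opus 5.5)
Your proof is correct and follows the paper's argument. Your invariant $C_{t^k}z_k=-c_kz_{k-1}$ is the paper's identity \eqref{eq:iterated-C-D} with the index shifted by one, proved by the same induction through \eqref{eq:D-C-alpha} and the same $t$-integer identity; a small advantage of your back-substitution is that it produces $C_tC_{t^2}\cdots C_{t^r}$ in exactly the order of $S_r$, so you do not need the commutativity of the $C_{t^j}$ that the paper invokes.
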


\begin{proof}
The assertions $r\geqslant 0$ and $Dv=0$ follow from
Lemma~\ref{lem:x-centraliser-deletion}.  Since $C_1z=0$, equation
\eqref{eq:D-C-alpha} gives
\begin{equation*}
        C_tDz=-(t^r-1)z.
\end{equation*}
More generally, for $0\leqslant k<r$,
\begin{equation}
\label{eq:iterated-C-D}
 C_{t^{k+1}}D^{k+1}z
 =
 -\langle k+1\rangle_t
 \bigl(t^{r-k}-1\bigr)D^kz.
\end{equation}
We prove this by induction on $k$.  The case $k=0$ was just obtained.
For the induction step, apply $D$ to the preceding instance of
\eqref{eq:iterated-C-D} and use \eqref{eq:D-C-alpha}.  The required
coefficient identity is
\begin{equation*}
 \langle k\rangle_t\bigl(t^{r-k+1}-1\bigr)
 +t^{r-k}-t^k
 =
 \langle k+1\rangle_t\bigl(t^{r-k}-1\bigr).
\end{equation*}

Solving \eqref{eq:iterated-C-D} successively for
$D^kz$, for $k=0,1,\ldots,r-1$, and using the commutativity of the
operators $C_{t^j}$ gives \eqref{eq:x-centraliser-reconstruction}.
All denominators are non-zero by \eqref{eq:t-scalars-nonzero}.
\end{proof}

\subsection{Reduction to the word \texorpdfstring{$G_1$}{G1}}

Put
\begin{equation}
\label{eq:W-G-for-reduction}
        W=W_{-1}=xyx,
        \qquad
        G=G_1=yx.
\end{equation}
Then
\begin{equation}
\label{eq:deletion-W-G}
        DW=G,
        \qquad
        DG=0,
        \qquad
        [W,x]=0.
\end{equation}
The last equality follows from the pairwise commutativity of the negative
alternating words.

For $a\in\Z$ and $n\in\N$, define
\begin{equation}
\label{eq:T-a-n-definition}
 T_{a,n}
 =C_{t^a}C_{t^{a+1}}\cdots C_{t^{a+n-1}},
 \qquad
 T_{a,0}=\operatorname{id}_{\Vsh}.
\end{equation}

\begin{lemma}
\label{lem:deletion-operator-products}
The following statements hold.
\begin{enumerate}[label=\textup{(\roman*)}]
\item If $r\in\N$ and $u$ is homogeneous with $d(u)=1$, then
\begin{equation}
\label{eq:D-S-r}
        DS_ru=T_{2,r}Du.
\end{equation}

\item If $a\in\Z$, $n\geqslant 1$, and $w$ is homogeneous with
$d(w)=0$ and $Dw=0$, then
\begin{equation}
\label{eq:D-T-a-n}
 DT_{a,n}w
 =
 -(t^a-1)\langle n\rangle_tT_{a+1,n-1}w.
\end{equation}
Consequently, for $r\in\N$,
\begin{equation}
\label{eq:iterated-D-T}
 D^rT_{2,r}w
 =
 (-1)^r\langle r\rangle_t!
 \prod_{j=2}^{r+1}(t^j-1)w.
\end{equation}
\end{enumerate}
\end{lemma}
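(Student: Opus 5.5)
The whole argument rests on the commutation rule \eqref{eq:D-C-alpha}, applied repeatedly: $DC_\alpha(u)=C_{\alpha t}D(u)+(t^{d(u)}-\alpha)u$ for homogeneous $u$. We also need two bookkeeping facts. First, $C_\alpha$ raises $d$ by $1$, since it multiplies by $x$. Second, $D$ lowers $d$ by $1$ and preserves homogeneity.

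For (i), write $S_r=C_tC_{t^2}\cdots C_{t^r}$ and push $D$ through the factors from left to right. Let $u_k=C_{t^{k+1}}\cdots C_{t^r}u$, so that $d(u_k)=1+(r-k)$. When $D$ meets the factor $C_{t^{k}}$, it acts on $u_k$. The correction term there is $(t^{d(u_k)}-t^k)u_k=(t^{r-k+1}-t^k)u_k$, and this is not zero in general. So naive pushing does not give the claim, and I must recheck the indexing.

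Recompute with $D$ at the far left. Passing $C_t$, which acts on $u_1$ with $d(u_1)=r$, gives the correction $(t^{r}-t)u_1$. That term does not vanish unless $r=1$. Hence I will push $D$ in the other order and use commutativity of the $C_\alpha$. Write $S_r=C_{t^r}\cdots C_t$ and let $D$ pass $C_{t^r}$ first. That factor acts on $C_{t^{r-1}}\cdots C_tu$, which has $d=1+(r-1)=r$, so the correction $(t^r-t^r)$ vanishes. The result is $C_{t^{r+1}}D(C_{t^{r-1}}\cdots C_tu)$. Inductively, passing $C_{t^j}$ acting on an element with $d=j$ kills the correction and shifts the index to $t^{j+1}$. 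At the end we obtain $DS_ru=C_{t^{r+1}}\cdots C_{t^2}Du=T_{2,r}Du$, using commutativity once more. The formal proof is therefore an induction on $r$: peel off the factor $C_{t^r}$ acting on $S_{r-1}u$, where $d(S_{r-1}u)=r$.

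For (ii), set $w_j=C_{t^{a+j}}\cdots C_{t^{a+n-1}}w$, so that $d(w_j)=n-j$. Push $D$ through $T_{a,n}=C_{t^a}\cdots C_{t^{a+n-1}}$ from the left. Each step $DC_{t^{a+j}}w_{j+1}$ produces a shifted operator $C_{t^{a+j+1}}$ together with the scalar term $(t^{n-j-1}-t^{a+j})w_{j+1}$, times the operators already passed. At the end, $D$ hits $w$ and gives $0$. This yields
\[
DT_{a,n}w=\sum_{j=0}^{n-1}(t^{n-j-1}-t^{a+j})\,C_{t^{a+1}}\cdots C_{t^{a+j}}\,C_{t^{a+j+1}}\cdots C_{t^{a+n-1}}w ,
\]
where each summand is exactly $T_{a+1,n-1}w$. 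The total scalar is $\sum_{j}t^{n-1-j}-t^a\sum_j t^j=\langle n\rangle_t(1-t^a)$, which proves \eqref{eq:D-T-a-n}.

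Finally, \eqref{eq:iterated-D-T} follows by applying \eqref{eq:D-T-a-n} $r$ times with $a=2,3,\dots,r+1$ and $n=r,r-1,\dots,1$. Each application is valid because the hypotheses $d(w)=0$ and $Dw=0$ concern $w$ alone, not the operators. The product of the scalars is $(-1)^r\langle r\rangle_t!\prod_{j=2}^{r+1}(t^j-1)$. The only delicate point in the whole argument is the ordering and degree bookkeeping in part (i).
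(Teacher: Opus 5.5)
Your proof is correct. Part (i) is exactly the paper's argument: write the factors of $S_r$ in decreasing order so that each correction term in \eqref{eq:D-C-alpha} vanishes. Your exploratory first attempt in increasing order can simply be dropped from a final write-up.

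For part (ii) you take a somewhat different route. The paper inducts on $n$ and peels off the largest-index factor via $T_{a,n}=C_{t^{a+n-1}}T_{a,n-1}$. After applying the induction hypothesis, this leaves $C_{t^{a+n}}$ and $C_{t^a}$ acting on $T_{a+1,n-2}w$. The paper therefore has to verify the auxiliary operator identity
\[
-(t^a-1)\langle n-1\rangle_tC_{t^{a+n}}+(t^{n-1}-t^{a+n-1})C_{t^a}=-(t^a-1)\langle n\rangle_tC_{t^{a+n-1}}
\]
by comparing coefficients of $R$ and $L$.

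You instead unroll \eqref{eq:D-C-alpha} starting from the smallest-index factor. The shifted operators $C_{t^{a+1}},\dots,C_{t^{a+j}}$ already passed then join up with the untouched $C_{t^{a+j+1}},\dots,C_{t^{a+n-1}}$. So every correction term is literally $T_{a+1,n-1}w$, and only the scalar sum $\sum_{j}(t^{n-1-j}-t^{a+j})=(1-t^a)\langle n\rangle_t$ remains to be evaluated.

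Your version avoids the operator identity, and in part (ii) it does not even use that the $C_\alpha$ commute. It also makes it transparent why the result is a multiple of $T_{a+1,n-1}w$. The paper's induction is more compact but hides that structure inside the identity. The deduction of \eqref{eq:iterated-D-T} is the same in both.
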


\begin{proof}
For (i), use the commutativity of the operators $C_\alpha$ to write the
factors in $S_r$ in decreasing order.  When
\eqref{eq:D-C-alpha} is applied to the factor $C_{t^j}$, its argument
has difference $j$, so the correction term vanishes.  Moving $D$ through
all the factors gives \eqref{eq:D-S-r}.

We prove \eqref{eq:D-T-a-n} by induction on $n$.  For $n=1$, it follows
directly from \eqref{eq:D-C-alpha}.  For $n\geqslant 2$, write
$T_{a,n}=C_{t^{a+n-1}}T_{a,n-1}$ and apply
\eqref{eq:D-C-alpha}.  After using the induction hypothesis, the result
follows from the operator identity
\begin{align*}
&-(t^a-1)\langle n-1\rangle_tC_{t^{a+n}}
 +\bigl(t^{n-1}-t^{a+n-1}\bigr)C_{t^a}\\
&\hspace{35mm}
 =-(t^a-1)\langle n\rangle_tC_{t^{a+n-1}},
\end{align*}
which is verified by comparing the coefficients of $R$ and $L$.
Iterating \eqref{eq:D-T-a-n}, beginning with $a=2$ and $n=r$, gives
\eqref{eq:iterated-D-T}.
\end{proof}

\begin{proposition}
\label{prop:two-generator-reduction}
Let $0\neq z\in U$ be homogeneous of difference $r$ and satisfy
$[z,x]=0$.  Put $v=D^rz$.  Then
\begin{equation}
\label{eq:two-generator-reduction}
        [z,W]=0
        \quad\Longleftrightarrow\quad
        [v,G]=0.
\end{equation}
\end{proposition}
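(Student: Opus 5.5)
The plan is to reduce both sides of \eqref{eq:two-generator-reduction} to a single quantity: I will show that $D^{r+1}[z,W]$ is a non-zero scalar multiple of $[v,G]$. Lemma~\ref{lem:x-centraliser-deletion} will then convert this into the stated equivalence. Throughout, $z$, $W$ and $G$ are homogeneous, and $d(W)=1$, $d(G)=0$, $d(v)=d(z)-r=0$.

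\emph{Step 1: rewrite $[z,W]$ using the reconstruction.} By Lemma~\ref{lem:x-centraliser-reconstruction}, $z=\gamma_rS_rv$ with $\gamma_r\neq 0$ and $Dv=0$. By \eqref{eq:deletion-W-G}, $W$ commutes with $x$. Hence left and right $q$-shuffle multiplication by $W$ commute with $L$ and $R$: for example $W\star(x\star u)=x\star W\star u$, and the other cases follow from associativity. Therefore both multiplications by $W$ commute with every $C_\alpha$, and so with $S_r$. It follows that
\[
[z,W]=\gamma_rS_r[v,W].
\]

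\emph{Step 2: apply the deletion operator.} The element $u_1=[v,W]$ is homogeneous with $d(u_1)=1$. Lemma~\ref{lem:deletion-operator-products}(i) gives $DS_ru_1=T_{2,r}Du_1$. I compute $Du_1$ with the Leibniz rule \eqref{eq:left-deletion-Leibniz}, using $Dv=0$, $d(v)=0$, $DW=G$ and $DG=0$:
\[
D(v\star W)=v\star G,\qquad D(W\star v)=G\star v+tW\star Dv=G\star v.
\]
Hence $Du_1=[v,G]=:w$. This $w$ is homogeneous with $d(w)=0$. Applying \eqref{eq:left-deletion-Leibniz} again, with $Dv=DG=0$, gives $Dw=0$. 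So Lemma~\ref{lem:deletion-operator-products}(ii), in the form \eqref{eq:iterated-D-T}, applies and yields
\[
D^{r+1}[z,W]=\gamma_r(-1)^r\langle r\rangle_t!\prod_{j=2}^{r+1}(t^j-1)\,[v,G].
\]
The scalar on the right is non-zero by \eqref{eq:t-scalars-nonzero}.

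\emph{Step 3: conclude.} If $[z,W]=0$, the displayed identity forces $[v,G]=0$. Conversely, suppose $[v,G]=0$. Since both $z$ and $W$ commute with $x$, so does $[z,W]$. Moreover $[z,W]$ is homogeneous of difference $r+1$, so it lies in $\Cent_{\Vsh}(x)\cap\Vsh_{(r+1)+b,b}$ for the appropriate $b$. By Lemma~\ref{lem:x-centraliser-deletion}, $D^{r+1}$ is injective on this space, and therefore $[z,W]=0$.

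The main point needing care is the commutation of multiplication by $W$ with $S_r$ in Step~1. Doing this first avoids expanding $D^{r+1}(z\star W-W\star z)$ directly, which would be messy. It relies only on $[W,x]=0$ and associativity of $\star$. The remaining steps are direct applications of the operator identities already established.
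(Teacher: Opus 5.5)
Your proof is correct and follows the paper's argument step for step. You use the reconstruction $z=\gamma_rS_rv$, commute $[\,\cdot\,,W]$ past $S_r$, and apply the deletion identities to show that $D^{r+1}[z,W]$ is a non-zero multiple of $[v,G]$; the paper computes this scalar explicitly as $\langle r+1\rangle_t$. For the converse you use injectivity of $D^{r+1}$ on the $x$-centraliser, exactly as the paper does.
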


\begin{proof}
By Lemma~\ref{lem:x-centraliser-reconstruction}, one has
$Dv=0$ and $z=\gamma_rS_rv$.  Since $[W,x]=0$, the map
$u\mapsto[u,W]$ commutes with $L$, $R$, and hence with $S_r$.  Therefore
\begin{equation}
\label{eq:z-W-S-r}
        [z,W]=\gamma_rS_r[v,W].
\end{equation}
By \eqref{eq:left-deletion-Leibniz} and
\eqref{eq:deletion-W-G},
\begin{equation}
\label{eq:deleting-v-commutators}
        D[v,W]=[v,G],
        \qquad
        D[v,G]=0.
\end{equation}
The element $[v,W]$ has difference $1$.  Applying
Lemma~\ref{lem:deletion-operator-products} to
\eqref{eq:z-W-S-r}, and then using
\eqref{eq:x-centraliser-reconstruction}, gives
\begin{equation}
\label{eq:key-two-generator-identity}
        D^{r+1}[z,W]
        =
        \langle r+1\rangle_t[v,G].
\end{equation}
Indeed, the scalar that occurs on the right-hand side is
\begin{equation*}
 \gamma_r(-1)^r\langle r\rangle_t!
 \prod_{j=2}^{r+1}(t^j-1)
 =
 \frac{t^{r+1}-1}{t-1}
 =
 \langle r+1\rangle_t.
\end{equation*}

If $[z,W]=0$, then \eqref{eq:key-two-generator-identity} and
\eqref{eq:t-scalars-nonzero} give $[v,G]=0$.  Conversely, suppose that
$[v,G]=0$.  Then \eqref{eq:key-two-generator-identity} gives
$D^{r+1}[z,W]=0$.  Moreover, the Jacobi identity and
$[z,x]=[W,x]=0$ show that $[z,W]$ commutes with $x$; it has
difference $r+1$.  Lemma~\ref{lem:x-centraliser-deletion} therefore
implies $[z,W]=0$.
\end{proof}

\subsection{An auxiliary imaginary centraliser in \texorpdfstring{$U$}{U}}

Retain the notation
\begin{equation*}
 A_i=E_{i\delta+\alpha_0},
 \qquad
 B_i=E_{i\delta+\alpha_1}
 \qquad(i\in\N),
\end{equation*}
and put $J_m=E_{m\delta}$ for $m\geqslant 1$.  By
\eqref{eq:Damiani-real} and \eqref{eq:Damiani-cross-II},
\begin{align}
[J_1,A_i]&=[2]_qA_{i+1},
&
[J_1,B_i]&=-[2]_qB_{i+1},
&
[J_1,J_m]&=0.
\label{eq:J1-root-shifts}
\end{align}
The adjacent cases of
\eqref{eq:Damiani-cross-AA-odd} and
\eqref{eq:Damiani-cross-BB-odd} give
\begin{equation}
\label{eq:adjacent-Damiani-real}
 A_{i+1}\star A_i=q^{-2}A_i\star A_{i+1},
 \qquad
 B_i\star B_{i+1}=q^{-2}B_{i+1}\star B_i.
\end{equation}

\begin{proposition}
\label{prop:J1-centraliser-shuffle}
One has
\begin{equation}
\label{eq:J1-centraliser-shuffle}
        \Cent_U(J_1)
        =
        \F[J_1,J_2,J_3,\ldots].
\end{equation}
\end{proposition}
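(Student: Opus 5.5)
Since the $J_m$ commute pairwise by \eqref{eq:Damiani-cross-II}, the inclusion $\F[J_1,J_2,\ldots]\subseteq\Cent_U(J_1)$ is immediate. For the converse we use the Damiani PBW basis of Theorem~\ref{thm:any-Damiani} in the order
\[
 \cdots <A_2<A_1<A_0<J_1<J_2<\cdots<B_0<B_1<B_2<\cdots,
\]
so that every $u\in U$ is uniquely $u=\sum_{\alpha,\beta}a_\alpha\,j_{\alpha,\beta}\,b_\beta$. Here $a_\alpha$ runs over ordered monomials in the $A_i$, $b_\beta$ over ordered monomials in the $B_i$, and $j_{\alpha,\beta}\in\F[J_1,J_2,\ldots]$. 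The task is to show that if $[J_1,u]=0$, then only the term with $\alpha=\beta=\emptyset$ survives.

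The first step is to compute $\operatorname{ad}J_1$ on this form. Since $\operatorname{ad}J_1$ is a derivation annihilating $\F[J]$, \eqref{eq:J1-root-shifts} gives
\[
 [J_1,a_\alpha j b_\beta]=[2]_q\bigl(\partial(a_\alpha)\,j\,b_\beta-a_\alpha\,j\,\partial'(b_\beta)\bigr),
\]
where $\partial$ replaces one factor $A_i$ by $A_{i+1}$ (summed over positions) and $\partial'$ does the same for the $B$'s. The resulting words are not in general ordered. However, raising $A_i$ to $A_{i+1}$ inside a decreasing chain of $A$'s only creates adjacent inversions of the type $A_{i+1}A_{i+1}$, which are harmless, or of the type $A_{i+1}$ standing just to the left of an equal or smaller neighbour. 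We reorder these using \eqref{eq:Damiani-cross-AA-odd}--\eqref{eq:Damiani-cross-AA-even}, and in particular the adjacent case \eqref{eq:adjacent-Damiani-real}. All correction terms remain inside the subalgebra generated by the $A$'s, with the same total index and a signature that is smaller in the sense of Lemma~\ref{lem:Damiani-triangular-crossings}. The $B$'s are handled in the same way.

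The second step introduces a \emph{profile}: for a monomial $a_\alpha$, record the multiset of $A$-indices, and likewise for $B$. Among all terms of $u$ with $(\alpha,\beta)\ne(\emptyset,\emptyset)$, we choose one that is extremal. Concretely, take the lexicographically largest sorted index sequence of the $A$-part, then of the $B$-part, paying particular attention to the largest real index present. We then look at the term of $[J_1,u]$ obtained by raising the largest index, say $A_N\mapsto A_{N+1}$ in an $A$-part whose largest index $N$ is maximal among all terms. We must show two things. First, this profile arises from no other term of $u$, including correction terms from reordering, since those only move indices toward the middle and hence lower the maximum. Second, its coefficient is a non-zero multiple of the original coefficient, namely $[2]_q$ times a power of $q^{-2}$ times a $t$-integer that counts the equal neighbours $A_{N+1}$ being merged. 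This coefficient is non-zero by \eqref{eq:t-scalars-nonzero}. The resulting contradiction shows that no term with non-trivial real part occurs, so $u\in\F[J_1,J_2,\ldots]$.

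The main obstacle is the bookkeeping in the second step. The signs from the $A$-side and the $B$-side differ, and we must make sure that the raised $A$-term cannot be cancelled by a term in which a $B$ was raised, or by reordering corrections involving different $\alpha'$. To handle this, I would first restrict to a fixed weight and a fixed $B$-part $\beta$ that is maximal among the terms present, so that the $B$-parts of the competing terms are separated. Only then would I compare the $A$-profiles with a total order that refines "maximal largest index, then maximal multiplicity at that index". If instead all terms have empty $A$-part, the symmetric argument applies to the $B$-part, raising its largest index.
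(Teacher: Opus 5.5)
Your overall strategy is the paper's. You expand in a Damiani PBW basis with $A<J<B$, use that $J_1$ raises real indices via \eqref{eq:J1-root-shifts}, pick an extremal term, and show that raising its largest index gives a leading term with a non-zero $q$-integer coefficient. Reversing the order inside the $A$-block is harmless. The gap is in the step you single out as the main obstacle: the resolution in your last paragraph does not work. Suppose you first fix a $B$-part $\beta$ that is maximal among the terms present, and only then choose $\alpha$ of maximal $A$-profile among the terms with that $\beta$. Then the raised term $a_{T(\alpha)}\,j\,b_\beta$ can be cancelled by the $B$-raise of a term $a_{T(\alpha)}\,j'\,b_{\beta'}$ of $u$ with $\beta'<\beta$. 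Concretely, take $u=A_0B_1+A_1B_0$, of weight $2\delta$. Then $[J_1,u]=[2]_q(A_2B_0-A_0B_2)$. Your procedure selects $\beta=(1)$ and $\alpha=(0)$, and the term $A_1B_1$ it examines has coefficient $[2]_q-[2]_q=0$. So no contradiction is produced.

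The working separation, which is what the paper uses, makes the $A$-profile the primary key and never fixes $\beta$. Write $u=\sum_\alpha a_\alpha f_\alpha$ with each $f_\alpha$ in the span of ordered $J$--$B$ monomials. Since $J_1$ commutes with every $J_m$, and the $B$--$B$ straightening relations involve only $B$'s, $[J_1,f_\alpha]$ stays in that span. Hence a $B$-raise never changes the $A$-part. Now take $\alpha$ of maximal $A$-profile over \emph{all} terms. The component of $[J_1,u]$ with $A$-part $T(\alpha)$ is exactly $c\,a_{T(\alpha)}f_\alpha$ with $c\neq 0$, and this is non-zero by the PBW basis. Only after all $A$-factors have been eliminated do you run the symmetric argument on $B$-parts, with coefficients in $\F[J_1,J_2,\ldots]$.

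There are also two smaller inaccuracies. First, in your decreasing $A$-order the inversions created are $A_NA_{N+1}$, an unraised copy standing to the left of the raised one. These are resolved by \eqref{eq:adjacent-Damiani-real} with no correction terms at all. The resulting coefficient is $[2]_q\sum_{j=0}^{e-1}q^{2j}$, where $e$ counts the copies of $A_N$; there are no copies of $A_{N+1}$ to merge with. Second, ``lowering the maximum'' does not exclude competitors, because another $\alpha'$ with the same largest index $N$ also produces largest index $N+1$. What you actually need is that $T$ is injective and strictly order-preserving for the lexicographic order on non-increasing index sequences.
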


\begin{proof}
The inclusion from right to left follows from
\eqref{eq:Damiani-cross-II}.  For the reverse inclusion, use the standard
Damiani order in \eqref{eq:Damiani-standard-order}, with $J_m$ in place
of the notation $I_m$ used there.  By Theorem~\ref{thm:any-Damiani},
its ordered monomials form a basis of $U$.

For a finite non-decreasing sequence
$\lambda=(\lambda_1,\ldots,\lambda_s)$, put
$A_\lambda=A_{\lambda_1}\star\cdots\star A_{\lambda_s}$ and define
\begin{equation*}
 \operatorname{aprof}(\lambda)
 =
 (s;\lambda_s,\lambda_{s-1},\ldots,\lambda_1),
 \qquad
 \operatorname{aprof}(\varnothing)=(0).
\end{equation*}
Order these profiles lexicographically.  Write
\begin{equation}
\label{eq:J1-A-block-expansion}
        z=\sum_\lambda A_\lambda f_\lambda,
\end{equation}
where every $f_\lambda$ is a linear combination of ordered monomials in
the $J_m$ and $B_i$.

Suppose that a non-empty $A$-block occurs, and choose $\lambda$ with
maximal $A$-profile among those for which $f_\lambda\neq 0$.  Put
$p=\max\lambda$, and let $e$ be the multiplicity of $p$.  In the
Leibniz expansion of $[J_1,A_\lambda]$, the largest profile is obtained
by replacing one occurrence of $A_p$ by $A_{p+1}$.  If the chosen
occurrence has $j$ further copies of $A_p$ to its right, then
\eqref{eq:adjacent-Damiani-real} contributes the coefficient $q^{-2j}$.
Thus the coefficient of
\begin{equation*}
 A_{\lambda_1}\star\cdots\star A_{\lambda_{s-1}}
 \star A_{p+1}
\end{equation*}
is
\begin{equation}
\label{eq:J1-A-leading-coefficient}
        [2]_q\sum_{j=0}^{e-1}q^{-2j},
\end{equation}
which is non-zero, since
\begin{equation*}
        \sum_{j=0}^{e-1}q^{-2j}
        =
        \frac{1-q^{-2e}}{1-q^{-2}}
        \neq 0.
\end{equation*}  Replacing a smaller index gives a strictly smaller
profile.  The map that raises the largest index by one is injective and
strictly order-preserving.  Finally, $[J_1,f_\lambda]$ contains no
$A$-factor, so it leaves the $A$-profile unchanged.  The maximal-profile
component of $[J_1,z]$ is therefore non-zero, a contradiction.  Hence no
$A$-factor occurs.

We may now write
\begin{equation*}
        z=\sum_\mu g_\mu B_\mu,
\end{equation*}
where $g_\mu\in\F[J_1,J_2,\ldots]$ and
$B_\mu=B_{\mu_1}\star\cdots\star B_{\mu_s}$ for a finite
non-increasing sequence $\mu=(\mu_1,\ldots,\mu_s)$.  Define
$\operatorname{bprof}(\mu)=(s;\mu_1,\ldots,\mu_s)$ and order these
profiles lexicographically.  If a non-empty $B$-block of maximal profile
occurs, let $p$ be its largest index and let $e$ be the multiplicity of
$p$.  Equations \eqref{eq:J1-root-shifts} and
\eqref{eq:adjacent-Damiani-real} show that raising one of these maximal
indices gives the non-zero coefficient
\begin{equation*}
        -[2]_q\sum_{j=0}^{e-1}q^{-2j}.
\end{equation*}
The map that raises the largest index is again injective and strictly
order-preserving, and all other contributions have smaller $B$-profile.
This contradicts
$[J_1,z]=0$.  Thus no $B$-factor occurs, and
$z\in\F[J_1,J_2,\ldots]$.
\end{proof}

\begin{corollary}
\label{cor:G1-centraliser-shuffle}
One has
\begin{equation}
\label{eq:G1-centraliser-shuffle}
        \Cent_U(G_1)
        =
        \F[G_1,G_2,G_3,\ldots].
\end{equation}
\end{corollary}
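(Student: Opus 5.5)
We deduce the corollary from Proposition~\ref{prop:J1-centraliser-shuffle} using word reversal. In this section $G_1=yx$ and $\widetilde G_1^{\sh}=xy$, so the first step is to identify $J_1$ with a scalar multiple of $xy$. By \eqref{eq:Damiani-shuffle-delta} with $n=1$,
\[
 J_1=\iota(E_\delta)=-q^{-2}(q-q^{-1})C_1=-q^{-2}(q-q^{-1})[2]_q\,xy ,
\]
and this scalar is non-zero by Lemma~\ref{lem:nonvanishing-scalars}. Hence $\Cent_U(J_1)=\Cent_U(\widetilde G_1^{\sh})$.

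The next step is to identify the polynomial algebras. For every $m\geqslant 1$, \eqref{eq:Damiani-shuffle-delta} shows that $J_m$ is a non-zero multiple of $C_m$. I claim that $\F[J_1,J_2,\ldots]=\F[\widetilde G_1^{\sh},\widetilde G_2^{\sh},\ldots]$. Theorem~\ref{thm:Catalan-alternating} is a triangular system. Its $i=0$ term is $[2n]_q\widetilde G_n^{\sh}$ and its $i=n$ term is $(-1)^n[n]_qC_n$, and both coefficients are non-zero. Induction on $n$ therefore expresses $\widetilde G_n^{\sh}$ as a polynomial in $C_1,\ldots,C_n$ together with lower $\widetilde G^{\sh}$'s. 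It also expresses $C_n$ as a polynomial in $\widetilde G^{\sh}_1,\ldots,\widetilde G^{\sh}_n$ together with lower $C$'s. This gives both inclusions, so Proposition~\ref{prop:J1-centraliser-shuffle} becomes
\[
 \Cent_U(\widetilde G_1^{\sh})=\F[\widetilde G_1^{\sh},\widetilde G_2^{\sh},\ldots].
\]
(Alternatively, one could combine Proposition~\ref{prop:tG-leading} with the root-vector recursion \eqref{eq:root-tG-recursion} at the graded level, but the Catalan identity is more direct.)

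Finally, apply $\operatorname{rev}$. It is an involutive antiautomorphism of $(\Vsh,\star)$ that preserves $U$, since it fixes $x$ and $y$. An antiautomorphism maps the centraliser of an element onto the centraliser of its image, because $zs=sz$ if and only if $\operatorname{rev}(s)\operatorname{rev}(z)=\operatorname{rev}(z)\operatorname{rev}(s)$. It also sends $\widetilde G_n^{\sh}$ to $G_n^{\sh}$. Applying $\operatorname{rev}$ to the displayed equality therefore gives $\Cent_U(G_1)=\F[G_1,G_2,\ldots]$.

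The only step needing care is the change of generators between $\{C_m\}$ and $\{\widetilde G^{\sh}_m\}$. Everything hinges on the extreme coefficients $[2n]_q$ and $[n]_q$ in \eqref{eq:Catalan-alternating} being non-zero, and Lemma~\ref{lem:nonvanishing-scalars} guarantees this. Note that the $\star$-product in $C_i\star\widetilde G^{\sh}_{n-i}$ is the algebra product, so the recursion does stay inside the subalgebra.
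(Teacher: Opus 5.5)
Your proposal is correct and follows essentially the same route as the paper. You identify $J_1$ with the non-zero multiple $-q^{-2}(q-q^{-1})[2]_q\,xy=(q^{-4}-1)\widetilde G_1^{\sh}$, use the triangularity of \eqref{eq:Catalan-alternating} with extreme coefficients $[2n]_q$ and $(-1)^n[n]_q$ to obtain $\F[J_1,J_2,\ldots]=\F[\widetilde G_1^{\sh},\widetilde G_2^{\sh},\ldots]$, and then transport Proposition~\ref{prop:J1-centraliser-shuffle} to $G_1$ by word reversal.
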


\begin{proof}
By \eqref{eq:Damiani-shuffle-delta}, each $J_n$ is a non-zero scalar
multiple of $C_n$.  The identity
\eqref{eq:Catalan-alternating} is triangular in the two families
$\{C_n\}_{n\geqslant 1}$ and
$\{\widetilde G_n\}_{n\geqslant 1}$: for $n\geqslant 1$, the
coefficients of $C_n$ and $\widetilde G_n$ are respectively
$(-1)^n[n]_q$ and $[2n]_q$.  Both are non-zero.  Solving successively for
$\widetilde G_n$ and for $C_n$, and using induction on $n$, therefore
gives
\begin{equation}
\label{eq:C-tG-polynomial-equality}
 \F[J_1,J_2,\ldots]
 =
 \F[C_1,C_2,\ldots]
 =
 \F[\widetilde G_1,\widetilde G_2,\ldots].
\end{equation}
Moreover, $C_1=[2]_qxy$, so
\begin{equation*}
        J_1=(q^{-4}-1)\widetilde G_1.
\end{equation*}
The scalar $q^{-4}-1$ is non-zero.  Proposition
\ref{prop:J1-centraliser-shuffle} and
\eqref{eq:C-tG-polynomial-equality} therefore determine the centraliser
of $\widetilde G_1$.  Applying the word-reversal antiautomorphism gives
\eqref{eq:G1-centraliser-shuffle}.
\end{proof}

\subsection{A filtration by the number of \texorpdfstring{$G$}{G}-factors}

Let $\mathcal B$ be the subalgebra of $U$ generated by the negative
alternating words and the words $G_n$:
\begin{equation}
\label{eq:mixed-negative-G-algebra}
 \mathcal B
 =
 \bigl\langle W_{-n},G_m\mid n\in\N,\ m\geqslant 1\bigr\rangle.
\end{equation}
Use the alternating PBW order $W^-<G<W^+$, with increasing index order
inside the first two blocks.

\begin{lemma}
\label{lem:mixed-negative-G-PBW}
The monomials
\begin{equation}
\label{eq:mixed-negative-G-PBW}
 W_{-\lambda_1}\star\cdots\star W_{-\lambda_r}
 \star
 G_{\mu_1}\star\cdots\star G_{\mu_s},
\end{equation}
where $r,s\in\N$ and
\begin{equation*}
 0\leqslant\lambda_1\leqslant\cdots\leqslant\lambda_r,
 \qquad
 1\leqslant\mu_1\leqslant\cdots\leqslant\mu_s,
\end{equation*}
form an $\F$-basis of $\mathcal B$.  Moreover, the subspaces
\begin{equation}
\label{eq:G-length-filtration}
 F_m^G\mathcal B
 =
 \operatorname{span}_{\F}
 \left\{
 \text{the monomials in \eqref{eq:mixed-negative-G-PBW} with }s\leqslant m
 \right\}
\end{equation}
form an exhaustive algebra filtration of $\mathcal B$, with
$F_{-1}^G\mathcal B=0$.
\end{lemma}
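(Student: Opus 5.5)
We prove linear independence and spanning separately, then deal with the filtration. Linear independence comes directly from Theorem~\ref{thm:qshuffle-twelve}, applied with the block order $W^-<G<W^+$ (with $G_{j+1}$ in place of $\widetilde G_{j+1}$) and increasing index order in the $W^-$- and $G$-blocks. The monomials in \eqref{eq:mixed-negative-G-PBW} are exactly the ordered monomials of that PBW basis of $U$ that contain no $W^+$-factor. They are therefore linearly independent. Their span, which we call $\mathcal B'$, lies in $\mathcal B$.

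For spanning, it suffices to show that $\mathcal B'$ is closed under left multiplication by the generators $W_{-n}$ and $G_m$, since $1\in\mathcal B'$. Left multiplication by $W_{-n}$ is harmless: the negative words commute pairwise, so $W_{-n}\star(\text{monomial})$ reorders to another ordered monomial with one more $W^-$-factor and the same $G$-part. The real point is left multiplication by $G_m$. We must write $G_m\star W_{-\lambda_1}\star\cdots\star W_{-\lambda_r}$ in the form \eqref{eq:mixed-negative-G-PBW}, using only $W^-$ and $G$ and with the number of $G$-factors at most one. We transport the relations of $\Oq$ to $U$ by taking top components. Relation \eqref{eq:Aq6} reduced to $\Oq$ gives $[W_{-k},G_{\ell+1}]=[W_{-\ell},G_{k+1}]$, and the twin relations of $\Aq$ (via Terwilliger's relations in \cite{Terwilliger2022OqACE}) express $[G_{k+1},W_{-\ell}]$ through $q$-commutators with $W_0$. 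In the $q$-shuffle algebra, Theorem~\ref{thm:all-leading} together with homogeneity shows that the top components of these relations hold exactly in $U$. The relation we aim for has the form
\[
G_m\star W_{-k}=W_{-k}\star G_m+\sum(\text{terms }W_{-a}\star W_{-b}\text{ of the same degree}),
\]
with no $G$-factor in the correction terms. This is the shuffle analogue of the fact that $[\cW_{-k},\cG_{\ell+1}]$ lies in the span of $\cW^-\cW^-$ terms modulo central terms. If such an exchange relation holds, then by induction on $r$ the element $G_m\star W_{-\lambda}$ lies in $\mathcal B'$, and the number of $G$-factors does not increase. The $G$-factors commute among themselves in $U$, being the images of relation \eqref{eq:Aq10}, so the result can be reordered.

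The main obstacle is producing this exchange relation in $U$. The first thing to try is to derive it directly in $(\Vsh,\star)$. Write $G_m=D(W_{-m})$, which follows from \eqref{eq:deletion-W-G} extended as $D(x(yx)^m)=(yx)^m$. Then apply the Leibniz rule of Lemma~\ref{lem:left-deletion-Leibniz} to the identity $W_{-m}\star W_{-k}=W_{-k}\star W_{-m}$. This gives
\[
G_m\star W_{-k}+t\,W_{-m}\star G_k=G_k\star W_{-m}+t\,W_{-k}\star G_m.
\]
This is a mixed relation with exactly one $G$-factor on each side. Combined with the symmetric relation obtained by exchanging $m$ and $k$, it should let us solve for $G_m\star W_{-k}$ modulo $W^-\star G$ terms, possibly by an induction on $m+k$. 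If this direct route stalls, we fall back on transporting the $\Aq$-relations as described above.

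Once spanning is established, the filtration statements follow. The same straightening shows that multiplying a monomial with $s$ $G$-factors by one with $s'$ $G$-factors produces monomials with at most $s+s'$ $G$-factors, so $F^G_a\mathcal B\star F^G_b\mathcal B\subseteq F^G_{a+b}\mathcal B$. Exhaustiveness is clear, and $F^G_{-1}\mathcal B=0$ holds by definition.
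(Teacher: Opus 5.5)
Your linear-independence step and your reduction of spanning to an exchange rule for $G_m\star W_{-k}$ match the paper. However, the exchange rule you aim for is false, and the route you propose does not produce a correct one.

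Since $d(G_m\star W_{-k})=1$ while $d(W_{-a}\star W_{-b})=2$, there are no correction terms $W_{-a}\star W_{-b}$ of the same degree. Your target relation therefore reduces to $G_m\star W_{-k}=W_{-k}\star G_m$. This already fails for $m=1$, $k=0$: directly, $yx\star x=xyx+(1+q^2)yxx$ and $x\star yx=xyx+(1+q^{-2})yxx$, so
\[
 G_1\star x=t\,x\star G_1+(1-t)W_{-1},\qquad t=q^2 .
\]
In general the corrections also contain $G$-factors; for instance, $G_1\star W_{-1}=W_{-1}\star G_1+(t-1)W_0\star G_2+(1-t)W_{-2}$. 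The correct statement, which the paper takes from \cite[Lemma~15.1]{Terwilliger2019Alternating}, is that $G_m\star W_{-k}$ is a linear combination of products $W_{-a}\star G_b$ with $a+b=m+k$ and $G_0=1$. This is exactly what the proof needs: no positive alternating word is created, the number of $G$-factors does not increase, and your induction on $r$ and your filtration argument both go through with it.

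Your direct route also stalls as written. Applying $D$ to $W_{-m}\star W_{-k}=W_{-k}\star W_{-m}$ gives
\[
 G_m\star W_{-k}-G_k\star W_{-m}=t\bigl(W_{-k}\star G_m-W_{-m}\star G_k\bigr).
\]
Exchanging $m$ and $k$ returns the same identity, so only differences are determined. The missing step is to combine this with the top component of \eqref{eq:Aq6}, which you already wrote down. The scalars from Theorem~\ref{thm:all-leading} are symmetric in $k,\ell$, so this relation holds exactly in $U$, in the form
\[
 G_{\ell+1}\star W_{-k}-G_{k+1}\star W_{-\ell}=W_{-k}\star G_{\ell+1}-W_{-\ell}\star G_{k+1}.
\]
Fix $N$ and write $X_a=G_a\star W_{-(N-a)}$. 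The $D$-identity relates $X_a$ and $X_{N-a}$, and the transported \eqref{eq:Aq6} relates $X_c$ and $X_{N+1-c}$ for $1\leqslant c\leqslant N$. Each relation holds modulo the span of the $W_{-i}\star G_j$ with $i+j=N$. The composite of these two index reflections is $a\mapsto a+1$, so induction from $X_0=W_{-N}$ puts every $G_a\star W_{-b}$ in the required form. Without this argument, or a citation of Terwilliger's lemma, the spanning half of the statement remains unproved.
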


Write
\begin{equation*}
 \gr^G\mathcal B
 =
 \bigoplus_{m\geqslant 0}
 F_m^G\mathcal B/F_{m-1}^G\mathcal B.
\end{equation*}

\begin{proof}
The monomials in \eqref{eq:mixed-negative-G-PBW} are linearly independent
by Theorem~\ref{thm:qshuffle-twelve}.  To prove that they span
$\mathcal B$, straighten every occurrence of $G_i\star W_{-j}$.
Inspection of both cases in
\cite[Lemma~15.1]{Terwilliger2019Alternating} shows that the result is a
linear combination of terms $W_{-a}\star G_b$, where $a,b\in\N$ and
$G_0=1$.  In particular, no positive alternating word is created, and the
number of $G$-factors does not increase.  Repeated straightening proves
both assertions.
\end{proof}

Taking word reversal in \eqref{eq:qshuffle-Wminus} gives, for
$n\geqslant 1$,
\begin{equation}
\label{eq:G-x-crossing}
        G_n\star x
        =
        t\,x\star G_n+(1-t)W_{-n}.
\end{equation}

\begin{lemma}
\label{lem:G-length-leading}
Let $b$ be a PBW monomial in \eqref{eq:mixed-negative-G-PBW} containing
exactly $m$ factors from the family $G$.  Then
\begin{equation}
\label{eq:G-length-leading}
        b\star x
        \equiv
        t^m x\star b
        \pmod{F_{m-1}^G\mathcal B}.
\end{equation}
Consequently, on the degree-$m$ component of
$\gr^G\mathcal B$, the operator $C_\alpha=R-\alpha L$ is
\begin{equation}
\label{eq:C-alpha-on-G-graded}
        \gr_m^G(C_\alpha)
        =
        (t^m-\alpha)L.
\end{equation}
The operator $L$ is injective on every homogeneous component of
$\gr^G\mathcal B$.
\end{lemma}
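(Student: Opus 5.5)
The plan is to commute $x$ past $b$ one factor at a time. Write
\[
 b=W_{-\lambda_1}\star\cdots\star W_{-\lambda_r}\star G_{\mu_1}\star\cdots\star G_{\mu_m}.
\]
The factors $W_{-\lambda_i}$ commute with $x=W_0$, because the negative alternating words commute pairwise. For each $G$-factor, apply \eqref{eq:G-x-crossing}:
\[
 G_{\mu_j}\star x=t\,x\star G_{\mu_j}+(1-t)W_{-\mu_j}.
\]
Move $x$ leftward through $G_{\mu_m},\ldots,G_{\mu_1}$ and then through the $W$-block. The main term collects one factor $t$ for each $G$-factor, giving $t^m x\star b$.

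Each correction term has the form
\[
 W_{-\lambda}\star G_{\mu_1}\cdots G_{\mu_{j-1}}\star W_{-\mu_j}\star G_{\mu_{j+1}}\cdots G_{\mu_m}
\]
times a scalar, or the same with $x$ inserted to the left of the $G$'s. Such a term is a product of elements of $\mathcal B$ containing only $m-1$ $G$-factors. I would check that it lies in $F^G_{m-1}\mathcal B$ using the straightening of Lemma~\ref{lem:mixed-negative-G-PBW}. That straightening moves $W_{-\mu_j}$ left past $G$-factors via \cite[Lemma~15.1]{Terwilliger2019Alternating} and never increases the number of $G$-factors. So the product of the PBW factors with $m-1$ $G$'s lies in $F^G_{m-1}$, and $F^G$ is an algebra filtration. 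Since the $G$-factors already carry the filtration, $x\star(\cdot)$ also preserves $F^G$. This proves \eqref{eq:G-length-leading}.

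For \eqref{eq:C-alpha-on-G-graded}: $L$ and $R$ preserve $F^G$, since $x\in F^G_0$ and $F^G$ is an algebra filtration. By linearity, the congruence gives $R\equiv t^mL$ on $F^G_m$ modulo $F^G_{m-1}$. Hence $\gr^G_m(C_\alpha)=(t^m-\alpha)L$.

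For injectivity of $L$ on $\gr^G_m$, I would argue that $\gr_m^G(L)$ sends the class of a PBW monomial $b$ to the class of $x\star b$. The element $x\star b$ is again a PBW monomial (possibly up to reordering within the commuting $W$-block, since $W_0=x$ is the smallest index), with the same $G$-part. Distinct $b$ give distinct images, because we simply increase the multiplicity of $W_0$. So $L$ maps the induced basis of $\gr_m^G$ injectively into that basis. I expect the main subtlety to be verifying that the correction terms truly drop into $F^G_{m-1}$, that is, that the filtration $F^G$ is multiplicative. This is exactly the content of Lemma~\ref{lem:mixed-negative-G-PBW}.
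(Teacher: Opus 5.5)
Your proposal is correct and follows essentially the same route as the paper. You push $x$ leftward through the $G$-factors using \eqref{eq:G-x-crossing}, observe that every correction term has one fewer $G$-factor and so lies in $F^G_{m-1}\mathcal B$ by Lemma~\ref{lem:mixed-negative-G-PBW}, and obtain injectivity of $L$ because $x\star b$ is the PBW monomial with one extra $W_0$-factor. (Minor points: the correction terms contain no $x$ at all, since each crossing consumes it, and $x\star b$ needs no reordering because $W_0$ already has the smallest index.)
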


\begin{proof}
Move the final $x$ in $b\star x$ to the left through the $m$ factors
from the family $G$.  If the first term on the right-hand side of
\eqref{eq:G-x-crossing} is chosen at every crossing, the result is
$t^m x\star b$.  This is the unique term with $m$ factors from the family
$G$.  Choosing the correction term at any crossing removes one such
factor.  The subsequent straightening described in
Lemma~\ref{lem:mixed-negative-G-PBW} cannot increase their number, so all
remaining terms lie in $F_{m-1}^G\mathcal B$.  This proves
\eqref{eq:G-length-leading} and \eqref{eq:C-alpha-on-G-graded}.

Finally, left multiplication by $x=W_0$ adds one $W_0$-factor to the
negative block in \eqref{eq:mixed-negative-G-PBW}.  Distinct PBW
monomials remain distinct, so $L$ is injective on each graded component.
\end{proof}

\begin{proposition}
\label{prop:x-centraliser-in-mixed-algebra}
One has
\begin{equation}
\label{eq:x-centraliser-in-mixed-algebra}
        \Cent_{\mathcal B}(x)=\mathcal N.
\end{equation}
\end{proposition}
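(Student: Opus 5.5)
The inclusion $\mathcal N\subseteq\Cent_{\mathcal B}(x)$ is immediate: $x=W_0$, and the negative alternating words commute pairwise. For the reverse inclusion I will argue by contradiction using the $G$-length filtration of Lemma~\ref{lem:mixed-negative-G-PBW}. Suppose that $z\in\Cent_{\mathcal B}(x)\setminus\mathcal N$. Since $x$ is $\N^2$-homogeneous and $\mathcal N$ is a graded subalgebra, I may assume that $z$ is homogeneous. By Lemma~\ref{lem:mixed-negative-G-PBW}, I may also subtract its $G$-free part, because $\mathcal N=F_0^G\mathcal B$ and the elements of $\mathcal N$ commute with $x$. Thus I may assume that $z\neq0$ and that every PBW monomial of $z$ contains at least one $G$-factor.

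Let $m\geqslant1$ be minimal with $z\in F_m^G\mathcal B$, and let $\bar z\neq0$ be the image of $z$ in $\gr_m^G\mathcal B$. The hypothesis $[z,x]=0$ reads $C_1z=0$. I pass to the associated graded algebra and apply \eqref{eq:C-alpha-on-G-graded} with $\alpha=1$. This gives $(t^m-1)L\bar z=0$ in $\gr_m^G\mathcal B$. Here I must check that $C_1$ preserves $F_m^G\mathcal B$, which follows from \eqref{eq:G-length-leading} together with the fact that $L$ only adds a $W_0$-factor. By \eqref{eq:t-scalars-nonzero}, $t^m-1\neq0$, and $L$ is injective on $\gr_m^G\mathcal B$ by Lemma~\ref{lem:G-length-leading}. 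Hence $\bar z=0$, which is a contradiction. Therefore $z\in F_0^G\mathcal B=\mathcal N$.

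The main point requiring care is the well-definedness of the graded operator. The congruence \eqref{eq:G-length-leading} is stated for single PBW monomials. I need $L$ and $R$ to map $F_m^G\mathcal B$ into itself, and the congruence to hold for arbitrary linear combinations in $F_m^G\mathcal B$. Both follow by linearity, since $x\star b$ is again a PBW monomial with the same $G$-count. I should also confirm that $F_0^G\mathcal B$ is exactly $\mathcal N$, which is the $s=0$ case of the PBW description in Lemma~\ref{lem:mixed-negative-G-PBW}.
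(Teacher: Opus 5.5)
Your proposal is correct and is essentially the paper's argument: take the top $G$-filtration degree $m$ of a non-zero element commuting with $x$, apply $\gr_m^G(C_1)=(t^m-1)L$ together with $t^m\neq 1$ and the injectivity of $L$ to force $m=0$, and conclude that the element lies in $F_0^G\mathcal B=\mathcal N$. Your preliminary reductions (to homogeneous $z$, and to $z$ with no $G$-free part) are harmless but unnecessary, and your check that $L$ and $R$ preserve $F_m^G\mathcal B$ makes explicit what the paper leaves implicit in Lemma~\ref{lem:G-length-leading}.
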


\begin{proof}
The inclusion $\mathcal N\subseteq\Cent_{\mathcal B}(x)$ follows from
the commutativity of the negative alternating words.  Conversely, let
$0\neq b\in\mathcal B$ satisfy $[b,x]=0$, and let $m$ be its largest
$G$-filtration degree.  If $m\geqslant 1$, then
\eqref{eq:C-alpha-on-G-graded}, with $\alpha=1$, gives
\begin{equation*}
        0=(t^m-1)L(\overline b)
\end{equation*}
in the degree-$m$ component of $\gr^G\mathcal B$.  This is impossible,
since $t^m-1\neq 0$, the class $\overline b$ is non-zero, and $L$ is
injective.  Hence $m=0$, so $b\in\mathcal N$.
\end{proof}

\subsection{The two-generator centraliser}

\begin{theorem}
\label{thm:two-generator-centraliser}
One has
\begin{equation}
\label{eq:two-generator-centraliser}
        \Cent_U\{W_0,W_{-1}\}
        =
        \mathcal N.
\end{equation}
\end{theorem}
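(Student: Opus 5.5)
The inclusion $\mathcal N\subseteq\Cent_U\{W_0,W_{-1}\}$ holds because the negative alternating words commute pairwise. For the reverse inclusion, the plan is to reduce to homogeneous elements and combine the deletion machinery with the two centraliser results already proved. Since $x$ and $W=xyx$ are $\N^2$-homogeneous, the centraliser is a graded subspace. So let $0\neq z\in U_{a,b}$ commute with $x$ and $W$, and put $r=d(z)$. Lemma~\ref{lem:x-centraliser-deletion} gives $r\geqslant 0$. Set $v=D^rz$; this lies in $U$ by Lemma~\ref{lem:left-deletion-Leibniz}, is non-zero, and satisfies $Dv=0$ and $d(v)=0$. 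Proposition~\ref{prop:two-generator-reduction} then gives $[v,G_1]=0$. Corollary~\ref{cor:G1-centraliser-shuffle} yields $v\in\F[G_1,G_2,\ldots]\subseteq\mathcal B$.

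Next, reconstruct $z$ from $v$ by Lemma~\ref{lem:x-centraliser-reconstruction}, which gives $z=\gamma_rS_rv$. Each $C_\alpha=R-\alpha L$ is a combination of left and right $\star$-multiplication by $x=W_0\in\mathcal B$, so $S_r$ preserves $\mathcal B$. Hence $z\in\mathcal B$. Since $[z,x]=0$, Proposition~\ref{prop:x-centraliser-in-mixed-algebra} gives $z\in\mathcal N$, which finishes the argument.

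The main point to check is the hypothesis $z\in U$ in Proposition~\ref{prop:two-generator-reduction}, which holds since we work in $U$. We also need $v$ to be homogeneous, which holds because $D$ maps $\Vsh_{a,b}$ to $\Vsh_{a-1,b}$. The potential obstacle is whether the reconstruction really lands in $\mathcal B$ rather than in all of $U$; this is exactly why $S_r$ is built only from multiplications by $x$. So the step reduces to the observation that $\mathcal B$ is a subalgebra containing $x$. No further computation should be required. Note that the polynomial algebra in the $G_n$ need not centralise $x$; the final application of Proposition~\ref{prop:x-centraliser-in-mixed-algebra} is what removes the $G$-factors.
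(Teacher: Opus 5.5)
Your proposal is correct and follows the paper's proof essentially step for step. You reduce to a homogeneous $z$ and pass to $v=D^rz$. You then use Proposition~\ref{prop:two-generator-reduction} and Corollary~\ref{cor:G1-centraliser-shuffle} to get $v\in\F[G_1,G_2,\ldots]\subseteq\mathcal B$, reconstruct $z=\gamma_rS_rv\in\mathcal B$, and finish with Proposition~\ref{prop:x-centraliser-in-mixed-algebra}. Your explicit check that $v\in U$ via $D(U)\subseteq U$ is needed to apply the centraliser computed inside $U$; the paper leaves this detail implicit.
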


\begin{proof}
Since the negative alternating words commute pairwise,
$\mathcal N\subseteq\Cent_U\{W_0,W_{-1}\}$.

For the reverse inclusion, use the $\N^2$-grading of $U$.  Since
$W_0=x$ and $W_{-1}=W$ are homogeneous, their joint centraliser is a
graded subspace.  It therefore suffices to consider a non-zero homogeneous
element $z\in U$ satisfying
\begin{equation}
\label{eq:two-generator-assumption}
        [z,x]=0,
        \qquad
        [z,W]=0.
\end{equation}
Put $r=d(z)$.  By Lemma~\ref{lem:x-centraliser-deletion},
$r\geqslant 0$.  Let $v=D^rz$.  Proposition
\ref{prop:two-generator-reduction} gives $[v,G_1]=0$, and hence
Corollary~\ref{cor:G1-centraliser-shuffle} gives
\begin{equation*}
        v\in\F[G_1,G_2,\ldots]\subseteq\mathcal B.
\end{equation*}
By Lemma~\ref{lem:x-centraliser-reconstruction},
$z=\gamma_rS_rv$.  Each factor $C_{t^j}$ is a linear combination of
left and right multiplication by $x$, and $\mathcal B$ is a subalgebra
containing $x$.  Therefore $z\in\mathcal B$.  The first equality in
\eqref{eq:two-generator-assumption}, together with
Proposition~\ref{prop:x-centraliser-in-mixed-algebra}, now gives
$z\in\mathcal N$.
\end{proof}

\begin{corollary}
\label{cor:negative-shuffle-self-centralising}
The algebra $\mathcal N$ is self-centralising in $U$:
\begin{equation}
\label{eq:negative-shuffle-self-centralising}
        \Cent_U(\mathcal N)=\mathcal N.
\end{equation}
\end{corollary}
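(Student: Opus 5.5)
This corollary follows from Theorem~\ref{thm:two-generator-centraliser} by comparing the relevant centralisers in both directions. The argument goes through the two inclusions.

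First, $\mathcal N\subseteq\Cent_U(\mathcal N)$. The generators $W_{-n}$ of $\mathcal N$ commute pairwise by \cite[Lemma~5.12(i)]{Terwilliger2019Alternating}, so $\mathcal N$ is commutative. Hence every element of $\mathcal N$ commutes with all of $\mathcal N$.

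Second, $\Cent_U(\mathcal N)\subseteq\mathcal N$. Since $W_0=x$ and $W_{-1}=xyx$ both lie in $\mathcal N$, any element centralising $\mathcal N$ in particular commutes with these two generators:
\[
 \Cent_U(\mathcal N)\subseteq\Cent_U\{W_0,W_{-1}\}.
\]
Theorem~\ref{thm:two-generator-centraliser} identifies the right-hand side with $\mathcal N$.

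Combining the two inclusions gives $\Cent_U(\mathcal N)=\mathcal N$. There is no real obstacle here: all the substantive work (the left-deletion reduction to $G_1$, Corollary~\ref{cor:G1-centraliser-shuffle}, and the $G$-length filtration argument of Proposition~\ref{prop:x-centraliser-in-mixed-algebra}) has already been done in proving Theorem~\ref{thm:two-generator-centraliser}.

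For later use, the graded form will matter: the same equality, read in $U\simeq\gr\Oq$, is exactly the hypothesis needed to apply Lemma~\ref{lem:filtered-centraliser} to the negative alternating subalgebra of $\Oq$.
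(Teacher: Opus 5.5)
Your proposal is correct and matches the paper's proof: commutativity of $\mathcal N$ gives $\mathcal N\subseteq\Cent_U(\mathcal N)$, and since $W_0,W_{-1}\in\mathcal N$ you have $\Cent_U(\mathcal N)\subseteq\Cent_U\{W_0,W_{-1}\}=\mathcal N$ by Theorem~\ref{thm:two-generator-centraliser}. This is precisely the paper's chain of inclusions.
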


\begin{proof}
Since $\mathcal N$ is commutative and contains $W_0,W_{-1}$,
\begin{equation*}
 \mathcal N
 \subseteq
 \Cent_U(\mathcal N)
 \subseteq
 \Cent_U\{W_0,W_{-1}\}
 =
 \mathcal N
\end{equation*}
by Theorem~\ref{thm:two-generator-centraliser}.
\end{proof}

\begin{corollary}
\label{cor:conj167}
One has
\begin{equation}
\label{eq:negative-Oq-self-centralising}
 \Cent_{\Oq}
 \bigl(
 \F[W_0,W_{-1},W_{-2},\ldots]
 \bigr)
 =
 \F[W_0,W_{-1},W_{-2},\ldots].
\end{equation}
\end{corollary}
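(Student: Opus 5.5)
The plan is to apply the filtered centraliser lemma, Lemma~\ref{lem:filtered-centraliser}, with $R=\Oq$ carrying the word-length filtration and $H=\F[W_0,W_{-1},W_{-2},\ldots]\subseteq\Oq$. This $H$ is commutative, since it is the image of the commuting family in \eqref{eq:Aq4} under $\gamma$. By Corollary~\ref{cor:alternating-polynomial-algebras}, $H$ is a polynomial algebra in the $W_{-n}$. Once we know $\gr H=\mathcal N$ inside $\gr\Oq\simeq\Uplus\simeq U$, Corollary~\ref{cor:negative-shuffle-self-centralising} gives $\Cent_{\gr\Oq}(\gr H)=\gr H$. The lemma then yields $\Cent_{\Oq}(H)=H$.

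The main step is therefore to identify $\gr H$ with the induced filtration $H\cap F_d\Oq$. First we show $\mathcal N\subseteq\gr H$. Every monomial $M=W_{-\lambda_1}\cdots W_{-\lambda_r}$ lies in $H$. By Theorem~\ref{thm:all-leading} and the computation \eqref{eq:filtered-lifting-monomial}, $\init(M)$ is a non-zero scalar multiple of $W_{-\lambda_1}^{\sh}\star\cdots\star W_{-\lambda_r}^{\sh}$, and these products span $\mathcal N$.

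For the reverse inclusion, let $0\neq h\in H$ and write $h=\sum_M a_M M$ over distinct monomials. Let $D$ be the maximal value of $d(M)=\sum(2\lambda_i+1)$ with $a_M\neq0$. The degree-$D$ component of $h$ is $\sum_{d(M)=D}a_M c_M M^{\sh}$ with $c_M\neq0$. This is non-zero, because the $M^{\sh}$ are linearly independent: they are among the PBW monomials of Theorem~\ref{thm:qshuffle-twelve}, with the $W^-$ block first. Hence $\deg_F h=D$ and $\init(h)\in\mathcal N$. So every homogeneous element of $\gr H$ is the class of some $\init(h)$ lying in $\mathcal N$, which gives $\gr H=\mathcal N$ as graded subalgebras of $\gr\Oq$.

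The remaining check is that the embedding $\gr H\hookrightarrow\gr\Oq$ in Lemma~\ref{lem:filtered-centraliser} is compatible with the identification $\psi$ and Rosso's $\iota$. This holds because all initial forms in Theorem~\ref{thm:all-leading} were already computed through these identifications. I expect no real obstacle here: the substantive work is Corollary~\ref{cor:negative-shuffle-self-centralising}. The only point requiring care is the linear independence of the top-degree components, which makes $\gr H$ exactly $\mathcal N$ rather than something larger or smaller.
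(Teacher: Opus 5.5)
Your proposal is correct and follows the paper's proof: you apply Lemma~\ref{lem:filtered-centraliser} to the negative alternating algebra in $\Oq$, and you identify its associated graded algebra with $\mathcal N$ using the initial forms of Theorem~\ref{thm:all-leading} and the linear independence of the negative-block PBW monomials from Theorem~\ref{thm:qshuffle-twelve}. You then conclude with Corollary~\ref{cor:negative-shuffle-self-centralising}; your top-degree argument for $\gr H\subseteq\mathcal N$ simply spells out what the paper calls the compatibility of the PBW basis with the word-length filtration.
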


\begin{proof}
Put
$H^-=\F[W_0,W_{-1},W_{-2},\ldots]\subseteq\Oq$.
By Theorem~\ref{thm:twelve-PBW}, the monomials in the negative alternating
family are linearly independent.  Since this family is commutative,
$H^-$ is a polynomial algebra.

By Theorem~\ref{thm:all-leading}, the initial forms of the negative
alternating generators are non-zero scalar multiples of the corresponding
negative alternating words.  The compatibility of the PBW basis with the
word-length filtration therefore gives
\begin{equation*}
 \gr H^-
 =
 \F[W_0^{\sh},W_{-1}^{\sh},W_{-2}^{\sh},\ldots]
 =
 \mathcal N
 \subseteq U.
\end{equation*}
By Corollary~\ref{cor:negative-shuffle-self-centralising},
$\Cent_U(\gr H^-)=\gr H^-$.  Lemma~\ref{lem:filtered-centraliser} now
gives $\Cent_{\Oq}(H^-)=H^-$.
\end{proof}

Corollary~\ref{cor:conj167} proves
\cite[Conjecture~16.7]{Terwilliger2022OqACE}.

\section{The imaginary alternating centraliser}
\label{sec:centralisers}

For $n\geqslant 1$, put $I_n=B_{n\delta}$, and retain the scalar
convention $I_0=B_{0\delta}=q^{-2}-1$.  We first compare the imaginary
root vectors with the imaginary alternating generators.

\begin{lemma}
\label{lem:imaginary-coordinate-change}
For every $n\geqslant 1$,
\[
 I_n\in\F[\widetilde G_1,\ldots,\widetilde G_n],
 \qquad
 \widetilde G_n\in\F[I_1,\ldots,I_n].
\]
Consequently,
\begin{equation}
\label{eq:imaginary-algebra-equality}
 \F[I_1,I_2,\ldots]
 =
 \F[\widetilde G_1,\widetilde G_2,\ldots],
\end{equation}
and
\begin{equation}
\label{eq:imaginary-commute}
        [I_m,I_n]=0
        \qquad(m,n\geqslant 1).
\end{equation}
Both sides of \eqref{eq:imaginary-algebra-equality} are polynomial algebras
in the displayed generators.
\end{lemma}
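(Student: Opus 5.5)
The plan is to read both inclusions off the recursion \eqref{eq:root-tG-recursion}, which links $I_n=B_{n\delta}$ and the $\widetilde G_k$ in $\Oq$. For $n\geqslant 1$ its terms are of two kinds. The first term is $[n]_qI_n\widetilde G_0$, a non-zero scalar multiple of $I_n$, since $\widetilde G_0=-(q-q^{-1})[2]_q^2\neq 0$ and $[n]_q\neq 0$ by Lemma~\ref{lem:nonvanishing-scalars}. The remaining terms are products $B_{j\delta}\widetilde G_{k+1}$ with $j+k+2\ell+1=n$, so $j\leqslant n-1$ and $k+1\leqslant n$. Among these, $j=0$ occurs only with the scalar $B_{0\delta}=I_0$, and the coefficient of $\widetilde G_n$ itself (the term $(j,k,\ell)=(0,n-1,0)$) is $d_n=[2n]_q[2]_q^nI_0\neq 0$.

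\emph{First inclusion.} I will argue by strong induction on $n$ that $I_n\in\F[\widetilde G_1,\ldots,\widetilde G_n]$. Solve the recursion for $I_n$. Each other summand is either a scalar multiple of $\widetilde G_{k+1}$ with $k+1\leqslant n$ (when $j=0$), or a product $I_j\widetilde G_{k+1}$ with $1\leqslant j\leqslant n-1$ and $k+1\leqslant n-j$. By the induction hypothesis, $I_j\in\F[\widetilde G_1,\ldots,\widetilde G_j]$, so every summand lies in $\F[\widetilde G_1,\ldots,\widetilde G_n]$.

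\emph{Second inclusion.} Again by induction on $n$, I will show $\widetilde G_n\in\F[I_1,\ldots,I_n]$, this time solving the recursion for $\widetilde G_n$ using $d_n\neq 0$. The other terms are $I_n$, scalar multiples of $\widetilde G_{k+1}$ with $k+1<n$, and products $I_j\widetilde G_{k+1}$ with $j\geqslant 1$ and $k+1\leqslant n-j<n$. All of these lie in $\F[I_1,\ldots,I_n]$ by the induction hypothesis. The only point to check carefully is the ordering of factors. In \eqref{eq:root-tG-recursion} the $B$-factor sits on the left, so the argument needs only that the subalgebra generated by the relevant elements is closed under multiplication. That holds regardless of commutativity, so the two inclusions give equality of the generated subalgebras, which is \eqref{eq:imaginary-algebra-equality}.

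\emph{Commutativity and polynomiality.} The $\widetilde G_k$ commute pairwise, by applying $\gamma$ to \eqref{eq:Aq10}. Hence $\F[\widetilde G_1,\widetilde G_2,\ldots]$ is commutative, and since each $I_m$ lies in it, \eqref{eq:imaginary-commute} follows. By Corollary~\ref{cor:alternating-polynomial-algebras}, the right-hand side of \eqref{eq:imaginary-algebra-equality} is a polynomial algebra in the $\widetilde G_k$. For the $I_n$, I will invoke Theorem~\ref{thm:BK-nonroot}: with any order, monomials in the imaginary root vectors alone are part of a PBW basis, hence linearly independent. Since the $I_n$ commute, they are algebraically independent. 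Alternatively, the triangular change of variables $I_n=c\,\widetilde G_n+(\text{polynomial in }\widetilde G_1,\ldots,\widetilde G_{n-1})$ with $c\neq 0$, read off from the recursion, gives the same conclusion.

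The main obstacle is bookkeeping rather than any conceptual step. I need to confirm that in the recursion the index $j$ of every non-scalar $B_{j\delta}$ factor is strictly less than $n$ and that every $\widetilde G$-index is at most $n$, with equality only at the leading term. This follows directly from $j+k+2\ell+1=n$.
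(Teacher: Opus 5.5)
Your proposal is correct and follows essentially the same argument as the paper. Both use two triangular inductions on \eqref{eq:root-tG-recursion}, driven by the non-zero coefficients $[n]_q\widetilde G_0$ and $[2n]_q[2]_q^nI_0$; commutativity inherited from \eqref{eq:Aq10}; and linear independence from Theorem~\ref{thm:BK-nonroot} together with the alternating PBW result. Your extra remark is also valid: the triangular form $I_n=c\,\widetilde G_n+(\text{polynomial in }\widetilde G_1,\ldots,\widetilde G_{n-1})$ with $c\neq 0$ alone gives polynomiality of the $I_n$, so you could use it instead of Theorem~\ref{thm:BK-nonroot}.
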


\begin{proof}
We first prove by induction on $n$ that
$I_n\in\F[\widetilde G_1,\ldots,\widetilde G_n]$.  In the recursion
\eqref{eq:root-tG-recursion}, the only occurrence of
$I_n=B_{n\delta}$ is the first term
$[n]_qI_n\widetilde G_0$, whose coefficient
$[n]_q\widetilde G_0$ is non-zero.  Every $B_{j\delta}$ occurring in
the sum has $j<n$, with $B_{0\delta}=I_0$ a scalar.  The induction
hypothesis therefore allows us to solve the recursion for $I_n$.

The elements $\widetilde G_n$ commute pairwise, by the image in $\Oq$
of \eqref{eq:Aq10}.  The first assertion now implies
\eqref{eq:imaginary-commute}.

We next prove by induction on $n$ that
$\widetilde G_n\in\F[I_1,\ldots,I_n]$.  In
\eqref{eq:root-tG-recursion}, the unique occurrence of $\widetilde G_n$
corresponds to $(j,k,\ell)=(0,n-1,0)$ and has the non-zero coefficient
$[2n]_q[2]_q^nI_0$.  Every other alternating generator in the recursion
has smaller index and hence belongs to
$\F[I_1,\ldots,I_{n-1}]$ by induction.  Solving for
$\widetilde G_n$ proves the second assertion and
\eqref{eq:imaginary-algebra-equality}.

Finally, Theorem~\ref{thm:BK-nonroot} shows that the ordered monomials in
the $I_n$ are linearly independent.  Hence
$\F[I_1,I_2,\ldots]$ is a polynomial algebra.  The same conclusion for
the $\widetilde G_n$ follows from
\eqref{eq:imaginary-algebra-equality}, or directly from
Theorem~\ref{thm:twelve-PBW}.
\end{proof}

We now introduce a two-sided chain containing all real Baseilhac--Kolb root
vectors.  For $r\in\Z$, define
\[
 R_r=
 \begin{cases}
 B_{r\delta+\alpha_1}, & r\geqslant 0,\\[1mm]
 B_{(-r-1)\delta+\alpha_0}, & r\leqslant -1.
 \end{cases}
\]
Thus $R_{-1}=W_0$ and $R_0=W_1$.  Put
\begin{equation}
\label{eq:kappa-definition}
 \kappa
 =q^{-1}(q-q^{-1})^2[2]_q
 =\frac{(q-q^{-1})(q^2-q^{-2})}{q}.
\end{equation}

We use the Baseilhac--Kolb automorphism $T_0$ in the fixed-parameter form
of \cite[Section~2, equations~(3)--(6)]{Terwilliger2018Action}.  It fixes
$W_0$ and satisfies
\begin{equation}
\label{eq:T0-W1}
 T_0(W_1)
 =
 W_1+
 \frac{qW_0^2W_1-[2]_qW_0W_1W_0+q^{-1}W_1W_0^2}
 {(q-q^{-1})(q^2-q^{-2})}.
\end{equation}
It also satisfies
\begin{equation}
\label{eq:T0-qcommutator}
 T_0\bigl(qW_1W_0-q^{-1}W_0W_1\bigr)
 =qW_0W_1-q^{-1}W_1W_0;
\end{equation}
see \cite[Lemma~2.2]{Terwilliger2018Action}.  All denominators in
\eqref{eq:T0-W1} are non-zero by
Lemma~\ref{lem:nonvanishing-scalars}.  Let $\vartheta=T_0\circ\sigma$,
where $\sigma$ interchanges $W_0,W_1$.

\begin{lemma}
\label{lem:real-chain-relations}
For every $r\in\Z$,
\begin{align}
\vartheta(I_1)&=I_1,
&
\vartheta(R_r)&=R_{r-1},
\label{eq:real-chain-shift-automorphism}\\
[I_1,R_r]
 &=
 \kappa(R_{r-1}-R_{r+1}),
\label{eq:I1-real-shift}\\
R_rR_{r+1}
 &=
 q^{-2}R_{r+1}R_r-I_1.
\label{eq:adjacent-real-straightening}
\end{align}
\end{lemma}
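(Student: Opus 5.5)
We prove the three identities in the order: the recursion \eqref{eq:I1-real-shift} first (it is essentially the definition of the real root vectors), then the automorphism statement \eqref{eq:real-chain-shift-automorphism}, and finally the straightening relation \eqref{eq:adjacent-real-straightening}, which we transport along $\vartheta$ from a single base case.

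\emph{The shift relation.} Since $I_1=B_\delta$ and $\kappa=(q-q^{-1})(q^2-q^{-2})/q$, the recursion \eqref{eq:BK-real1} reads $[I_1,B_{(n-1)\delta+\alpha_1}]=\kappa\bigl(B_{(n-2)\delta+\alpha_1}-B_{n\delta+\alpha_1}\bigr)$ for $n\geqslant 2$. With $r=n-1\geqslant 1$ this is exactly \eqref{eq:I1-real-shift}. The case $r=0$ is \eqref{eq:BK-real1-first}, because $R_{-1}=W_0$. Similarly, \eqref{eq:BK-real0-first} and \eqref{eq:BK-real0} give $[I_1,B_{m\delta+\alpha_0}]=\kappa\bigl(B_{(m+1)\delta+\alpha_0}-B_{(m-1)\delta+\alpha_0}\bigr)$, with $B_{-\delta+\alpha_0}$ read as $W_1=R_0$. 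Putting $r=-m-1$ turns this into $[I_1,R_r]=\kappa(R_{r-1}-R_{r+1})$ for $r\leqslant -1$. So \eqref{eq:I1-real-shift} requires only a check of signs and indices.

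\emph{The automorphism $\vartheta$.} First, $\vartheta(I_1)=I_1$: by Lemma~\ref{lem:first-imaginary-generators}, $I_1=-q^{-1}\qcomm{W_0}{W_1}$. Applying $\sigma$ gives $-q^{-1}\qcomm{W_1}{W_0}$, and then \eqref{eq:T0-qcommutator} maps this back to $-q^{-1}\qcomm{W_0}{W_1}=I_1$. Next, $\vartheta(R_0)=T_0(W_0)=W_0=R_{-1}$. Also $\vartheta(R_1)=T_0\sigma(B_{\delta+\alpha_1})=T_0\bigl(W_1+\kappa^{-1}[I_1',W_0]\bigr)$, where $I_1'=\sigma(I_1)$; one checks directly that this equals $T_0(W_1)$, which is given by \eqref{eq:T0-W1}. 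Expanding $[B_\delta,W_0]$, the right side of \eqref{eq:T0-W1} is precisely $B_{\delta+\alpha_1}$ with $W_0,W_1$ roles adjusted, i.e.\ $W_1+\kappa^{-1}[I_1,W_0]=R_0$ after applying \eqref{eq:I1-real-shift} at $r=-1$. I expect this explicit match $T_0(W_1)=W_1+q[B_\delta,W_0]/((q-q^{-1})(q^2-q^{-2}))$ to be the main computational point, and it may need care with the sign convention in $\sigma$. The cleanest route is to show $\vartheta(R_r)=R_{r-1}$ for two consecutive values of $r$ and then propagate. Once two consecutive values are known, apply $\vartheta$ to \eqref{eq:I1-real-shift}. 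Since $\vartheta$ fixes $I_1$ and $\kappa\neq 0$, this gives $\vartheta(R_{r+1})=R_{r-1}-\kappa^{-1}[I_1,\vartheta(R_r)]$, and a two-sided induction upward and downward in $r$ yields $\vartheta(R_r)=R_{r-1}$ for all $r\in\Z$.

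\emph{Straightening.} We verify \eqref{eq:adjacent-real-straightening} for $r=-1$: $W_0W_1=q^{-2}W_1W_0-I_1$ is exactly \eqref{eq:Bdelta}. Applying $\vartheta^{-k}$ for $k\in\Z$, and using that $\vartheta$ is an automorphism fixing $I_1$ and shifting the chain, gives the relation for every $r$. Note that $\vartheta$ being an algebra automorphism (not an antiautomorphism) is needed here; both $T_0$ and $\sigma$ are automorphisms, so this holds.
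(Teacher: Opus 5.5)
Your plan is the same as the paper's: read \eqref{eq:I1-real-shift} off the defining recursions, obtain $\vartheta(I_1)=I_1$ from \eqref{eq:T0-qcommutator}, establish $\vartheta(R_r)=R_{r-1}$ at two consecutive indices and propagate it through the three-term recurrence, and finally transport $W_0W_1=q^{-2}W_1W_0-I_1$ along powers of $\vartheta$. Those parts are fine. You also correctly single out the key computation: expanding $[B_\delta,W_0]$ shows that the right-hand side of \eqref{eq:T0-W1} is $W_1+\kappa^{-1}[I_1,W_0]$.

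However, the paragraph meant to supply the second base value is wrong as written, in three places.
\begin{enumerate}
\item By \eqref{eq:BK-real1-first}, $\sigma(R_1)=W_1-\kappa^{-1}[\sigma(I_1),W_0]$; the sign is minus, not plus.
\item Applying $T_0$, which fixes $W_0$ and sends $\sigma(I_1)$ to $\vartheta(I_1)=I_1$, gives $\vartheta(R_1)=T_0(W_1)-\kappa^{-1}[I_1,W_0]$. This is not $T_0(W_1)$, because $[I_1,W_0]=\kappa(R_{-2}-R_0)\neq 0$ (for instance, $R_{-2}$ has filtration degree $3$).
\item $W_1+\kappa^{-1}[I_1,W_0]$ equals $B_{\delta+\alpha_0}=R_{-2}$ by \eqref{eq:BK-real0-first}, not $R_0$.
\end{enumerate}

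The repair uses only what you already have. Your identification says precisely that $\vartheta(R_{-1})=\vartheta(W_0)=T_0(W_1)=R_{-2}$. Together with $\vartheta(R_0)=T_0(W_0)=R_{-1}$, this is the consecutive pair the paper uses. If you want the index $r=1$ instead, it then follows as $\vartheta(R_1)=R_{-2}-\kappa^{-1}[I_1,R_{-1}]=R_0$.

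The propagation step also needs correcting. Applying $\vartheta$ to \eqref{eq:I1-real-shift} gives $\vartheta(R_{r+1})=\vartheta(R_{r-1})-\kappa^{-1}[I_1,\vartheta(R_r)]$, not $R_{r-1}-\kappa^{-1}[I_1,\vartheta(R_r)]$. With your version, the induction would produce $R_r+R_{r-1}-R_{r-2}$ rather than $R_r$.

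With these fixes, your two-sided induction is equivalent to the paper's device. The paper notes that $S_r=\vartheta^{-r-1}(W_0)$ satisfies the same recurrence as $R_r$ and agrees with it at $r=-1,0$.
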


\begin{proof}
We first prove \eqref{eq:I1-real-shift} directly from the real-root
recursions.  For $r=-1$ and $r=0$, equations
\eqref{eq:BK-real0-first} and \eqref{eq:BK-real1-first} give, respectively,
\begin{align*}
R_{-2}
 =R_0+
 \frac{q[I_1,R_{-1}]}{(q-q^{-1})(q^2-q^{-2})},\qquad
R_1
 =R_{-1}-
 \frac{q[I_1,R_0]}{(q-q^{-1})(q^2-q^{-2})}.
\end{align*}
By \eqref{eq:kappa-definition}, these are precisely the two required
instances of \eqref{eq:I1-real-shift}.

For $r\leqslant -2$, write $n=-r-1\geqslant 1$.  Equation
\eqref{eq:BK-real0}, with index $n+1$, becomes
$R_{r-1}=R_{r+1}+\kappa^{-1}[I_1,R_r]$.  For $r\geqslant 1$,
equation \eqref{eq:BK-real1}, with index $r+1$, becomes
$R_{r+1}=R_{r-1}-\kappa^{-1}[I_1,R_r]$.  Both equations are equivalent
to \eqref{eq:I1-real-shift}.

Since $I_1=q^{-2}W_1W_0-W_0W_1$, equation
\eqref{eq:T0-qcommutator} gives $\vartheta(I_1)=I_1$.  Substituting this
expression for $I_1$ into \eqref{eq:BK-real0-first} shows that its
right-hand side is \eqref{eq:T0-W1}.  Hence
$\vartheta(W_0)=T_0(W_1)=R_{-2}$.  Also
$\vartheta(W_1)=T_0(W_0)=W_0$, and therefore
$\vartheta^{-1}(W_0)=W_1$.

For $r\in\Z$, put $S_r=\vartheta^{-r-1}(W_0)$.  Then
$S_{-1}=R_{-1}$, $S_0=R_0$, and $S_{-2}=R_{-2}$.  Applying powers
of $\vartheta$ to the case $r=-1$ of \eqref{eq:I1-real-shift} gives
\[
        [I_1,S_r]=\kappa(S_{r-1}-S_{r+1})
        \qquad(r\in\Z).
\]
Thus the sequences $\{R_r\}_{r\in\Z}$ and
$\{S_r\}_{r\in\Z}$ satisfy the same recurrence and agree at
$r=-1,0$.  Since $\kappa\neq 0$, they agree everywhere.  Hence
$R_r=\vartheta^{-r-1}(W_0)$, which proves
\eqref{eq:real-chain-shift-automorphism}.

Finally, the definition of $I_1$ gives
$R_{-1}R_0=q^{-2}R_0R_{-1}-I_1$.  Applying
$\vartheta^{-r-1}$ proves \eqref{eq:adjacent-real-straightening} for
every $r\in\Z$.
\end{proof}

Order the root vectors by
\begin{equation}
\label{eq:BK-centraliser-order}
 R_r<R_s\iff r>s,
 \qquad
 R_r<I_m,
 \qquad
 I_m<I_n\iff m<n.
\end{equation}
By Theorem~\ref{thm:BK-nonroot}, this order gives a PBW basis of $\Oq$.
Put $\mathcal I=\F[I_1,I_2,I_3,\ldots]$.  For a finite non-increasing
sequence $\boldsymbol r=(r_1,\ldots,r_d)$ of integers, put
$R_{\boldsymbol r}=R_{r_1}\cdots R_{r_d}$, with
$R_{\varnothing}=1$.  Every element of $\Oq$ has a unique expansion
\begin{equation}
\label{eq:BK-centraliser-expansion}
        x
        =
        \sum_{\boldsymbol r}R_{\boldsymbol r}f_{\boldsymbol r},
        \qquad
        f_{\boldsymbol r}\in\mathcal I,
\end{equation}
with only finitely many non-zero coefficients.

Define the \textbf{real profile} of
$\boldsymbol r=(r_1,\ldots,r_d)$ by
$\operatorname{rprof}(\boldsymbol r)=(d;r_1,\ldots,r_d)$, and put
$\operatorname{rprof}(\varnothing)=(0)$.  We order real profiles
lexicographically.  For a real profile $\alpha$, let
\begin{equation}
\label{eq:real-profile-subspace}
 \mathscr R_{<\alpha}
 =
 \operatorname{span}_{\F}
 \left\{
 R_{\boldsymbol s}f
 \,\middle|\,
 \operatorname{rprof}(\boldsymbol s)<\alpha,
 \ f\in\mathcal I
 \right\}.
\end{equation}
For a non-empty sequence $\boldsymbol r=(r_1,\ldots,r_d)$, define
$T(\boldsymbol r)=(r_1+1,r_2,\ldots,r_d)$.  This sequence is again
non-increasing, and $T$ is injective and strictly order-preserving for
the real-profile order.

\begin{lemma}
\label{lem:leading-real-profile}
Let $\boldsymbol r=(r_1,\ldots,r_d)$ be non-empty.  Put $r=r_1$, and
suppose that $r$ has multiplicity $e$ in $\boldsymbol r$.  Then
\begin{equation}
\label{eq:leading-real-commutator}
\begin{aligned}
[I_1,R_{\boldsymbol r}]
\equiv{}&
-\kappa
\left(\sum_{j=0}^{e-1}q^{-2j}\right)
R_{T(\boldsymbol r)}\pmod{\mathscr R_{<\operatorname{rprof}(T(\boldsymbol r))}}.
\end{aligned}
\end{equation}
Here
$R_{T(\boldsymbol r)}=R_{r+1}R_r^{\,e-1}
R_{r_{e+1}}\cdots R_{r_d}$.
\end{lemma}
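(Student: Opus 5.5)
We expand $[I_1,R_{\boldsymbol r}]$ by the Leibniz rule,
\[
[I_1,R_{\boldsymbol r}]=\sum_{i=1}^{d}R_{r_1}\cdots R_{r_{i-1}}[I_1,R_{r_i}]R_{r_{i+1}}\cdots R_{r_d},
\]
and substitute \eqref{eq:I1-real-shift}, $[I_1,R_{r_i}]=\kappa(R_{r_i-1}-R_{r_i+1})$. Each summand is then a word of length $d$ in the real root vectors in which one index has moved by $\pm1$. All these words are then rewritten in the PBW order \eqref{eq:BK-centraliser-order}. The real profile is controlled throughout by two observations.

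First, straightening a word of real root vectors either preserves the multiset of real indices, at the leading coefficient, or produces correction terms in $\mathscr R_{<\alpha}$ of strictly smaller profile. This follows from Theorem~\ref{thm:BK-nonroot} together with the filtration: by Proposition~\ref{prop:root-leading-U}, every correction term arising from a Baseilhac--Kolb reordering has, in the associated graded, the form of the Damiani corrections of Lemma~\ref{lem:Damiani-triangular-crossings}. These either lower the number of real factors (the first profile coordinate) or move indices inward, towards the middle of the pair. Lower filtration terms also have either fewer real factors or, at equal length, indices that are not larger. I would make this precise by a lexicographic induction parallel to the proof of Theorem~\ref{thm:any-Damiani}, using profile comparison.

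Second, terms that lower an index, and terms that raise some $r_i<r$, give sequences whose sorted profile is strictly below $\operatorname{rprof}(T(\boldsymbol r))$, since the first coordinate $r_1=r$ is then no larger than $r+1$ and the later entries are lexicographically smaller. Raising any of the $e$ copies of $r$ produces the sorted sequence $T(\boldsymbol r)$. Hence the top term comes only from the summands $i=1,\dots,e$, each contributing $-\kappa R_r^{\,i-1}R_{r+1}R_r^{\,e-i}R_{r_{e+1}}\cdots R_{r_d}$.

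To bring $R_{r+1}$ to the front, I use \eqref{eq:adjacent-real-straightening} in the form $R_rR_{r+1}=q^{-2}R_{r+1}R_r-I_1$. Each passage across one $R_r$ contributes a factor $q^{-2}$, plus a term with $I_1$ replacing two real factors. That term has two fewer real factors once $I_1$ is moved to the right (moving $I_1$ right produces only further shifted real terms via \eqref{eq:I1-real-shift}, which have the same or fewer real factors but still lie in lower profile; I would verify this carefully). Summing over $i$ gives the coefficient $-\kappa\sum_{j=0}^{e-1}q^{-2j}$. The tail $R_{r_{e+1}}\cdots R_{r_d}$ is already ordered, with indices less than $r$.

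The main obstacle is the first observation: rigorously showing that reordering real Baseilhac--Kolb root vectors never creates a term of profile at least $\operatorname{rprof}(T(\boldsymbol r))$, especially a real factor with index exceeding $r+1$ or an increase in length. I would first try to handle this with the weight grading. The $\vartheta$-equivariance from Lemma~\ref{lem:real-chain-relations} reduces every crossing $R_aR_b$ to a crossing with one fixed index. Combined with the filtration degree bound $2|r|+1$ from Proposition~\ref{prop:root-leading-U}, this should confine the indices of the correction terms to the interval $[b,a]$.
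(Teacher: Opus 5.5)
Your leading-term computation is the same as the paper's. You expand by Leibniz with \eqref{eq:I1-real-shift}, observe that only raising one of the first $e$ copies of $R_r$ reaches $T(\boldsymbol r)$, and move $R_{r+1}$ to the front past $j$ copies of $R_r$ by \eqref{eq:adjacent-real-straightening}, which contributes $q^{-2j}$.

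The gap is in showing that every remaining term, once written in the basis $R_{\boldsymbol s}f$, lies in $\mathscr R_{<\operatorname{rprof}(T(\boldsymbol r))}$. You base this on a general straightening claim for arbitrary crossings $R_aR_b$, and the justification you sketch would not establish it.
\begin{itemize}
\item The algebra $\Oq$ has no $Q_+$-weight grading, because the $q$-Dolan--Grady relations are inhomogeneous. Only $\gr\Oq\simeq\Uplus$ is graded.
\item Proposition~\ref{prop:root-leading-U} and Lemma~\ref{lem:Damiani-triangular-crossings} describe only the top filtration component of a crossing. The word-length filtration is not compatible with the real profile, since $\deg_F R_s=|2s+1|$ decreases as $s$ increases through negative values. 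For example, $R_0R_{-1}$ has filtration degree $2<6=\deg_F(R_0R_{-3})$, yet $(2;0,-1)>(2;0,-3)$. Nothing in the associated graded rules out such lower-order terms in a general crossing relation, so your claim that lower-filtration terms have ``indices that are not larger'' is unsupported.
\item The map $\vartheta$ does not preserve the filtration: $\vartheta(W_0)=R_{-2}$ has degree $3$. So it cannot be combined with degree bounds in the way you propose.
\end{itemize}

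The missing observation is that no general crossing ever occurs. Each Leibniz summand is a non-increasing sequence with one entry shifted by $\pm1$. It fails to be non-increasing only where the shifted entry meets a run of entries equal to its old value. Hence every transposition you need has the form $R_aR_{a+1}=q^{-2}R_{a+1}R_a-I_1$, which is exact, with no lower-order terms.

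Each $-I_1$ term leaves a non-increasing sequence of $d-2$ real factors with $I_1$ inserted. Pushing $I_1$ to the right with \eqref{eq:I1-real-shift} gives two kinds of term:
\begin{itemize}
\item that sequence times $I_1$, which is already a basis element $R_{\boldsymbol s}I_1$;
\item words of the same shape (non-increasing, one entry shifted by $\pm1$) with $d-2$ real factors and no imaginary factor.
\end{itemize}

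Now induct on the number of real factors. Every such word equals a non-zero power of $q^{-2}$ times its sorted rearrangement, modulo the span of the $R_{\boldsymbol s}f$ with fewer real factors. That span lies in $\mathscr R_{<\operatorname{rprof}(T(\boldsymbol r))}$, because the first profile coordinate drops. Your comparison of the sorted rearrangements then finishes the proof.

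This is the argument implicit in the paper's brief proof. It needs only \eqref{eq:I1-real-shift}, \eqref{eq:adjacent-real-straightening}, and the basis of Theorem~\ref{thm:BK-nonroot}; no knowledge of general crossings $R_aR_b$ in $\Oq$ is required.
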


\begin{proof}
Expand the commutator by the Leibniz rule and use
\eqref{eq:I1-real-shift}.  The target profile
$\operatorname{rprof}(T(\boldsymbol r))$ can arise only by taking the
raising term $-\kappa R_{r+1}$ from one of the first $e$ factors.

Suppose that the selected occurrence has $j$ copies of $R_r$ to its
left.  Moving $R_{r+1}$ to the left through these factors using
\eqref{eq:adjacent-real-straightening} contributes the leading coefficient
$q^{-2j}$.  Every correction term $-I_1$ in this process has two fewer
real factors and hence smaller real profile.  A raising term from a factor
of index smaller than $r$, or any lowering term, likewise gives a smaller
profile.  Summing over the $e$ maximal factors proves
\eqref{eq:leading-real-commutator}.
\end{proof}

\begin{theorem}
\label{thm:imaginary-centraliser}
One has
\begin{equation}
\label{eq:I1-centraliser}
        \Cent_{\Oq}(I_1)
        =
        \mathcal I
        =
        \F[I_1,I_2,I_3,\ldots].
\end{equation}
\end{theorem}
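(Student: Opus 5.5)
The inclusion $\mathcal I\subseteq\Cent_{\Oq}(I_1)$ is \eqref{eq:imaginary-commute}. For the reverse inclusion, we work with the PBW basis of Theorem~\ref{thm:BK-nonroot} for the order \eqref{eq:BK-centraliser-order}. Let $0\neq x\in\Cent_{\Oq}(I_1)$, and write it in the unique expansion \eqref{eq:BK-centraliser-expansion}, $x=\sum_{\boldsymbol r}R_{\boldsymbol r}f_{\boldsymbol r}$ with $f_{\boldsymbol r}\in\mathcal I$. The aim is to show that $f_{\boldsymbol r}=0$ for every non-empty $\boldsymbol r$. Then $x=f_\varnothing\in\mathcal I$.

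Suppose instead that some non-empty $\boldsymbol r$ has $f_{\boldsymbol r}\neq 0$. Among these, choose $\boldsymbol r$ with maximal real profile $\alpha=\operatorname{rprof}(\boldsymbol r)$; this is possible since only finitely many terms occur. By \eqref{eq:imaginary-commute}, $I_1$ commutes with every $f_{\boldsymbol s}$, so $[I_1,x]=\sum_{\boldsymbol s}[I_1,R_{\boldsymbol s}]f_{\boldsymbol s}$. Lemma~\ref{lem:leading-real-profile} gives, for each $\boldsymbol s$, the congruence
\[
[I_1,R_{\boldsymbol s}]\equiv c_{\boldsymbol s}R_{T(\boldsymbol s)}\pmod{\mathscr R_{<\operatorname{rprof}(T(\boldsymbol s))}},
\qquad
c_{\boldsymbol s}=-\kappa\sum_{j=0}^{e-1}q^{-2j},
\]
and $c_{\boldsymbol s}\neq0$ by Lemma~\ref{lem:nonvanishing-scalars}, since $\kappa\neq 0$ and $\sum_{j<e}q^{-2j}=(1-q^{-2e})/(1-q^{-2})\neq0$.

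Two points need checking. First, $\mathscr R_{<\beta}\cdot f\subseteq\mathscr R_{<\beta}$ for $f\in\mathcal I$, because $\mathcal I$ is a subalgebra and the spanning set of $\mathscr R_{<\beta}$ is $R_{\boldsymbol s}\mathcal I$. Hence multiplying the congruence on the right by $f_{\boldsymbol s}$ keeps the error inside $\mathscr R_{<\operatorname{rprof}(T(\boldsymbol s))}$. Second, $T$ is injective and strictly order-preserving on profiles. Therefore every $\boldsymbol s\neq\boldsymbol r$ with $f_{\boldsymbol s}\neq0$ satisfies $\operatorname{rprof}(T(\boldsymbol s))<\operatorname{rprof}(T(\boldsymbol r))$, and so contributes only to $\mathscr R_{<\operatorname{rprof}(T(\boldsymbol r))}$. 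The term $\boldsymbol s=\varnothing$ contributes $[I_1,f_\varnothing]=0$.

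It follows that the component of $[I_1,x]$ along $R_{T(\boldsymbol r)}\mathcal I$ in the unique expansion is $c_{\boldsymbol r}f_{\boldsymbol r}\neq0$. Uniqueness of the expansion is the PBW property of Theorem~\ref{thm:BK-nonroot}, and it guarantees that $\mathscr R_{<\beta}$ meets $R_{T(\boldsymbol r)}\mathcal I$ trivially whenever $\beta=\operatorname{rprof}(T(\boldsymbol r))$. This contradicts $[I_1,x]=0$.

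The main point of care is exactly this last step. I must make sure that the correction terms in Lemma~\ref{lem:leading-real-profile}, which come from the $-I_1$ terms of \eqref{eq:adjacent-real-straightening} and from lowering terms, are honestly in PBW-ordered form with smaller profile. Any reordering they still require must not raise the profile back to $\operatorname{rprof}(T(\boldsymbol r))$. I would rely on the lemma as stated: the congruence is modulo the span $\mathscr R_{<\cdot}$ of PBW-expanded elements, so the reordering issue is already absorbed there.

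Finally, by \eqref{eq:imaginary-algebra-equality}, $\mathcal I=\F[\widetilde G_1,\widetilde G_2,\ldots]$. By Lemma~\ref{lem:first-imaginary-generators}, $\widetilde G_1=-qI_1$, so the same argument yields $\Cent_{\Oq}(\widetilde G_1)=\F[\widetilde G_1,\widetilde G_2,\ldots]$, as required for Theorem~\ref{thm:intro-centralisers}.
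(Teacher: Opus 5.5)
Your proposal is correct and follows the paper's proof: you take the PBW expansion with a real sequence $\boldsymbol r$ of maximal real profile, apply Lemma~\ref{lem:leading-real-profile}, use that $T$ is injective and strictly order-preserving to isolate the component $-\kappa\bigl(\sum_{j=0}^{e-1}q^{-2j}\bigr)R_{T(\boldsymbol r)}f_{\boldsymbol r}$, and reach a contradiction through the uniqueness of the PBW expansion. You also make explicit a point the paper leaves implicit, namely that $\mathscr R_{<\beta}$ is stable under right multiplication by $\mathcal I$.
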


\begin{proof}
The inclusion $\mathcal I\subseteq\Cent_{\Oq}(I_1)$ follows from
\eqref{eq:imaginary-commute}.  Let $x\in\Cent_{\Oq}(I_1)$, and write its
PBW expansion as in \eqref{eq:BK-centraliser-expansion}.  Suppose that a
real factor occurs.  Choose a non-empty sequence $\boldsymbol r$ of
maximal real profile among those with $f_{\boldsymbol r}\neq 0$, and let
its maximal entry have multiplicity $e$.

Since $f_{\boldsymbol r}\in\mathcal I$, it commutes with $I_1$.
Lemma~\ref{lem:leading-real-profile} and the strict order preservation of
$T$ give
\begin{align*}
[I_1,x]
\equiv{}
-\kappa
\left(\sum_{j=0}^{e-1}q^{-2j}\right)
R_{T(\boldsymbol r)}f_{\boldsymbol r}\pmod{\mathscr R_{<\operatorname{rprof}(T(\boldsymbol r))}}.
\end{align*}
No smaller source profile can produce the same target profile, and the
injectivity of $T$ excludes a different source with that target.

The scalar coefficient is non-zero, since
\[
 \sum_{j=0}^{e-1}q^{-2j}
 =
 \frac{1-q^{-2e}}{1-q^{-2}},
\]
and both numerator and denominator are non-zero.  Moreover,
$\kappa\neq 0$ and
$R_{T(\boldsymbol r)}f_{\boldsymbol r}\neq 0$ by the PBW basis.  Thus
the highest-profile component of $[I_1,x]$ is non-zero, a contradiction.
Hence no real factor occurs, so $x\in\mathcal I$.
\end{proof}

\begin{corollary}
\label{cor:conj168}
One has
\begin{equation}
\label{eq:tG1-centraliser}
 \Cent_{\Oq}(\widetilde G_1)
 =
 \F[\widetilde G_1,\widetilde G_2,\widetilde G_3,\ldots].
\end{equation}
\end{corollary}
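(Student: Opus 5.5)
The plan is to reduce Corollary~\ref{cor:conj168} to Theorem~\ref{thm:imaginary-centraliser} through the identification of $\widetilde G_1$ with $I_1$ up to a non-zero scalar, and then to compare the two polynomial algebras by Lemma~\ref{lem:imaginary-coordinate-change}.

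First, by Lemma~\ref{lem:first-imaginary-generators} we have $\widetilde G_1=-qB_\delta=-qI_1$. Since $-q\in\F^\times$, an element of $\Oq$ commutes with $\widetilde G_1$ if and only if it commutes with $I_1$. Hence
\[
 \Cent_{\Oq}(\widetilde G_1)=\Cent_{\Oq}(I_1).
\]

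Second, Theorem~\ref{thm:imaginary-centraliser} identifies the right-hand side with $\mathcal I=\F[I_1,I_2,\ldots]$. Equation~\eqref{eq:imaginary-algebra-equality} of Lemma~\ref{lem:imaginary-coordinate-change} gives $\mathcal I=\F[\widetilde G_1,\widetilde G_2,\ldots]$, and chaining these three equalities proves \eqref{eq:tG1-centraliser}.

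There is no real obstacle here: the substantive work (the profile argument and the triangular change of coordinates) has already been carried out. The only point to check is that the scalar relating $\widetilde G_1$ and $I_1$ is non-zero, and this holds because $q\neq 0$.

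For the maximal-commutativity claim of Theorem~\ref{thm:intro-centralisers}, I would continue as follows.

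\begin{itemize}
\item \textbf{The family $\widetilde G$.} Since $\F[\widetilde G_1,\ldots]$ contains $\widetilde G_1$, its centraliser lies inside $\Cent_{\Oq}(\widetilde G_1)=\F[\widetilde G_1,\ldots]$. The algebra is commutative, so it is self-centralising.
\item \textbf{The family $G$.} This follows by applying the antiautomorphism $\dagger$, using \eqref{eq:dagger-alternating}.
\item \textbf{The family $W^-$.} This case is Corollary~\ref{cor:conj167}.
\item \textbf{The family $W^+$.} This follows by applying $\sigma$ to the $W^-$ case, since $\sigma$ interchanges $W_{-n}$ and $W_{n+1}$.
\end{itemize}

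Finally, a self-centralising commutative subalgebra is maximal commutative. Indeed, any commutative subalgebra containing it lies in its centraliser, and that centraliser is the subalgebra itself.
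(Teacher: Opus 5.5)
Your proof is correct and follows the paper's argument exactly: $\widetilde G_1=-qI_1$ gives $\Cent_{\Oq}(\widetilde G_1)=\Cent_{\Oq}(I_1)$, and Theorem~\ref{thm:imaginary-centraliser} together with \eqref{eq:imaginary-algebra-equality} identifies this centraliser with $\F[\widetilde G_1,\widetilde G_2,\ldots]$. Your additional discussion of maximal commutativity likewise matches the paper's Corollary~\ref{cor:four-maximal}.
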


\begin{proof}
Equation~\eqref{eq:tG1-Bdelta} gives $\widetilde G_1=-qI_1$, so
$\Cent_{\Oq}(\widetilde G_1)=\Cent_{\Oq}(I_1)$.  The result follows
from Theorem~\ref{thm:imaginary-centraliser} and
\eqref{eq:imaginary-algebra-equality}.
\end{proof}

Corollary~\ref{cor:conj168} proves
\cite[Conjecture~16.8]{Terwilliger2022OqACE}.

\begin{corollary}
\label{cor:four-maximal}
The following four polynomial subalgebras of $\Oq$ are self-centralising,
and hence maximal commutative:
\[
\begin{gathered}
 \F[W_0,W_{-1},W_{-2},\ldots],
 \quad
 \F[W_1,W_2,W_3,\ldots],\quad
 \F[G_1,G_2,G_3,\ldots],
 \quad
 \F[\widetilde G_1,\widetilde G_2,\widetilde G_3,\ldots].
\end{gathered}
\]
Moreover,
\[
 \Cent_{\Oq}(G_1)=\F[G_1,G_2,G_3,\ldots].
\]
\end{corollary}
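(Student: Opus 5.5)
The plan is to reduce all four cases to Corollary~\ref{cor:conj167} and Corollary~\ref{cor:conj168} by means of the symmetries $\sigma$, $\dagger$ and $\tau$.

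First, the negative family. $\F[W_0,W_{-1},\ldots]$ is self-centralising by Corollary~\ref{cor:conj167}. The automorphism $\sigma$ sends $W_{-n}$ to $W_{n+1}$, so it maps this algebra onto $\F[W_1,W_2,\ldots]$. An automorphism carries centralisers to centralisers. Hence the positive family is self-centralising as well.

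Next, the imaginary families. By Corollary~\ref{cor:conj168}, $\Cent_{\Oq}(\widetilde G_1)=\F[\widetilde G_1,\widetilde G_2,\ldots]$. Write $\widetilde H=\F[\widetilde G_1,\widetilde G_2,\ldots]$. Since $\widetilde H$ is commutative and contains $\widetilde G_1$, we get the chain
\[
\widetilde H\subseteq\Cent_{\Oq}(\widetilde H)\subseteq\Cent_{\Oq}(\widetilde G_1)=\widetilde H,
\]
so $\widetilde H$ is self-centralising. For $G$, use $\dagger$. It is an antiautomorphism, but $xy=yx$ is equivalent to $\dagger(y)\dagger(x)=\dagger(x)\dagger(y)$, so antiautomorphisms also carry centralisers to centralisers. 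By \eqref{eq:dagger-alternating}, $\dagger$ interchanges $G_n$ and $\widetilde G_n$. Therefore $\Cent_{\Oq}(G_1)=\dagger(\widetilde H)=\F[G_1,G_2,\ldots]$, and the same chain argument shows that $\F[G_1,G_2,\ldots]$ is self-centralising. Alternatively, $\sigma$ also swaps $G$ and $\widetilde G$ and could be used here.

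Finally, maximal commutativity. Each of the four algebras is commutative, and is polynomial by Corollary~\ref{cor:alternating-polynomial-algebras}. Suppose $C$ is a commutative subalgebra containing one of them, say $H$. Then $C\subseteq\Cent_{\Oq}(H)=H$, so $C=H$.

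There is no serious obstacle. The only points to check are that antiautomorphisms preserve centralisers and that each symmetry maps the generating family onto the correct family; both hold by the displayed formulas for $\sigma$ and $\dagger$.
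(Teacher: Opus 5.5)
Your proposal is correct and follows essentially the same route as the paper. It uses Corollary~\ref{cor:conj167} together with $\sigma$ for the two real families, the chain $\widetilde H\subseteq\Cent_{\Oq}(\widetilde H)\subseteq\Cent_{\Oq}(\widetilde G_1)=\widetilde H$ from Corollary~\ref{cor:conj168}, $\dagger$ to pass from $\widetilde G$ to $G$ (including $\Cent_{\Oq}(G_1)$), and the observation that a self-centralising commutative subalgebra is maximal commutative.
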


\begin{proof}
The negative alternating algebra is self-centralising by
Corollary~\ref{cor:conj167}.  Applying $\sigma$ gives the same conclusion
for the positive alternating algebra.

Put
$\widetilde{\mathcal H}=
\F[\widetilde G_1,\widetilde G_2,\widetilde G_3,\ldots]$.  This algebra is
commutative and contains $\widetilde G_1$.  Hence
Corollary~\ref{cor:conj168} gives
\[
 \widetilde{\mathcal H}
 \subseteq
 \Cent_{\Oq}(\widetilde{\mathcal H})
 \subseteq
 \Cent_{\Oq}(\widetilde G_1)
 =
 \widetilde{\mathcal H}.
\]
Thus $\widetilde{\mathcal H}$ is self-centralising.  Applying the
antiautomorphism $\dagger$ gives
$\Cent_{\Oq}(G_1)=\F[G_1,G_2,G_3,\ldots]$ and shows that the
$G$-algebra is self-centralising.  Finally, every self-centralising
commutative subalgebra is maximal commutative.
\end{proof}

\end{document}